\documentclass{birkjour}
\usepackage{amssymb,amsthm,slashed,mathabx,bbold}

\usepackage[T1]{fontenc}
\usepackage[cp1250]{inputenc}
\usepackage{cite}
\usepackage{bm}
\usepackage{enumitem}

\usepackage[showonlyrefs]{mathtools}
\mathtoolsset{showonlyrefs=true}

\MakeRobust{\eqref}

\usepackage{hyperref}
\usepackage[dvipsnames]{xcolor}

\hypersetup{colorlinks=false,
            hypertexnames=true,
            linkcolor=BlueViolet,
            linkbordercolor=LimeGreen,
            citecolor=Mahogany,
            citebordercolor=Goldenrod}

\newcommand{\Hc}{\mathcal{H}}
\newcommand{\Dc}{\mathcal{D}}
\newcommand{\mR}{\mathbb{R}}
\newcommand{\mC}{\mathbb{C}}
\newcommand{\Bc}{\mathcal{B}}

\newcommand{\Dh}{C_0^\infty(\mathbb{R}^3, \mathbb{C}^4)}
\newcommand{\Sc}{\mathcal{S}}
\newcommand{\inc}{\mathrm{in}}
\newcommand{\out}{\mathrm{out}}
\newcommand{\ret}{\mathrm{ret}}
\newcommand{\adv}{\mathrm{adv}}
\newcommand{\rad}{\mathrm{rad}}
\newcommand{\tr}{\intercal}

\newcommand{\al}{\alpha}
\newcommand{\alb}{\boldsymbol{\alpha}}
\newcommand{\alt}{\alpha_\bot}

\newcommand{\delt}{\delta_\bot}
\newcommand{\ga}{\gamma}
\newcommand{\gax}{\hat{\gamma}}
\newcommand{\gz}{g_{(z)}}
\newcommand{\ep}{\epsilon}
\newcommand{\la}{\lambda}
\newcommand{\La}{\Lambda}
\newcommand{\lab}{\mathbf{L}}
\newcommand{\si}{\sigma}
\newcommand{\sm}{\langle \sigma \rangle}

\newcommand{\ltz}{\ell_{\tau z}}
\newcommand{\gau}{\mathcal{G}}
\newcommand{\gf}{S_\ell}
\newcommand{\glo}{S_{\log}}
\newcommand{\gco}{S_\mathrm{cone}}
\newcommand{\gtr}{S_\mathrm{tr}}

\newcommand{\w}{\omega}
\newcommand{\W}{\Omega}
\newcommand{\vph}{\varphi}
\newcommand{\vep}{\varepsilon}
\newcommand{\dsp}{\displaystyle}

\newcommand{\ov}{\overline}
\newcommand{\p}{\partial}
\newcommand{\n}{\nabla}
\newcommand{\pav}{\bm{\partial}}

\newcommand{\con}{\mathrm{const}}
\newcommand{\tm}{\langle \tau\rangle}
\newcommand{\tmo}{\langle \tau_0\rangle}
\newcommand{\rrmo}{\langle r_0\rangle}
\newcommand{\rrm}{\langle r \rangle}

\newcommand{\tmz}{\langle\tau|z\rangle}
\newcommand{\tMz}{\langle\tau\|z\rangle}

\newcommand{\zm}{\langle z\rangle}

\newcommand{\nv}{\zeta}

\newcommand{\dW}{\dot{W}}

\newcommand{\x}{\mathbf{x}}
\newcommand{\z}{\mathbf{z}}
\newcommand{\vv}{\mathbf{v}}
\newcommand{\nz}{\hat{z}}
\newcommand{\nzb}{\mathbf{\hat{z}}}

\newcommand{\gab}{\boldsymbol{\gamma}}

\newcommand{\pv}{\mathbf{p}}

\newcommand{\1}{\mathbf{1}}
\newcommand{\hc}{\dagger}
\newcommand{\cA}{\mathcal{A}}
\newcommand{\hA}{\hat{\mathcal{A}}}
\newcommand{\hF}{\hat{F}}
\newcommand{\mH}{\tilde{H}}
\newcommand{\dV}{\dot{V}}

\newcommand{\f}{F}
\newcommand{\fd}{\dot{F}}

\DeclareMathOperator{\Rp}{Re}

\DeclareMathOperator{\sgn}{sgn}
\DeclareMathOperator*{\slim}{s-lim}

\DeclareMathOperator{\essupp}{ess\,supp}
\DeclareMathOperator*{\essup}{ess\,sup}

\newtheorem{thm}{Theorem}
\newtheorem{pr}[thm]{Proposition}
\newtheorem{lem}[thm]{Lemma}
\newtheorem{asu}{Assumption}
\newtheorem{rem}[thm]{Remark}

\setlist[itemize]{leftmargin=2em}

\begin{document}

\title[Infrared problem vs gauge choice: scattering of Dirac field]%
{Infrared problem vs gauge choice:\\ scattering of~classical Dirac field}

\author{Andrzej Herdegen}

\address{Institute of Theoretical Physics\\
    Jagiellonian University\\
    ul.\,S.\,{\L}ojasiewicza 11\\
    30-348 Krak\'{o}w\\
    Poland}

\email{herdegen@th.if.uj.edu.pl}
{\it}\date{}

\subjclass{Primary 81U99; Secondary 81V10}

\keywords{scattering, asymptotic completeness, electromagnetic field, Dirac
equation}

\date{}

\begin{abstract}

We consider the Dirac equation for the classical spinor field placed in an
external, time-dependent electromagnetic field of the form typical for
scattering settings: $F=F^\ret+F^\inc=F^\adv+F^\out$, where the current
producing $F^{\ret/\adv}$ has past and future asymptotes homogeneous of
degree $-3$, and the free fields $F^{\inc/\out}$ are radiation fields
produced by currents with similar asymp\-totic behavior. We show the
existence of the electromagnetic gauges in which the particle has `in' and
`out' asymptotic states approaching free field states, with no long-time
corrections of the free dynamics. Using a special Cauchy foliation of the
spacetime we show in this context the existence and asymptotic completeness
of the wave operators. Moreover, we define a special `evolution picture' in
which the free evolution operator has well-defined limits for
$t\to\pm\infty$, thus the scattering wave operators do not need the free
evolution counteraction.
\end{abstract}

\maketitle

\section{Introduction}\label{int}

This article is a further step in a program of investigation of the infrared
problems in electrodynamics. Among them, the long-time asymptotic behavior of
the charged matter fields is one of the key issues. Such questions as
understanding what is a charged particle, or how to define scattering
operator in quantum electrodynamics, are correlated to this problem.

It is well-known that the long-time asymptotics poses problems in theories
with long-range interactions. Standard way to deal with that is to modify the
asymptotic dynamics of charged particles or fields, by Dollard or similar
methods, augmented by some `dressing' of charged particles. The cost is the
loss of a clear interpretation of the asymptotic dynamics. Moreover, in
physically realistic quantum field theory models this procedure has not
resulted, up to now, in a non-perturbational understanding of the
issue.\footnote{For (generalized) Dollard methods see the monograph
\cite{der97}. Recent examples of the use of `dressing' in construction of
simplified quantum field models include \cite{ms14} and \cite{dyb17}. A
recent attempt at a precise implementation of the Dollard idea in general
quantum field theory (in the form proposed much earlier in rather imprecise
terms by Kulish and Faddeev) may be found in \cite{duc19}.}

In a series of articles I have put forward the idea, that the long-time
asymptotics problem may be relieved by an appropriate choice of gauge of the
electromagnetic potential. In a recent article \cite{her19} a Schr\"{o}dinger
(nonrelativistic) particle was considered in a time-dependent electromagnetic
field, of the form typical for scattering situations. It was found that an
appropriate choice of gauge allows the existence of the asymptotic dynamics,
with no need for dressing.\footnote{What remains an open question for this
system is the asymptotic completeness. In the article we expressed the view
that the clash between the symmetry groups of the two parts of the system:
Lorentz for the Maxwell and Galilean for the Schr\"{o}dinger equations, might
make the issue more problematic. Our present discussion seems to confirm
this.} We also refer the reader to this article for more extensive
description of the context and our motivation.

Here we implement the idea in the case of the classical Dirac field evolving
in the external electromagnetic field. The system is not fully
self-interacting, but the form of the external electromagnetic field mimics
the expected properties of this field in fully interacting system. We
consider the evolution of the Dirac field as a unitary evolution in a Hilbert
space. However, similarly as in the previously considered nonrelativistic
case, the evolution does not take place in standard time, over flat pure
space sheets. It turns out, that for our purpose it is convenient to consider
a Cauchy foliation supplied by constant $\tau$ surfaces, where $(\tau,\z)$
form the coordinate system\footnote{I have first proposed the use of these
coordinates in the present context in 1999, as a natural extension of the
hyperbolic foliation of the inside of the lightcone, to a Cauchy foliation of
the whole spacetime. The evolution of the Dirac field over the hyperbolic
foliation, and its large hyperbolic time asymptotics, was analysed in
\cite{her95} in terms of a Fourier-like transformation, which transformed
this evolution to a unitary evolution in the Hilbert space of spinors on the
hyperboloid of $4$-velocities. The existence of the wave operators for the
Dirac field in external electromagnetic field was established (in appropriate
gauge, see below), but no results on asymptotic completeness were obtained.
The idea now was to formulate the evolution over the foliation \eqref{intvar}
in similar Fourier terms. Piotr Marecki, my student at that time, carried out
in his MSc Thesis \cite{mar00} calculations of this programme for the free
Dirac field (scattering was only briefly mentioned in this thesis). This
method has its assets, but it is inconvenient for the analysis of many
further questions like self-adjointness of the generators, the analysis of
the domain of validity of the Dirac equation in the differential form, or
asymptotic completeness of the interacting case, the questions that have not
been considered. Here we use another method, which could be developed much
further, and enabled the solution of all these questions.} defined by:
\begin{equation}\label{intvar}
 x^0=\tau(|\z|^2+1)^\frac{1}{2}\,,\qquad \x=(\tau^2+1)^\frac{1}{2}\z\,.
\end{equation}
This poses technical complication even at the free Dirac equation level, as
the unitary evolution is not a unitary one-parameter group (Hamiltonian
depends on $\tau$).\footnote{ One should also mention that the use of more
general, than the flat equal time hypersurfaces, places our problem as a
special case of the problem of hyperbolic equations on Lorentzian manifolds,
a question intensively studied in the past, see a recent monograph
\cite{bgp07}. However, our more specific system allows us to apply more
specific Hilbert space methods, and obtain stronger results.} This is to be
contrasted with the nonrelativistic case considered earlier, where the
Niederer transformation introducing suitable coordinates leads from the free
particle Hamiltonian to that of the harmonic oscillator (a~kind of beautiful
miracle). Nevertheless, the Schr\"{o}dinger operator case becomes steeply
difficult when the potential with space part is introduced, as then the term
$\mathbf{A}\cdot\mathbf{p}$ prevents the application of the Dyson series
method. To deal with that case we have applied the Kato theorem. Here, for
the free Dirac equation we proceed differently, and our method may be of
independent interest. The addition of the interaction with the
electromagnetic field may be then treated, in contrast to the nonrelativistic
Schr\"{o}dinger  case, by a variation of the Dyson method combined with the
relativistic causality. For this system we show that an appropriate choice of
gauge removes the asymptotic problem. We show the existence and asymptotic
completeness of the wave operators, with no need for any modification of the
asymptotic free Dirac evolution.\footnote{The Cauchy problem and the
asymptotic completeness of the interacting Maxwell-Dirac system were
considered by Flato et al.\ in \cite{fst95}. However, their analysis needs
strong smoothness and smallness assumptions, the latter not under
well-determined control, and uses methods rather not well suited for an
application in quantum case (`nonlinear representation of the Poincar\'e
group'). Also, the authors modify the asymptotic dynamics by a variation of
the Dollard method. Our aims are different, as explained above.}

This result will show that the choice of a gauge $\cA(x)$ in this classical
field setting has a decisive importance for the asymptotic identification of
the incoming/outgoing charged fields. The main qualitative feature of gauges
in this appropriate class is that the product $x\cdot\cA(x)$ vanishes in
timelike directions, see remarks in Section \ref{disc} (a property already
identified in \cite{her95}). We would like to stress that the theory is
formulated from the outset in such chosen gauge. Whether such formulation may
be carried over to the quantum field theory is a subject for future
research.\footnote{We postpone to such prospective publication a comparison
with the existing discussions of the relevance of gauge choice in QED. Here
counts the idea, originated by Dirac \cite{dir55}, of an \emph{a posteriori}
transformation to a `gauge-independent gauge', developed by many authors,
most exhaustively by Steinmann, see his monograph \cite{sta00}, Chapter 12.
Also, the expectation (not shared by everyone) that formulations of QED in
differing gauges need not be equivalent received a recent support from a
mathematical analysis of a simplified model \cite{dyb19} (where literature
account may also be found).} This prospective investigation should also find
contact with the asymptotic algebra of fields in quantum electrodynamics
postulated by the author \cite{her98} (see also \cite{her17}).

Here let us only mention, that the use of a gauge in the class anticipated
above could lead to some broadening of the scope of external classical,
time-dependent electromagnetic fields for which the scattering operator for
the classical Dirac field may be lifted to the case of the respective quantum
field. As is well-known the mixing of electrons/positrons leads to rather
severe restrictions in this respect\footnote{This is a standard example of
the fact that classical external interaction problems for quantum fields
create problems of their own, which are not expected to propagate into full
closed quantum theory. An even more restrictive question in the case of the
quantum Dirac field in an external classical field is whether a unitary
evolution operator exists for this system. As shown in \cite{rui76} (for
standard evolution over the flat foliation of the spacetime), this may be
possible only if the magnetic field vanishes. This brings to sharp light a
rather restricted physical relevance of such questions and models (note that
this condition is not even inertial observer-independent; anyway, this
problem is not related to infrared questions).} (see e.g.\ the monograph
\cite{sha95}, Sections 2.4 and~2.5). Precisely what gain would be possible is
an open question.

The outline of the article is as follows. In Section \ref{free} we formulate
the Dirac evolution as a unitary evolution, with time-dependent self-adjoint
generators, over a rather general family of Cauchy surfaces. Section
\ref{evolelmg} gives the formulation of the external field problem with the
Dyson series method, but with the important use of the relativistic
causality. In both free and interacting case the self-adjointness is achieved
with the use of the commutator theorem (see Appendix \ref{open}) and the
harmonic oscillator Hamiltonian. In Section \ref{sfnpfa} we specify our
choice of coordinates to those mentioned above and transform the evolution to
a new `picture'. In this picture the free Dirac evolution on our foliation
has well-defined limits as unitary operators. Section \ref{elmintgau}
specifies the general external problem to the electromagnetic case and
discusses the gauge transformation. Section \ref{scattering} contains our
central results formulated for a wide class of electromagnetic fields: the
existence and asymptotic completeness of the wave operators. Section
\ref{typspec} gives a theorem showing that the electromagnetic fields typical
for scattering contexts described above admit potentials in gauges satisfying
the demands of the main theorems of Section \ref{scattering}. Section
\ref{disc} offers some remarks on the implications of our results and on
further physically motivated restriction in the class of the obtained
electromagnetic gauges.

Large parts of the material are shifted to Appendix. In Appendix \ref{trK} we
discuss a spinor transformation needed in Section~\ref{free}. Appendix
\ref{open} describes our method to deal with a class of time dependent
Hamiltonians. A lemma needed for the application of this method to the system
considered here is discussed in Appendix \ref{lemma}. In Appendix \ref{spvar}
we gather geometrical facts and relations in our special coordinate system.
Appendices \ref{fDe} and \ref{fwe} recapitulate some properties of the
solutions of the free Dirac and wave equations, respectively.
Appendix~\ref{decay} contains some estimates of the decay of the advanced and
retarded solutions of the inhomogeneous wave equation and their differences
(radiation fields). The results of Appendices \ref{fDe}--\ref{decay} are
applied next in Appendix \ref{elmfields} to the case of the electromagnetic
fields typical for scattering contexts. Finally, the necessary decay
properties of the special gauge introduced in Section~\ref{typspec} are
obtained in Appendix \ref{spgava}.

\section{Free Dirac evolution}\label{free}

Throughout the article we set $\hbar=1$, $c=1$. We choose a reference point,
and then the flat spacetime is identified with the Minkowski vector space.
Let $m$ be the mass parameter in the free Dirac equation. To simplify
notation we rescale Minkowski vectors by multiplying them by $m$, and denote
the resulting space by $M$, and its dimensionless vectors by $x^a$. The flat
(covariant) derivative in the rescaled Minkowski space is denoted by $\n_a$.
Also, the electromagnetic interaction to appear later will be introduced by
the interaction term $\cA_a\ov{\psi}\ga^a\psi$, so to recover the physical
units and quantities one should replace $\cA\rightarrow (e/m)\cA$, with $e$
the elementary charge.

Let $\tau:M\mapsto\mR$ be a smooth surjective function such that the
hypersurfaces $\Sigma_\tau$ of constant $\tau$ form a Cauchy foliation of the
Minkowski spacetime, with $\tau$ increasing into the future. Consider the
Dirac equation, which we write in the form
\begin{equation}\label{fDirac}
 (\tfrac{1}{2}[\ga^a,i\n_a]_+-1)\psi=0
\end{equation}
in our dimensionless coordinates, where $\gamma^a$ are the Dirac matrices and
$[.,.]_+$ symbolizes the anticommutator. As is well-known, the Cauchy problem
for this equation with the initial data $\psi|_{\Sigma_{\tau_0}}=f$ is
explicitly solved by the formula
\begin{equation}
 \psi(x)=i^{-1}\int_{\Sigma_{\tau_0}} S(x-y)\ga^a f(y)\, d\si_a(y)\,,
\end{equation}
where $d\si_a$ is the dual integration element on $\Sigma_{\tau_0}$ and
$S(x)$ is the standard Green function of the free Dirac field, in the
dimensionless coordinates
\begin{equation}
 S(x)=(i\ga\cdot\n+1)D_1(x)\,,\quad
 D_1(x)=\frac{i}{(2\pi)^3}
 \int\sgn(v^0)\delta(v^2-1)e^{-iv\cdot x}dv\,.
\end{equation}
If $f$ is a smooth bi-spinor function on $\Sigma_{\tau_0}$, with a compact
support, then $\psi(x)$ is a~smooth function in $M$, with compact support on
each Cauchy surface $\Sigma_{\tau}$. Therefore, we obtain a bijective
evolution mapping between the spaces of smooth, compactly supported bi-spinor
functions on our family of Cauchy surfaces. Moreover, if on each
$\Sigma_\tau$ one defines the scalar product
\begin{equation}\label{prtau}
 (\psi_1,\psi_2)_\tau=\int_{\Sigma_\tau} \ov{\psi_1}\ga^a\psi_2\,d\si_a\,,
\end{equation}
then the evolution is isometric.

Let now $(\nv^\mu)=(\nv^0,\nv^i)=(\tau,z^i)\equiv(\tau,\z )\in\mR^4$
($i=1,2,3$), with $\tau$ des\-cribed above, be a smooth curvilinear
coordinate system, mapping $M$ diffeomorphically onto $\mR^4$. Denote by
$\eta_{ab}$ and $C^{a\ldots}{}_{b\ldots}$ the Minkowski spacetime metric
tensor, and any other tensor, respectively. Then the geometrical components
in the coordinate system $(\xi^\mu)$ will be denoted by\pagebreak[1]
\begin{equation}\label{compxi}
 \begin{gathered}
 \p_\mu=\frac{\p}{\p\nv^\mu}=\frac{\p x^a}{\p\nv^\mu}\n_a\,,\quad
 \p_\tau=\frac{\p}{\p\tau}\,,\quad \p_i=\frac{\p}{\p z^i}\,,\\[1ex]
 \gax^\mu=(\n_a\nv^\mu)\ga^a\,,\quad
 \hat{C}^{\mu\ldots}{}_{\nu\ldots}
 =(\n_a\zeta^\mu)\ldots C^{a\ldots}{}_{b\ldots}\frac{\p x^b}{\p\zeta^\nu}\ldots \,,\\
 g_{\mu\nu}=\frac{\p x^a}{\p\nv^\mu}\frac{\p x^b}{\p\nv^\nu}\eta_{ab}\,,\quad
 g=\det(g_{\mu\nu})\,,\quad \gz=\det[(g_{ij})_{i,j\leq3}]\,,
 \end{gathered}
\end{equation}
Coordinates restricted to the indices $1,2,3$ will be written as $\gax^i$,
$g_{ij}$ and $\hat{C}^{i\ldots}{}_{j\ldots}$, and the zeroth coordinate will
be indicated by $\tau$; thus for instance:
 $\cA_a\ga^a=\hA_\mu \gax^\mu=\hA_\tau\gax^\tau+\hA_i\gax^i$.

In~these coordinates the Dirac equation~\eqref{fDirac} and the product
\eqref{prtau} take the form
\begin{gather}
 \Big(\tfrac{i}{2}\Big[|g|^{\frac{1}{2}}\gax^\mu,|g|^{-\frac{1}{2}}\p_\mu\Big]_+ -1\Big)\psi=0\,,\label{dc}\\[1ex]
 (\psi_1,\psi_2)_\tau=\int\ov{\psi_1}\gax^\tau\psi_2\, |g|^{\frac{1}{2}}d^3z\,.\label{pc}
\end{gather}

We now choose a Minkowski reference system $(e_0,\ldots,e_3)$. Let us denote
\begin{equation}\label{n}
 n=[g^{\tau\tau}]^{-\frac{1}{2}}\n\tau
\end{equation}
and let $K$ be the Lorentz rotation of Dirac spinors in the hyperplane
spanned by the pair of timelike unit vectors $(e_0,n)$, such that
\begin{equation}\label{defK}
 K^{-1}\ga^a n_a\,K=\ga^0\,,\qquad K^\hc=\ga^0K^{-1}\ga^0\,,
\end{equation}
where dagger $\hc$ denotes the hermitian conjugation of matrices. The form
and further properties of $K$ are discussed in Appendix \ref{trK}. We define
the following transformation:
\begin{equation}\label{psich}
 \psi=T_\tau\chi\,,\qquad T_\tau=(|g|g^{\tau\tau})^{-\frac{1}{4}}K=|\gz|^{-\frac{1}{4}}K\,.
\end{equation}
The product \eqref{pc} then takes the standard $\mC^4\otimes L^2(\mR^3)$ form
\begin{equation}
 (\psi_1,\psi_2)_\tau=\int \chi_1^\hc\chi_2 d^3z\,.
\end{equation}
Setting $\psi$ into \eqref{dc} and applying from the left the transformation
$(g^{\tau\tau})^{-\frac{1}{2}}T_\tau^{-1}$ we obtain an equivalent form of
the Dirac equation
\begin{multline}\label{dcT}
 \Big(\tfrac{i}{2}\Big[(g^{\tau\tau})^{-\frac{1}{2}}\gax_K^\mu,\p_\mu\Big]_+
 -(g^{\tau\tau})^{-\frac{1}{2}}\Big)\chi\\
 +\tfrac{i}{2}(g^{\tau\tau})^{-\frac{1}{2}}
 \Big[\gax_K^\mu,(K^{-1}\p_\mu K)\Big]_+\chi=0\,,
\end{multline}
where we have introduced notation
\begin{equation}\label{gaK}
\ga_K=K^{-1}\ga K\,.
\end{equation}
We now make an
additional assumption: $\Sigma_\tau$ is rotationally symmetric in the chosen
Minkowski system, that is $\tau=\tau(x^0,|\x|)$. We show in Appendix
\ref{trK} that in this case the anticommutator in the second line of
\eqref{dcT} vanishes. We also note that
$(g^{\tau\tau})^{-\frac{1}{2}}\gax_K^\tau=\ga^0\equiv\beta$. This allows us
to write the Dirac equation in the form
\begin{equation}\label{dcTs}
 i\p_\tau\chi
 =\Big(-\tfrac{i}{2}\big[(g^{\tau\tau})^{-\frac{1}{2}}\beta\gax_K^i,\p_i\big]_+
 +(g^{\tau\tau})^{-\frac{1}{2}}\beta\Big)\chi\,.
\end{equation}

We summarize the above discussion.
\begin{thm}\label{freeDirac}
(i) Let the smooth function on the Minkowski space $\tau=\tau(x^0,|\x|)$
determine its foliation by Cauchy surfaces $\Sigma_\tau$, with $\tau$
increasing into the future. Denote by $C_0^\infty(\Sigma_\tau, \mC^4)$ the
space of smooth, compactly supported functions on $\Sigma_\tau$ with values
in $\mC^4$.

Then the free Dirac evolution of initial data in $C_0^\infty(\Sigma_\sigma,
\mC^4)$ is determined by a family of bijective linear evolution operators
\begin{equation}\label{fsec}
 U^\Sigma_0(\tau,\sigma): C_0^\infty(\Sigma_\sigma, \mC^4)\mapsto
 C_0^\infty(\Sigma_\tau, \mC^4)
\end{equation}
isometric with respect to the products \eqref{prtau}. The size of the support
in $\Sigma_\tau$ is restricted by the size of the support in $\Sigma_\si$ and
the relativistic causality. By continuity, $U_0^\Sigma(\tau,\sigma)$ extend
to unitary operators
\begin{equation}\label{fseh}
 U^\Sigma_0(\tau,\sigma): \Hc(\Sigma_\sigma)\mapsto\Hc(\Sigma_\tau)\,,
\end{equation}
where $\Hc(\Sigma_\tau)$ is the Hilbert space of spinor functions on
$\Sigma_\tau$ with the pro\-duct~\eqref{prtau}.

 (ii) Let $(\tau,z^i):M\mapsto\mR^4$ be a smooth diffeomorphism, with
notation  \eqref{compxi}, such that $\tau$ satisfies the assumptions of (i).
Denote
\begin{equation}
 \Hc=\mC^4\otimes L^2(\mR^3,d^3z)\,,
\end{equation}
\begin{equation}\label{H0}
 H_0(\tau)=\tfrac{1}{2}[\la^i(\tau),p_i]_+ +\mu(\tau)\,,\quad
 h_0=\tfrac{1}{2}(\pv^2+\z^2)\,,
\end{equation}
\begin{equation}\label{mula}
p_i=-i\p/\p z^i\,,\qquad
 \mu(\tau)=(g^{\tau\tau})^{-\frac{1}{2}}\beta\,,\qquad
 \la^i(\tau)=\mu(\tau)\gax_K^i\,.
\end{equation}
Then the transformation
\begin{equation}
 T_\tau:\Hc\mapsto \Hc(\Sigma_\tau)\,,\qquad T_\tau=|\gz|^{-1/4}K\,,
\end{equation}
with operator $K$ discussed in Appendix \ref{trK}, is a unitary operator and
the family
\begin{equation}\label{Uzero}
 U_0(\tau,\sigma)=T_\tau^{-1}U_0^\Sigma(\tau,\sigma)T_\sigma\,, \qquad
 U_0(\tau,\si):\Hc\mapsto\Hc
\end{equation}
formes a unitary, strongly continuous evolution system, such that the
following holds:
\begin{itemize}
\item[(A)] $U_0(\tau,\si)\Dh=\Dh$ and the relativistic causality is
    respected.
\item[(B)] For $\vph\in\Dh$ the maps $(\tau,\si)\mapsto
    H_0(\tau)U_0(\tau,\si)\vph$ and $(\tau,\si)\mapsto
    h_0U_0(\tau,\si)\vph$ are strongly continuous, the map
    $(\tau,\si)\mapsto U_0(\tau,\si)\vph$ is in the class $C^1$ in the
    strong sense, and the following equations are satisfied
\begin{equation}\label{freeDH}
\begin{split}
 i\p_\tau U_0(\tau,\si)\vph&=H_0(\tau)U_0(\tau,\si)\vph\,,\\
 i\p_\si U_0(\tau,\si)\vph&=-U_0(\tau,\si)H_0(\si)\vph\,.
 \end{split}
\end{equation}

\item[(C)] The operators $H_0(\tau)$ are symmetric on $\Dh$.
\end{itemize}
\end{thm}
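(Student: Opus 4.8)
The plan is to establish the three claims essentially in the order they are stated, building on the explicit construction that precedes the theorem. For part (i), the key observation is that formula (3)--(5) already exhibit $U_0^\Sigma(\tau,\sigma)$ as an integral operator on compactly supported smooth data, and that the stated support localization is a direct consequence of the fact that $S(x-y)$ is supported in the closed solid light-cone of the origin together with the Cauchy property of the foliation. Isometry with respect to the products \eqref{prtau} follows from the standard current-conservation argument: for a solution $\psi$ of \eqref{fDirac} the vector $\ov{\psi}\ga^a\psi$ is divergence-free, so Gauss's theorem applied between $\Sigma_\sigma$ and $\Sigma_\tau$ (with vanishing flux through the lateral boundary, by compact support and causality) gives $(\psi_1,\psi_2)_\tau=(\psi_1,\psi_2)_\sigma$. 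Once isometry and bijectivity on the dense domains are in hand, the extension to unitaries \eqref{fseh} is the routine BLT argument, and the cocycle property $U_0^\Sigma(\tau,\sigma)U_0^\Sigma(\sigma,\rho)=U_0^\Sigma(\tau,\rho)$ is inherited from uniqueness of the Cauchy problem.

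For part (ii), I would first record that $T_\tau=|\gz|^{-1/4}K$ is unitary from $\Hc$ onto $\Hc(\Sigma_\tau)$ — this is exactly the content of the computation following \eqref{psich}, where the product \eqref{pc} is brought to the flat $L^2$ form; the only extra input needed is the boundedness and invertibility of $K$ as a matrix-valued multiplier, which is supplied by Appendix \ref{trK} and the relation $K^\hc=\ga^0 K^{-1}\ga^0$ together with $n$ being a smooth timelike unit field. Conjugating $U_0^\Sigma$ by $T$ then transports the unitary evolution system to $\Hc$, giving \eqref{Uzero}, and property (A) is immediate since $T_\tau$ and $T_\tau^{-1}$ preserve $\Dh$ (they act by smooth invertible matrix multiplication), while causality is untouched by this pointwise conjugation. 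Property (C), symmetry of $H_0(\tau)$ on $\Dh$, I would verify directly: $p_i$ is symmetric, and by \eqref{defK} the matrices $\la^i(\tau)=\mu(\tau)\gax_K^i$ and $\mu(\tau)=(g^{\tau\tau})^{-1/2}\beta$ are hermitian (this is precisely the self-adjointness statement built into the choice of $K$, since $\gax_K^\mu=K^{-1}\gax^\mu K$ are $\beta$-hermitian and conjugation by $\beta$ is implemented correctly); hence the symmetrized first-order operator $\tfrac12[\la^i,p_i]_+$ is symmetric and the zeroth-order term $\mu(\tau)$ is a hermitian multiplier — both manifestly symmetric on $\Dh$.

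The substantive part is (B): the differentiability of $\tau\mapsto U_0(\tau,\sigma)\vph$ in the strong sense and the validity of the Schrödinger-type equations \eqref{freeDH}, with the auxiliary continuity of the maps $H_0(\tau)U_0\vph$ and $h_0 U_0\vph$. I would argue as follows. For $\vph\in\Dh$, the function $\chi(\tau)=U_0(\tau,\sigma)\vph$ is obtained from the classical smooth solution $\psi$ of \eqref{fDirac} through $\chi=T_\tau^{-1}\psi|_{\Sigma_\tau}$; since $\psi$ is smooth on $M$ with support compact on each $\Sigma_\tau$, and $T_\tau^{-1}$ is a smooth family of matrix multipliers, $\chi(\tau)$ is a smooth $\Dh$-valued curve, and $\p_\tau\chi$ is computed by the chain rule, reproducing \eqref{dcTs} — which is exactly $i\p_\tau\chi=H_0(\tau)\chi$ after identifying $-\tfrac{i}{2}[(g^{\tau\tau})^{-1/2}\beta\gax_K^i,\p_i]_+ +(g^{\tau\tau})^{-1/2}\beta$ with $\tfrac12[\la^i(\tau),p_i]_+ +\mu(\tau)=H_0(\tau)$. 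The second equation in \eqref{freeDH} then follows from the first by the cocycle relation, differentiating $U_0(\tau,\sigma)U_0(\sigma,\rho)=U_0(\tau,\rho)$ in $\sigma$. The remaining point — strong continuity of $(\tau,\sigma)\mapsto H_0(\tau)U_0(\tau,\sigma)\vph$ and of $(\tau,\sigma)\mapsto h_0 U_0(\tau,\sigma)\vph$ — is where I expect the main work: one must control not only $\chi(\tau)$ but also its first $z$-derivatives and the quadratic weight $\z^2\chi(\tau)$ uniformly on compact $\tau$-intervals. The estimate on $h_0 U_0\vph = \tfrac12(\pv^2+\z^2)\chi$ is the real obstacle, since $\z^2$ is unbounded; the way through is to use the compact-support-on-each-$\Sigma_\tau$ property of $\psi$ (a consequence of causality and the explicit solution formula) to bound the $z$-support of $\chi(\tau)$ locally uniformly in $\tau$, so that $\z^2$ acts on $\chi(\tau)$ as an effectively bounded multiplier on the relevant set, and then to use smoothness of $\psi$ and of the coordinate functions \eqref{intvar} to get the needed continuity. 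This localization argument, combined with the commutator-theorem machinery of Appendix \ref{open} invoked with the harmonic oscillator $h_0$ as the auxiliary operator, is what upgrades the formal manipulations to the genuine strong-$C^1$ statements of (B).
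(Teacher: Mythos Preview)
Your proposal is essentially correct and matches the paper's approach, which is simply to observe that all the claims ``follow easily from the discussion preceding the theorem'' together with the direct check of hermiticity of $\la^i$, $\mu$, $p_i$ for (C).

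One correction, however, concerns your final sentence. The commutator-theorem machinery of Appendix~\ref{open} is \emph{not} invoked for Theorem~\ref{freeDirac}, and indeed is not available under its hypotheses: that machinery requires the quantitative bounds $\|H_0(\tau)\psi\|\leq c_1(\tau)\|h_0\psi\|$ and the commutator estimates \eqref{hhbounds1}--\eqref{hhbounds2}, which are only introduced as additional assumptions in Theorem~\ref{freeDiracHil}. For Theorem~\ref{freeDirac} your localization argument is already the whole story and needs no supplement. Since $\psi$ is smooth on $M$ with support in each $\Sigma_\tau$ contained in a compact set that varies continuously with $\tau$ (by causality), the function $(\tau,\z)\mapsto\chi(\tau,\z)$ is jointly smooth and, for $(\tau,\sigma)$ in any compact set, supported in a \emph{fixed} compact $\z$-region. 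Hence $h_0\chi(\tau)=\tfrac12(-\Delta+|\z|^2)\chi(\tau)$ and $H_0(\tau)\chi(\tau)$ are smooth compactly supported functions depending smoothly on the parameters, and strong continuity (indeed smoothness) in $\Hc$ is immediate. So drop the appeal to Appendix~\ref{open} here; it belongs to the next theorem.
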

\begin{proof}
The symmetry of $H_0(\tau)$ follows from the symmetry of the operators
$\la^i(\tau)$, $\mu(\tau)$ and~$p_i$, and invariance of $\Dh$ with respect to
them. The other statements of the theorem follow easily from the discussion
preceding the theorem.
\end{proof}

We now add assumptions which will be satisfied in the context of our
application, and which alow us to significantly extend the domains of
validity of the results in the last theorem. We use the abbreviation
\begin{equation}\label{zm}
 \zm=(|\z|^2+1)^\frac{1}{2}\,.
\end{equation}
Also, the usual multi-index notation will be used.
\begin{thm}\label{freeDiracHil}
Let all the assumptions of Theorem \ref{freeDirac} be satisfied. Suppose in
addition that the following bounds of matrix norms hold:
\begin{gather}
 |\la^i(\tau,\z)|\leq C(\tau)\zm\,,\quad
 |\p_z^\al\la^i(\tau,\z)|\leq C(\tau)\,,\quad 1\leq|\al|\leq3\,,\label{labound}\\
 |\p_z^\beta\mu(\tau,\z)|\leq C(\tau)\zm^{2-|\beta|}\,,\quad |\beta|\leq2\,,
\end{gather}
where $C(\tau)$ is a continuous function. Denote
$h_0=\tfrac{1}{2}(\pv^2+\z^2$), the self-adjoint harmonic oscillator
Hamiltonian.

Then the following holds:
\begin{itemize}
\item[(A)] $H_0(\tau)$ are essentially self-adjoint on $\Dh$ and on each
    core of $h_0$, and  $\Dc(h_0)\subseteq \Dc(H_0(\tau))$. Moreover,
    $h_0^{-1}H_0(\tau)$ and $H(\tau)h_0^{-1}$ extend to bounded operators,
    and as functions of $\tau$ are strongly continuous.
\item[(B)] The operator $h_0U_0(\tau,\si)h_0^{-1}$ is bounded, with the
    norm
    \begin{equation}
    \|h_0U_0(\tau,\si)h_0^{-1}\|
    \leq \exp\Big[\con\int_\si^\tau C(\rho)d\rho\Big]\,,
    \end{equation}
 so in particular
\begin{equation}
    U_0(\tau,\si)\Dc(h_0)=\Dc(h_0)\,,
\end{equation}
and the map $(\tau,\si)\mapsto h_0U(\tau,\si)h_0^{-1}$ is strongly
continuous.
\item[(C)] For $\vph\in\Dc(h_0)$ the vector functions
    $h_0U_0(\tau,\si)\vph$, $H_0(\tau)U_0(\tau,\si)\vph$  and
    $U_0(\tau,\si)H_0(\si)\vph $ are strongly continuous with respect to
    parameters, the vector function $U_0(\tau,\si)\vph$ is of the class
    $C^1$ with respect to parameters in the strong sense and the following
    equations hold
\begin{align}
 i\p_\tau U_0(\tau,\si)\vph&=H_0(\tau) U_0(\tau,\si)\vph\,,\\
 i\p_\si U_0(\tau,\si)\vph&=-U_0(\tau,\si)H_0(\si)\vph\,.
\end{align}
\item[(D)] Relativistic causality is respected by $U_0(\tau,\si)$.
    Therefore, if we denote
    \begin{gather}
    \Hc\supset\Hc_c\ \text{--the subspace of functions with compact essential support}\,,\\
    \Dc_c(\pv^2)=\Hc_c\cap\Dc(\pv^2)\,,
    \end{gather}
    then
\begin{equation}
 U_0(\tau,\si)\Hc_c=\Hc_c\,,\quad
 U_0(\tau,\si)\Dc_c(\pv^2)=\Dc_c(\pv^2)\,.
\end{equation}
\end{itemize}
\end{thm}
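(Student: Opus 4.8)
The plan is to deploy the commutator theorem (the version recalled in Appendix~\ref{open}) with $h_0$ as the comparison operator, and then bootstrap from essential self-adjointness of $H_0(\tau)$ to the mapping and continuity properties of the evolution $U_0(\tau,\si)$. First I would verify the two ingredients needed for the commutator theorem: (a) the operator inequality $\pm H_0(\tau)\leq c(\tau) h_0$ on $\Dh$, i.e.\ relative boundedness of $H_0(\tau)$ with respect to $h_0$ with bound uniform on compacta in $\tau$; and (b) a commutator estimate $|(H_0(\tau)\vph,h_0\vph)-(h_0\vph,H_0(\tau)\vph)|\leq c(\tau)(h_0^{1/2}\vph,h_0^{1/2}\vph)$ for $\vph\in\Dh$. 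For (a) I would write $H_0(\tau)=\tfrac12[\la^i,p_i]_++\mu=\la^ip_i-\tfrac{i}{2}(\p_i\la^i)+\mu$ and use the first bound in \eqref{labound}, namely $|\la^i|\leq C(\tau)\zm$, to estimate $\|\la^ip_i\vph\|\leq C(\tau)\|\zm\,\pv\vph\|\leq C'(\tau)\|h_0\vph\|$ (since $\zm\,|\pv|$ is controlled by $\z^2+\pv^2$ up to lower order), while $\p_i\la^i$ and $\mu$ are bounded by the remaining hypotheses; this also gives $\Dc(h_0)\subseteq\Dc(H_0(\tau))$ and boundedness of $h_0^{-1}H_0(\tau)$ and $H_0(\tau)h_0^{-1}$ on $\Dh$, hence by density. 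The strong continuity in $\tau$ follows because the coefficients $\la^i(\tau,\cdot)$, $\mu(\tau,\cdot)$ and their derivatives depend continuously on $\tau$ (smoothness from Theorem~\ref{freeDirac}) with the stated $\tau$-uniform-on-compacta bounds, so the difference quotients converge on $\Dh$ and extend.

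The commutator estimate (b) is where the higher derivative bounds $|\p_z^\al\la^i|\leq C(\tau)$ for $|\al|\leq3$ enter. Schematically $[H_0(\tau),h_0]$ is a differential operator whose coefficients involve $\la^i$, $\mu$ and up to two of their derivatives multiplied against up to one power of $p$ or $z$; I would expand $[\la^ip_i,\tfrac12\pv^2]$, $[\la^ip_i,\tfrac12\z^2]$, etc., collect terms, and check that each resulting operator is bounded relative to $h_0^{1/2}$ (equivalently, relative to $\langle z\rangle+\langle p\rangle$) — this is exactly the role of the Lemma in Appendix~\ref{lemma}, which presumably packages this kind of estimate. Once (a) and (b) hold, the commutator theorem yields essential self-adjointness of $H_0(\tau)$ on $\Dh$, and since any core of $h_0$ is also a core for the commutator-theorem hypotheses it gives essential self-adjointness on every core of $h_0$ as well; this is part~(A).

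For part~(B) the idea is a Gronwall/differential-inequality argument for the conjugated evolution $V(\tau,\si)=h_0U_0(\tau,\si)h_0^{-1}$. Formally $\p_\tau V=-i\,h_0H_0(\tau)h_0^{-1}V=-i\,H_0(\tau)V-i[h_0,H_0(\tau)]h_0^{-1}V$; the first term is skew-symmetric and the second is bounded by $\con\,C(\tau)$ times $h_0U_0h_0^{-1}$ in norm, thanks to (b) and the $h_0$-boundedness of $h_0^{-1}[h_0,H_0(\tau)]$ together with boundedness of $h_0U_0h_0^{-1}$ being propagated. Making this rigorous I would first work on the dense set where everything is smooth (using part~(C) of Theorem~\ref{freeDirac} on $\Dh$), establish $\tfrac{d}{d\tau}\|V(\tau,\si)\vph\|^2\leq \con\,C(\tau)\|V(\tau,\si)\vph\|^2$, integrate to get $\|h_0U_0(\tau,\si)\vph\|\leq\exp[\con\int_\si^\tau C]\,\|h_0\vph\|$, and then extend by density to all of $\Dc(h_0)$; this simultaneously gives $U_0(\tau,\si)\Dc(h_0)=\Dc(h_0)$ (using the reverse evolution for the inclusion the other way) and the stated norm bound. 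Strong continuity of $(\tau,\si)\mapsto h_0U_0(\tau,\si)h_0^{-1}$ then follows from the strong continuity of $U_0$, the uniform bound just obtained, and a standard $3\epsilon$ argument approximating by $\Dh$-vectors.

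Part~(C) is then essentially a corollary: on $\Dc(h_0)$ one has $H_0(\tau)U_0(\tau,\si)\vph=(H_0(\tau)h_0^{-1})(h_0U_0(\tau,\si)h_0^{-1})(h_0\vph)$, a composition of strongly continuous bounded-operator-valued maps applied to a fixed vector, hence strongly continuous, and similarly for the other combinations; the $C^1$ property and the two differential equations are inherited from part~(B) of Theorem~\ref{freeDirac} by the density/continuity just set up, since the right-hand sides are now continuous. Part~(D) requires essentially no new work: relativistic causality of $U_0^\Sigma(\tau,\si)$ was already recorded in Theorem~\ref{freeDirac}(A), and it is preserved under the pointwise-multiplication transformation $T_\tau$; finite propagation speed then forces $U_0(\tau,\si)$ to map compactly-supported data to compactly-supported data, and combining with part~(B) (preservation of $\Dc(h_0)\supseteq\Dc(\pv^2)\cap\Hc_c$ — or rather a direct check that $\Dc_c(\pv^2)$ is preserved, using that $U_0$ preserves both $\Hc_c$ and the domain) gives the final two identities. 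I expect the main obstacle to be the commutator estimate~(b): one must organize the multiple-commutator expansion carefully so that every term genuinely closes at the level of $h_0^{1/2}$-boundedness, and it is here that the precise exponents in the hypotheses — especially the need for $\la^i$-derivatives up to order three and the $\zm^{2-|\beta|}$ decay of $\mu$-derivatives — are used, which is presumably why Appendix~\ref{lemma} is devoted to exactly this point.
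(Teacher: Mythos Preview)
Your proposal is correct and follows essentially the same route as the paper: verify the three estimates of Theorem~\ref{op-ham} (operator bound $\|H_0(\tau)\psi\|\leq c_1\|h_0\psi\|$, form bound on the commutator, and operator bound on the commutator) via Lemma~\ref{lembound}, then feed Theorem~\ref{freeDirac} into Theorem~\ref{op-ev} to obtain the Gronwall-type exponential bound on $h_0U_0(\tau,\si)h_0^{-1}$ and the extension of the evolution equations to $\Dc(h_0)$. One small remark: the paper's Gronwall step regularizes with $h_0^2/(1+\ep h_0^2)$ rather than differentiating $\|h_0U_0\vph\|^2$ directly, which sidesteps any domain worries about $h_0H_0(\tau)$; your version works on $\Dh$ since that space is preserved by both $h_0$ and $U_0$, but the regularization is the cleaner way to write it up.
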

\begin{proof}
It follows from the assumptions on $\la^i$ and $\mu$ that $H_0(\tau)$ fulfill
all the conditions imposed on the operator $h$ in Lemma \ref{lembound} in
Appendix \ref{lemma}. In~Theorems \ref{op-ham} and \ref{op-ev} in
Appendix~\ref{open} we now put
\begin{equation}
 \Dc=\Dh\,,\quad h(\tau)=H_0(\tau)\,,\quad u(\tau,\si)=U_0(\tau,\si)
 \end{equation}
and $h_0$ as defined in the present assumptions. Then by the results of
Theorem~\ref{freeDirac} all the assumptions of these theorems are satisfied
and the present thesis follows.
\end{proof}

\section{Evolution in external field}\label{evolelmg}

We now consider the Dirac equation in an external field, of the form
\begin{equation}
 (\tfrac{1}{2}[\ga^a,i\n_a]_+-1-V)\psi=0\,,
\end{equation}
where $V(x)$ is a matrix function satisfying the condition
\begin{equation}\label{Vhc}
 V(x)^\hc=\ga^0V(x)\ga^0\,,
\end{equation}
which guarantees the reality of the interaction term $-\ov{\psi}V\psi$ in the
lagrangian and the conservation of the current $\ov{\psi}\gamma^a\psi$. The
addition of this interaction term leads to the modification of the free
curvilinear version to the equation of the form
\begin{equation}\label{intW}
 i\p_\tau\vph=(\tfrac{1}{2}[\la^i,p_i]_++\mu +W)\vph\,,
\end{equation}
where
\begin{equation}
 W(\tau,\z)=\mu K^{-1}V(x)K=W(\tau,\z)^\hc\,,
\end{equation}
hermicity being equivalent to the property \eqref{Vhc} (see \eqref{defK}).

Using the interaction picture technique we shall obtain the evolution
operators for which equation \eqref{intW} is satisfied. However, thanks to
the relativistic causality we can extend the applicability of the technique
to the following setting, in which the operators $W(\tau)$ do not need to be
bounded.

\begin{thm}\label{interDirac}
Let all the assumptions of Theorems \ref{freeDirac} and \ref{freeDiracHil} be
satisfied. Suppose in addition that $W(\tau,\z)$ is a hermitian matrix
function such that the mappings
\begin{equation}\label{propW1}
 \big\{\mR\ni\tau\mapsto \p_z^\al W(\tau,\z)\big\}
 \in C^0(\mR, L_\mathrm{loc}^\infty(\mR^3))\,,\qquad |\al|\leq2\,,
\end{equation}
and
\begin{equation}\label{propW2}
 \|\zm^{-2+|\al|}\p_z^\al W(\tau,\z)\|_\infty\leq C(\tau)\,,\qquad |\al|\leq1\,,
\end{equation}
where $C(\tau)$ is a continuous function.  Denote
\begin{equation}
 H(\tau)=H_0(\tau)+W(\tau)\,,
\end{equation}
with the initial domain $\Dh$, and define the formal series\pagebreak[2]
\begin{equation}\label{U}
\begin{aligned}
 U(\tau,\si)&=\sum_{n=0}^\infty U^{(n)}(\tau,\si)\,,\qquad
 U^{(0)}(\tau,\si)=\1\,,\\
  U^{(n)}(\tau,\si)
  &=(-i)^n \int\limits_{\tau\geq\tau_{n}\geq\ldots\geq\tau_1\geq\si}
  U_0(\tau,\tau_n)W(\tau_n)U_0(\tau_n,\tau_{n-1})\ldots\\
 &\hspace{10em}\ldots W(\tau_1)U_0(\tau_1,\si)d\tau_n\ldots d\tau_1\\
 &=-i\int\limits_\si^\tau U_0(\tau,\rho)W(\rho)U^{(n-1)}(\rho,\si)d\rho\\
 &=-i\int\limits_\si^\tau U^{(n-1)}(\tau,\rho)W(\rho)U_0(\rho,\si)d\rho\,,\qquad
 n\geq 1\,.
\end{aligned}
\end{equation}

Then the following is true:
\begin{itemize}
\item[(A)] $H(\tau)$ are essentially self-adjoint on $\Dh$ and on each core
    of $h_0$, and  $\Dc(h_0)\subseteq \Dc(H(\tau))$.
\item[(B)] Series $U(\tau,\si)$ and its conjugation converge strongly on
    $\Hc_c$, and the limit operator extends to a unitary propagator on
    $\Hc$, strongly continuous in its parameters.
\item[(C)] $U(\tau,\si)$ respects relativistic causality and
    \begin{equation}
    U(\tau,\si)\Dc_c(\pv^2)=\Dc_c(\pv^2)\,.
    \end{equation}
\item[(D)] For $\vph\in\Dc_c(\pv^2)$ the vector functions
    $h_0U(\tau,\si)\vph$, $H(\tau)U(\tau,\si)\vph$ and
    $U(\tau,\si)H(\si)\vph$ are strongly continuous with respect to
    parameters, the function $U(\tau,\si)\vph$ is of the class $C^1$ with
    respect to parameters in the strong sense and the following equations
    hold
\begin{equation}\label{extDirac}
\begin{aligned}
 i\p_\tau U(\tau,\si)\vph&=H(\tau) U(\tau,\si)\vph\,,\\
 i\p_\si U(\tau,\si)\vph&=-U(\tau,\si)H(\si)\vph\,.
\end{aligned}
\end{equation}
\item[(E)] Unitary operators satisfying (D), with $U(\si,\si)=\1$,  are
    unique.
\end{itemize}
\end{thm}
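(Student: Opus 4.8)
The plan is to treat the Dyson series $U(\tau,\si)$ as a genuine interaction-picture object, exploiting the fact that, after localizing to $\Hc_c$, the unbounded potential $W(\tau)$ acts like a bounded one. The key preliminary observation is that relativistic causality (parts (A),(D) of Theorem~\ref{freeDiracHil}) confines the evolution of a vector supported in a ball to a controlled family of balls on the later surfaces; hence, for a fixed $\vph\in\Dc_c(\pv^2)$ with essential support in some compact $\Kc$, each term $U^{(n)}(\tau,\si)\vph$ only ever "sees" $W$ restricted to a fixed compact set $\Kc_\tau$ depending continuously on $\tau$. On that set the bounds \eqref{propW1}--\eqref{propW2} make $W(\rho)$ a bounded operator with a locally uniform bound, and likewise $h_0^{\pm1}W(\rho)h_0^{\mp1}$ are bounded there by the Leibniz estimate used in Lemma~\ref{lembound}. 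First I would establish this causal-confinement statement carefully, then record the resulting local operator bounds; this turns the formal integrals in \eqref{U} into norm-convergent Bochner integrals on the invariant subspace $\Dc_c(\pv^2)$.

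Next I would prove (B). On $\Hc_c$, with the locally uniform bound $\|W(\rho)\|_{\Kc_\rho}\le C'(\rho)$, the standard Dyson majorization $\|U^{(n)}(\tau,\si)\vph\|\le \frac{1}{n!}\big(\int_\si^\tau C'(\rho)\,d\rho\big)^n\|\vph\|$ gives absolute and locally uniform convergence, so $U(\tau,\si)$ is a well-defined bounded operator there, strongly continuous in $(\tau,\si)$, and the same for the adjoint series (built from $W(\rho)^\hc=W(\rho)$ and $U_0^*$). The two alternative recursive forms in \eqref{U} show $U(\tau,\si)U(\si,\rho)=U(\tau,\rho)$ and $U(\si,\si)=\1$ on $\Hc_c$; combined with formal unitarity (term-by-term one checks $\frac{d}{d\tau}\|U(\tau,\si)\vph\|^2=0$ using \eqref{freeDH} and hermiticity of $W$) this yields isometry on the dense set $\Hc_c$, hence $U(\tau,\si)$ extends to unitaries on $\Hc$ with $U(\tau,\si)^{-1}=U(\si,\tau)$.

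For (C) causality is inherited term by term from that of $U_0$ and the locality of multiplication by $W$, and invariance of $\Dc_c(\pv^2)$ follows once (D) is in hand. For (D) I would differentiate the Bochner integral in the first recursion of \eqref{U}: since on $\Dc_c(\pv^2)$ one has the a~priori bound $\|h_0U^{(n)}(\tau,\si)\vph\|\le \frac{1}{n!}(\int_\si^\tau C''(\rho)\,d\rho)^n\,\|h_0\vph\|$ (propagating $h_0$ through $U_0$ by Theorem~\ref{freeDiracHil}(B) and through $W$ by the commutator bound), the series for $h_0U(\tau,\si)\vph$, for $H(\tau)U(\tau,\si)\vph$, and for $U(\tau,\si)H(\si)\vph$ all converge locally uniformly and define strongly continuous functions; term-by-term differentiation (justified by these uniform bounds) gives \eqref{extDirac} and the $C^1$ property. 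Essential self-adjointness of $H(\tau)=H_0(\tau)+W(\tau)$ in (A) follows from Theorem~\ref{freeDiracHil}(A) via the commutator theorem of Appendix~\ref{open} once Lemma~\ref{lembound} is applied to $H(\tau)$, using that the bounds \eqref{propW2} are exactly of the type there assumed; and $\Dc(h_0)\subseteq\Dc(H(\tau))$ because $H(\tau)h_0^{-1}$ is bounded. Finally, uniqueness in (E): if $\tilde U$ also satisfies (D) with $\tilde U(\si,\si)=\1$, then for $\vph\in\Dc_c(\pv^2)$ the vector $t\mapsto \tilde U(\tau,t)U(t,\si)\vph$ is strongly $C^1$ with vanishing derivative (the two $H(t)$ contributions cancel on the common core $\Dc_c(\pv^2)$), hence constant, giving $\tilde U(\tau,\si)\vph=U(\tau,\si)\vph$ on a dense set, so $\tilde U=U$.

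The main obstacle I expect is the first step: pinning down the causal-confinement lemma precisely enough that "$W$ restricted to a $\tau$-dependent compact set" is a legitimate bounded operator uniformly on the relevant range of $\tau$, and that all the $h_0$-weighted estimates survive this localization with constants that are still merely \emph{continuous} (not bounded) in $\tau$ — matching the hypotheses \eqref{propW1}--\eqref{propW2}. Once that bookkeeping is done, everything else is the standard interaction-picture/Dyson-series machinery applied on the invariant dense domain $\Dc_c(\pv^2)$.
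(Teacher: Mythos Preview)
Your proposal is correct and follows essentially the same route as the paper: localize via relativistic causality so that $W$ acts as a bounded operator on compactly supported data, run the standard Dyson expansion, and then propagate the $h_0$-weighted bounds term by term using the commutator estimates of Lemma~\ref{lembound} and Theorem~\ref{op-ham}. The one device the paper uses to streamline the ``causal-confinement'' bookkeeping you flag as the main obstacle is to fix $T,r$, find the uniform causal radius $R(T,r)$, and introduce a single smooth spatial cutoff $J(\z)$ equal to $1$ on $|\z|\le R$; then $W_J=JW$ is a genuinely bounded, strongly continuous operator on all of $\Hc$, one has $U^{(n)}(\tau,\si)\vph=U_J^{(n)}(\tau,\si)\vph$ for the given $\vph$, and the standard Dyson theorem (Reed--Simon~X.69) applies to $U_J$ wholesale, yielding unitarity and strong continuity without the slightly circular ``differentiate $\|U\vph\|^2$ before (D) is established'' step in your sketch.
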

\begin{proof}
The proof of (A) is similar as the proof in the free case, statement (A) in
Theorem \ref{freeDiracHil}. One has to note that the assumption
\eqref{propW2} ensures the validity of items (i) and (ii) in Lemma
\ref{lembound}; the result of item (iii) is not needed.

Consider the series \eqref{U}.  As each of the operators $U_0(\tau,\tau_n)$,
$U_0(\tau_k,\tau_{k-1})$ and $U_0(\tau_1,\si)$ in the definition of
$U^{(n)}(\tau,\si)$ respects causality, and multiplication by $W(\tau_k,\z)$
does not enlarge the support of the function, each of the operators
$U^{(n)}(\tau,\si)$ respects causality. Let, for chosen $T$ and $r$,
\begin{equation}
 \tau,\si\in[-T,T]\,,\quad \vph\in\Hc\,,\quad
 \essupp\vph\subseteq\{\z\mid |\z|\leq r\}\,,
 \end{equation}
which is assumed for the rest of this proof. Then there exists $R(T,r)$ such
that the essential support of all functions $U^{(n)}(\tau,\si)\vph$ is
contained in $|\z|\leq R(T,r)$ (uniformly with respect to $\tau$, $\si$ and
$\vph$ in the assumed range). Let $J(\z)$ be a smooth function such that
$J(\z)=1$ for $|\z|\leq R$ and $J(\z)=0$ for $|\z|\geq R+1$. Denote by
$U_J(\tau,\si)$ and $U_J^{(n)}(\tau,\si)$ the series \eqref{U} and its terms
in which $W(\rho,\z)$ have been replaced by $W_J(\rho,\z)=J(\z)W(\rho,\z)$;
note that by assumption \eqref{propW1} $W_J(\rho)$ is a bounded, strongly
continuous operator, with $\|W_J(\rho)\|\leq d_1$ for some constant $d_1$,
uniformly on $\rho\in[-T,T]$. Using the first of the recursive relations in
\eqref{U} it is now easy to see that in the given setting
\begin{equation}
 U^{(n)}(\tau,\si)\vph=U_J^{(n)}(\tau,\si)\vph\,.
\end{equation}
Arguing as in the proof of the Dyson expansion (see Thm.\, X.69 in
\cite{rs79II}) one shows that $U_J(\tau,\si)$ converges uniformly to a
strongly continuous unitary propagator. As the space $\Hc_c$ is dense in
$\Hc$, the statement (B) is proved. Also, the causality is respected.

To prove (C) (the invariance of the subspace) and (D) we write
\begin{equation}
 h_0W_J(\rho) h_0^{-1}=[h_0,W_J(\rho)]h_0^{-1}+W_J(\rho)
\end{equation}
and note that by Theorem \ref{op-ham} applied to $h(\tau)=W_J(\tau)$ (with
Lemma \ref{lembound}) the first term on the rhs is a bounded operator,
strongly continuous by the assumption \eqref{propW1}. Therefore, the operator
$h_0W_J(\rho)h_0^{-1}$ is also bounded and strongly continuous, and
$\|h_0W_J(\rho)h_0^{-1}\|\leq d_2$ for some constant $d_2$, uniformly on
$\rho\in[-T,T]$. Taking also into account statement (B) in Theorem
\ref{freeDiracHil} we conclude that $h_0U_J^{(n)}(\tau,\si)h_0^{-1}$ is
bounded, strongly continuous, with the norm estimated by
\begin{equation}
 \|h_0 U_J^{(n)}(\tau,\si)h_0^{-1}\|
 \leq (n!)^{-1}(d_1+d_2)^n
 \exp\Big[\con\int_\si^\tau C(\rho)d\rho\Big]\,,
\end{equation}
This implies that the series $\sum_nh_0U_J^{(n)}(\tau,\si)h_0^{-1}$ converges
uniformly to the strongly continuous function $h_0U_J(\tau,\si)h_0^{-1}$. Let
us now impose on $\vph$ a stron\-ger assumption that $\vph\in \Dc_c(\pv^2)$.
Then
\begin{equation}
 h_0U(\tau,\si)\vph=h_0U_J(\tau,\si)\vph=h_0U_J(\tau,\si)h_0^{-1}h_0\vph\,.
\end{equation}
This proves (C), and also the strong continuity of $h_0U(\tau,\si)\vph$ in
(D). The strong continuity of $H(\tau)U(\tau,\si)\vph$ and
$U(\tau,\si)H(\si)\vph$ follows now from the strong continuity of
$H(\tau)h_0^{-1}$, similarly as in Theorem \ref{freeDiracHil}, basing on
Theorem \ref{op-ham} and assumptions \eqref{propW1} and \eqref{propW2}.

Next, we note that
\begin{equation}
 U_0(0,\tau)U^{(n)}(\tau,\si)\vph
 =-i\int_\si^\tau U_0(0,\rho)W(\rho)U^{(n-1)}(\rho,\si)\vph\, d\rho\,,
\end{equation}
so this vector function is continuously differentiable in $\tau$ in the
strong sense and
\begin{equation}
 i\p_\tau[U_0(0,\tau)U^{(n)}(\tau,\si)]\vph
 = U_0(0,\tau)W(\tau)U^{(n-1)}(\tau,\si)\vph
\end{equation}
(remember that on the rhs $W$ may be replaced by $W_J$). Again, using the
uniform convergence of $U_J(\tau,\si)$ we obtain
\begin{equation}
 i\p_\tau[U_0(0,\tau)U(\tau,\si)]\vph=U_0(0,\tau)W(\tau)U(\tau,\si)\vph\,,
\end{equation}
and the rhs is strongly continuous in $(\tau,\si)$. Finally, differentiating
\begin{equation}
 U(\tau,\si)\vph=U_0(\tau,0)[U_0(0,\tau)U(\tau,\si)]\vph
\end{equation}
by the Leibnitz rule and using (C) from Theorem \ref{freeDiracHil} we arrive
at the first equation in \eqref{extDirac}. The uniqueness (E) follows easily
from this equation. The proof of the second equation in \eqref{extDirac} is
similar, with the use of the second of the recursive relations in \eqref{U};
we omit the details.
\end{proof}

The most general matrix field $V(x)$ satisfying condition \eqref{Vhc} may be
concisely represented in the form
\begin{equation}
 V=\sum_{k=0}^4 i^{\frac{1}{2}k(k-1)}C^k_{a_1\ldots a_k}\ga^{a_1}\ldots\ga^{a_k}\,,
\end{equation}
where the tensor fields $C^k_{a_1\ldots a_k}$ are real and antisymmetric. The
corresponding form of the hermitian matrix function $W$ is
\begin{equation}
 W=\mu\sum_{k=0}^4 i^{\frac{1}{2}k(k-1)}
 \hat{C}^k_{\mu_1\ldots \mu_k}\gax^{\mu_1}_K\ldots\gax^{\mu_k}_K\,.
\end{equation}
This form encompasses the scalar ($k=0$) and pseudoscalar ($k=4$) potentials,
the electromagnetic vector potential ($k=1$) and the pseudovector potential
($k=3$), the interactions characteristic for anomalous magnetic and electric
moments, as well as the linearized gravitation ($k=2$).

\section{Special foliation, new picture and free asymptotics}\label{sfnpfa}

We now choose the foliation $\tau$ and the variables $\z$ by
\begin{equation}\label{tauzed}
 x^0=\tau\zm\equiv\tau(|\z|^2+1)^{\frac{1}{2}}\,,\qquad
 \x=\tm\z\equiv(\tau^2+1)^{\frac{1}{2}}\z\,,
\end{equation}
and we also denote
\begin{equation}
 \tmz=(\tau^2+|\z|^2+1)^{\frac{1}{2}}\,.
\end{equation}
The idea behind this choice is that for $\tau$ tending to $\pm\infty$ the
Cauchy surfaces should tend, for timelike directions in the spacetime, to the
hyperboloids $x^2=\tau^2$. All geometrical facts on these curvilinear
coordinates needed for our purposes are gathered in Appendix~\ref{spvar}.

From now on we adopt the coordinate system \eqref{tauzed}. Using the
properties of this system it is then easy to see that all the conditions of
Theorems \ref{freeDirac} and \ref{freeDiracHil} are satisfied. Also,
a~convenient property of these coordinates is that $g_{\tau i}=0$. Therefore,
we find
\begin{equation}
 [\gax_K^i,\beta]_+=(g^{\tau\tau})^{-\frac{1}{2}}K^{-1}[\gax^i,\gax^\tau]_+K
 =2(g^{\tau\tau})^{-\frac{1}{2}}g^{i\tau}=0\,,
\end{equation}
and then
\begin{gather}
 [\la^i,\beta]_+=0=[\la^i,\mu]_+\,,\label{comlabe}\\[1ex]
 \tfrac{1}{2}[\la^i,\la^j]_+=-g_{\tau\tau}g^{ij}\1\equiv \rho^{ij}\1\,;\label{defro}
\end{gather}
the symmetric form $\rho$ is positive definite.

The free evolution $U_0(\tau,\si)$ may be easily expressed in the coordinate
system \eqref{tauzed}. For this purpose it is sufficient to consider
$U_0(\tau,0)$. Using the representation of the Dirac solution \eqref{Dirini}
in Appendix \ref{fDe} and the definition~\eqref{Uzero}, for
$\vph\in\Sc(\mR^3,\mC^4)$ we obtain
\begin{multline}\label{asUfi}
 [U_0(\tau,0)\vph](\z)
 =\Big(\frac{\tmz}{\tm\zm}\Big)^{\frac{1}{2}}K(\tau,\z)^{-1}\times\\
 \times\Big(\frac{\tm}{2\pi}\Big)^{\frac{3}{2}}\int\big[e^{-ix(\tau,\z)\cdot v}P_+(v)
 -e^{ix(\tau,\z)\cdot v}P_-(v)\big][\mathcal{F}^{-1}\vph](v)d\mu(v)\,,
\end{multline}
where $x(\tau,\z)=(\tau\zm,\tm\z)$. We consider the asymptotic form of this
evolution for $\tau\to\pm\infty$. Let $\vph\in \mathcal{F}\Dh$ and restrict
$\z$ to a compact set. Then both $\vv$ (the space part of $v$) and $\z$ are
restricted to bounded sets and by the stationary phase method (see e.g.\
\cite{vai89}) the leading asymptotic behavior of the second line in
\eqref{asUfi} is given by
\begin{equation}\label{asint}
 \mp i\big[e^{-i(\tau\pm\frac{\pi}{4})}P_+(z^0,\pm\z)
 +e^{i(\tau\pm\frac{\pi}{4})}P_-(z^0,\pm\z)\big][\mathcal{F}^{-1}\vph](z^0,\pm\z)
 +O(\tm^{-1})\,,
\end{equation}
where $z^0=\zm$, the upper/lower signs $\mp$ and $\pm$ correspond to the
limits $\tau\to\pm\infty$, respectively, and the rest is bounded uniformly
for $\z$ in the given set. The limit of the term on the rhs of \eqref{asUfi}
in the first line is equal to $\zm^{-\frac{1}{2}}K(\pm\infty,\z)^{-1}$, with
\begin{equation}\label{asK}
 K(\pm\infty,\z)=2^{-\frac{1}{2}}\big[(1+\zm)^{\frac{1}{2}}
 \pm(1+\zm)^{-\frac{1}{2}}\ga^0\z\cdot\gab\big]=K(\infty,\pm\z)\,,
\end{equation}
and again the rest is bounded by $\con\tm^{-1}$, uniformly in the given set.
We now note two facts:
\begin{gather}
 K(\infty,\vv)^{-1} v\cdot\ga\, K(\infty,\vv)=\beta\,,\\
 e^{-i\si}\tfrac{1}{2}(\1+\beta)
 +e^{i\si}\tfrac{1}{2}(\1-\beta)
 =e^{-i\si\beta}\,.
\end{gather}
The first identity is the limit form of the first of relations in
\eqref{defK}, but it may also be checked directly. The second identity is
most easily evaluated on the two complementary eigenspaces of $\beta$ (with
eigenvalues $\pm1$).

Setting the asymptotic forms \eqref{asint} and \eqref{asK} into the formula
\eqref{asUfi}, and using the above identities (the second one with
$\si=\tau\pm\frac{\pi}{4}$), we obtain
\begin{multline}\label{asU0fi}
 [U_0(\tau,0)\vph](\z)
 = \mp ie^{-i(\tau\pm\frac{\pi}{4})\beta}\zm^{-\frac{1}{2}}K(\infty,\pm\z)^{-1}
 [\mathcal{F}^{-1}\vph](z^0,\pm\z)\\ +O(\tm^{-1})\,.
\end{multline}

The last formula suggests the definition of the following unitary
transformation:
\begin{equation}
 \Phi(\tau)=\exp[-i\tau\beta]\,,\label{fi1}
\end{equation}
and the associated change of the evolution `picture' (also in the interacting
case):
\begin{equation}\label{HUPhi}
\begin{gathered}
 U_{0\Phi}(\tau,\si)=\Phi(\tau)^*U_0(\tau,\si)\Phi(\si)\,,\quad
 U_\Phi(\tau,\si)=\Phi(\tau)^*U(\tau,\si)\Phi(\si)\,,\\[1ex]
  H_{0\Phi}(\tau)=\Phi(\tau)^*H_0(\tau)\Phi(\tau)-\beta\,,\\[1ex]
  H_{\Phi}(\tau)=H_{0\phi}(\tau)+W_\Phi(\tau)\,,\quad
  W_\Phi(\tau)=\Phi(\tau)^* W(\tau)\Phi(\tau)\,.
\end{gathered}
\end{equation}

\begin{rem}\label{interDiracPhi}
Under the conditions of Theorem~\ref{interDirac} all statements of its thesis
remain valid with the replacements defined by the equations \eqref{HUPhi}.
\end{rem}
\noindent This follows quite trivially, as $\Phi(\tau)$ acts only on the
factor $\mC^4$ in the Hilbert space $\Hc=\mC^4\otimes L^2(\mR^3)$.

After that, we go back to the asymptotics of the free evolution, where we
shall need the parity operator
\begin{equation}
 [\mathcal{P}\vph](\z)=\vph(-\z)\,.
\end{equation}
Moreover, we observe that the map
\begin{equation}
 [\mathcal{K}\vph](\z)=\zm^{\frac{1}{2}}K(\infty,\z)\vph(\z)
\end{equation}
is a unitary operator $\Hc\mapsto L^2_\ga(H)$ (with the latter space defined
in Appendix~\ref{fDe}).

\begin{thm}\label{UF0}
The following strong limits exist as unitary operators in $\Hc$:
\begin{gather}
 U_{0\Phi}(+\infty,0)=\slim_{\tau\to+\infty}U_{0\Phi}(\tau,0)
 =-ie^{-i\frac{\pi}{4}\beta}\mathcal{K}^{-1}\mathcal{F}^{-1}\,,\\
 U_{0\Phi}(-\infty,0)=\slim_{\tau\to-\infty}U_{0\Phi}(\tau,0)
 =ie^{+i\frac{\pi}{4}\beta}\mathcal{P} \mathcal{K}^{-1}\mathcal{F}^{-1}\,,
\end{gather}
\begin{gather}
 U_{0\Phi}(0,+\infty)=\slim_{\tau\to+\infty}U_{0\Phi}(0,\tau)
 =i\mathcal{F}\mathcal{K}e^{+i\frac{\pi}{4}\beta}\,,\\
 U_{0\Phi}(0,-\infty)=\slim_{\tau\to-\infty}U_{0\Phi}(0,\tau)
 =-i\mathcal{F}\mathcal{K}\mathcal{P}e^{-i\frac{\pi}{4}\beta}\,,\\
 U_{0\Phi}(0,\pm\infty)=U_{0\Phi}(\pm\infty,0)^*\,.
\end{gather}
Therefore,\footnote{Note an analogy with the nonrelativistic case, see
Section 2 in \cite{her19}.}
\begin{equation}
 U_{0\Phi}(+\infty,-\infty)=i\beta\mathcal{P}\,.
\end{equation}
\end{thm}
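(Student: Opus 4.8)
The plan is to read off the four one-sided limits directly from the asymptotic formula \eqref{asU0fi}, upgrade the pointwise-in-$\z$ control to a genuine strong-operator limit, and then compose the $+\infty$ and $-\infty$ pieces. First I would fix $\vph\in\mathcal{F}\Dh$ and rewrite \eqref{asU0fi} in the picture-transformed form: since $U_{0\Phi}(\tau,0)=\Phi(\tau)^*U_0(\tau,0)=e^{i\tau\beta}U_0(\tau,0)$, the factor $e^{-i\tau\beta}$ in \eqref{asU0fi} is exactly cancelled, leaving
\begin{equation}
 [U_{0\Phi}(\tau,0)\vph](\z)=\mp i e^{\mp i\frac{\pi}{4}\beta}\,\zm^{-\frac12}K(\infty,\pm\z)^{-1}[\mathcal{F}^{-1}\vph](\zm,\pm\z)+O(\tm^{-1})\,,
\end{equation}
with the remainder bounded uniformly for $\z$ in any fixed compact set. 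Recognizing $\zm^{-\frac12}K(\infty,\pm\z)^{-1}=(\mathcal{K}^{-1}\cdot)(\pm\z)$ and $f(\pm\z)=(\mathcal{P}^{\,0\text{ or }1}f)(\z)$, the leading term is $(-ie^{-i\frac{\pi}{4}\beta}\mathcal{K}^{-1}\mathcal{F}^{-1}\vph)(\z)$ for $\tau\to+\infty$ and $(ie^{+i\frac{\pi}{4}\beta}\mathcal{P}\mathcal{K}^{-1}\mathcal{F}^{-1}\vph)(\z)$ for $\tau\to-\infty$, using that $\mathcal{K}^{-1}$ commutes with neither but that the composition $\mathcal{P}\mathcal{K}^{-1}$ reproduces the $K(\infty,-\z)^{-1}$ evaluated-at-$-\z$ structure. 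This identifies the candidate limits.

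Next I would promote this to a strong limit on all of $\Hc$. For $\vph\in\mathcal{F}\Dh$, the function $\z\mapsto[\mathcal{F}^{-1}\vph](\zm,\pm\z)$ has compact $\z$-support, so the $O(\tm^{-1})$ bound is uniform on that support and zero off it; hence $\|U_{0\Phi}(\tau,0)\vph-(\text{limit})\vph\|_{L^2}\to0$ by dominated convergence. Since $\mathcal{F}\Dh$ is dense in $\Hc$ and each $U_{0\Phi}(\tau,0)$ is unitary (Remark~\ref{interDiracPhi}, or directly from Theorem~\ref{freeDirac}), the uniform boundedness principle gives the strong limit on all of $\Hc$, and the limit operator, being a strong limit of unitaries that is itself a composition of the unitaries $\mathcal{F}$, $\mathcal{K}$, $\mathcal{P}$ and $e^{\pm i\frac{\pi}{4}\beta}$, is unitary. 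The adjoint limits $U_{0\Phi}(0,\pm\infty)=U_{0\Phi}(\pm\infty,0)^*$ follow because $U_{0\Phi}(0,\tau)=U_{0\Phi}(\tau,0)^*$ and strong convergence of a net of unitaries to a unitary implies strong convergence of the adjoints.

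Finally, to get $U_{0\Phi}(+\infty,-\infty)$ I would use the cocycle (propagator) property $U_{0\Phi}(+\infty,-\infty)=U_{0\Phi}(+\infty,0)\,U_{0\Phi}(0,-\infty)$ — more carefully, $U_{0\Phi}(+\infty,-\infty)=\slim_{\tau\to+\infty,\si\to-\infty}U_{0\Phi}(\tau,0)U_{0\Phi}(0,\si)$, which is legitimate since the first factor converges strongly to a bounded operator and the second converges strongly, and a product of strongly convergent bounded nets with the left factor norm-bounded converges strongly. Substituting the expressions,
\begin{equation}
 U_{0\Phi}(+\infty,-\infty)=\big(-ie^{-i\frac{\pi}{4}\beta}\mathcal{K}^{-1}\mathcal{F}^{-1}\big)\big(-i\mathcal{F}\mathcal{K}\mathcal{P}e^{-i\frac{\pi}{4}\beta}\big)=-e^{-i\frac{\pi}{4}\beta}\mathcal{P}e^{-i\frac{\pi}{4}\beta}=-\mathcal{P}e^{-i\frac{\pi}{2}\beta}\,,
\end{equation}
where I used $\mathcal{F}^{-1}\mathcal{F}=\1$, $\mathcal{K}^{-1}\mathcal{K}=\1$, and that $\mathcal{P}$ commutes with $\beta$ (and hence with $e^{-i\frac{\pi}{4}\beta}$) since $\beta$ acts only on the $\mC^4$ factor. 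Evaluating $e^{-i\frac{\pi}{2}\beta}=\cos\frac{\pi}{2}\,\1-i\sin\frac{\pi}{2}\,\beta=-i\beta$ gives $U_{0\Phi}(+\infty,-\infty)=-\mathcal{P}(-i\beta)=i\beta\mathcal{P}$, as claimed.

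The main obstacle is the passage from the pointwise-in-$\z$ asymptotics of \eqref{asU0fi} to the $L^2$ strong limit: one must be sure the $O(\tm^{-1})$ remainder is controlled in a way that survives integration against the (compactly supported, but $\tau$-independent) profile $[\mathcal{F}^{-1}\vph](\zm,\pm\z)$, i.e. that the stationary-phase error is uniform on the relevant compact $\z$-set and that the support does not escape to infinity — both of which are guaranteed by the remark following \eqref{asint} that the error is bounded uniformly for $\z$ in the given compact set, so this is a matter of bookkeeping rather than new analysis. Everything else — unitarity of $\mathcal{K}$, $\mathcal{F}$, $\mathcal{P}$, the commutation of $\mathcal{P}$ with $\beta$, and the density of $\mathcal{F}\Dh$ — is immediate.
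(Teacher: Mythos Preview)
Your identification of the limit operators from \eqref{asU0fi} and the final computation of $U_{0\Phi}(+\infty,-\infty)=i\beta\mathcal{P}$ are correct, and the overall architecture matches the paper's. However, there is a gap in your passage to strong convergence. You write that ``the $O(\tm^{-1})$ bound is uniform on that support and zero off it'', but this is not what \eqref{asU0fi} gives: the asymptotic expansion is established only for $\z$ in a fixed compact set, and off the support of $\z\mapsto[\mathcal{F}^{-1}\vph](\zm,\pm\z)$ the \emph{leading term} vanishes while $[U_{0\Phi}(\tau,0)\vph](\z)$ itself does not --- the free Dirac evolution of $\vph$ is not compactly supported in $\z$. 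Thus the remainder is not zero there, and dominated convergence as you invoke it does not close the $L^2$ estimate.

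The paper sidesteps this by not attempting direct $L^2$ control. It pairs $U_{0\Phi}(\tau,0)\vph$ against a compactly supported $\psi\in\Dh$, so that only $\z\in\supp\psi$ enters the inner product; on this compact set \eqref{asU0fi} applies with uniform $O(\tm^{-1})$ remainder, and one obtains convergence of $(\psi,U_{0\Phi}(\tau,0)\vph)$. This gives weak operator convergence on dense subspaces, hence weak convergence everywhere by uniform boundedness. Since the limit operator is visibly unitary, weak convergence of unitaries to a unitary automatically upgrades to strong convergence (via $\|U_n\vph-U\vph\|^2=2\|\vph\|^2-2\Rp(U_n\vph,U\vph)\to 0$), and then of the adjoints, as you correctly observe. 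Your argument can be repaired along exactly these lines --- or, equivalently, by noting that uniform convergence on the compact support of the limit together with preservation of the total $L^2$ norm forces the off-support $L^2$ mass of $U_{0\Phi}(\tau,0)\vph$ to tend to zero --- but as written the ``zero off it'' claim is incorrect and the dominated-convergence step fails.
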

\begin{proof}
For $\psi\in \Dh$ and $\vph\in \mathcal{F}\Dh$ the formula \eqref{asU0fi}
gives
\begin{gather}
  \lim_{\tau\to+\infty}(\psi,U_{0\Phi}(\tau,0)\vph)
  =(\psi,(-i)e^{-i\frac{\pi}{4}\beta}\mathcal{K}^{-1}\mathcal{F}^{-1}\vph)\,,\\
 \lim_{\tau\to-\infty}(\psi,U_{0\Phi}(\tau,0)\vph)
 =(\psi,ie^{+i\frac{\pi}{4}\beta}\mathcal{P} \mathcal{K}^{-1}\mathcal{F}^{-1}\vph)\,.
\end{gather}
Both subspaces are dense in $\Hc$, so the weak operator limits result. But
the limit operators are evidently unitary, so the weak limits imply the
strong limits of $U_{0\Phi}(\tau,0)$, as well as its conjugate.
\end{proof}

\section{Electromagnetic interaction and gauge
transformation}\label{elmintgau}

In the rest of this article we are interested in the standard, minimal
coupling electromagnetic interaction. For the electromagnetic field $F_{ab}$,
we reserve notation $A_a$ for the Lorenz gauge potential (fully specified in
what follows). We write $\cA_a$ for the potential in a general gauge to be
used in the Dirac equation. Also, we recall our conventions defined in
formulas \eqref{compxi} and the following remarks, so $\hF_{\mu\nu}$,
$\hat{A}_\mu$ and $\hA_\mu$ are components of these fields in our coordinate
system, and for $\mu,\nu=i,j$ etc. the range is restricted to values $1,2,3$.

Therefore, in the electromagnetic case the field $V$ and its transformed
version $W$ are
\begin{equation}\label{WA}
 V(x)=\cA_a(x)\ga^a\,,\qquad W(\tau,\z)=\hA_\tau(\tau,\z)+\hA_i(\tau,\z)\la^i(\tau,\z)\,,
\end{equation}
and we write the Hamiltonian as
\begin{equation}
 H=\tfrac{1}{2}[\la^i,\pi_i]_++\mu+\hA_\tau\,,\qquad \pi_i=p_i+\hA_i\,.
\end{equation}
\begin{thm}\label{gengau}
 The coordinate system $(\tau,\z)$ given by \eqref{tauzed} is assumed.

 (i) Let the electromagnetic potential $\cA(x)$ have components $\hA_\mu(\tau,\z)$ such that
for all indices $\mu=\tau,1,2,3$  the mappings
 \begin{equation}\label{pot}
 \big\{\mR\ni\tau\mapsto \p_z^\al\hA_\mu(\tau,\z)\big\}
 \in C^0(\mR,L^\infty_\mathrm{loc}(\mR^3))\,,\quad |\al|\leq2\,,
 \end{equation}
and
\begin{equation}\label{pot2}
 \left.\begin{aligned}
 &\|\zm^{-2+|\al|}\p_z^\al\hA_\tau(\tau,\z)\|_\infty\\
 &\|\zm^{-1+|\al|}\p_z^\al\hA_i(\tau,\z)\|_\infty
 \end{aligned}
 \right\}\leq C(\tau)\,,\quad
 |\al|\leq1\,,
\end{equation}
where $C(\tau)$ is a continuous function. Then the conditions of Theorem
\ref{interDirac} are fulfilled and the
 unitary propagator $U(\tau,\si)$ with the listed properties is obtained.

(ii) Let the potential $\cA(x)$ and the corresponding propagator
$U(\tau,\si)$ be as in (i). Define a new gauge
\begin{equation}
  \cA^\gau=\cA-\n\gau\,,
\end{equation}
where $\gau(x)$ is a gauge function such that for all indices
$\mu=\tau,1,2,3$ the mappings
 \begin{equation}\label{potgau}
 \big\{\mR\ni\tau\mapsto \p_z^\al\p_\mu\gau(\tau,\z)\big\}
 \in C^0(\mR,L^\infty_\mathrm{loc}(\mR^3))\,,\quad |\al|\leq1\,,
 \end{equation}
and
\begin{equation}\label{potgau2}
 \left.\begin{aligned}
 &\|\zm^{-2+|\al|}\p_z^\al\p_\tau\gau(\tau,\z)\|_\infty\\
 &\|\zm^{-1+|\al|}\p_z^\al\p_i\gau(\tau,\z)\|_\infty
 \end{aligned}
 \right\}\leq C(\tau)\,,\quad
 |\al|\leq1\,,
\end{equation}
where $C(\tau)$ is a continuous function. Denote by $H^\gau(\tau)$ the
interacting Hamiltonian with $\hA$ replaced by $\hA^\gau$, and
\begin{equation}\label{US}
 U^\gau(\tau,\si)= e^{i \gau(\tau)}U(\tau,\si)e^{-i \gau(\si)}\,.
\end{equation}
Then for such modified operators the statements (A) and (C)--(E) of
Theorem~\ref{interDirac} are satisfied.
\end{thm}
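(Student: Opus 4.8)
The plan is to treat (i) by a direct check against the hypotheses of Theorem~\ref{interDirac}, and (ii) by combining the essential self‑adjointness machinery of Section~\ref{evolelmg} with a conjugation argument, because the new gauge potential is one $z$‑derivative too rough for Theorem~\ref{interDirac} to apply as it stands. For part (i) everything reduces to verifying that $W(\tau,\z)=\hA_\tau+\hA_i\la^i$ satisfies \eqref{propW1}--\eqref{propW2}. The regularity \eqref{propW1} is immediate from \eqref{pot}, the smoothness of $\la^i$, and \eqref{labound} (only $z$‑derivatives of $\la^i$ of order $\le 3$ occur). For \eqref{propW2} one expands $\p_z^\al(\hA_i\la^i)$ by the Leibniz rule and uses $|\hA_i|\le C(\tau)\zm$, $|\p_z\hA_i|\le C(\tau)$, $|\la^i|\le C(\tau)\zm$, $|\p_z\la^i|\le C(\tau)$ together with the direct bound on $\hA_\tau$, obtaining $\|\zm^{-2+|\al|}\p_z^\al W\|_\infty\le C(\tau)$ for $|\al|\le1$; Theorem~\ref{interDirac} then applies literally.

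For part (ii) the new‑gauge Hamiltonian is $H^\gau(\tau)=H_0(\tau)+W^\gau(\tau)$ with $W^\gau=W-\p_\tau\gau-(\p_i\gau)\la^i$. The observation is that $\hA^\gau=\hA-\n\gau$ still obeys $L^\infty$‑bounds of the type \eqref{pot2} — hence $W^\gau$ obeys \eqref{propW2}, by the same estimate as in (i) combined with \eqref{potgau2} and \eqref{labound} — but $\hA^\gau$ is in general one $z$‑derivative short of \eqref{pot}, so $W^\gau$ need not satisfy \eqref{propW1}. Therefore Theorem~\ref{interDirac} cannot be used to build $U^\gau$ from its Dyson series (this is why (B) is not asserted); instead one takes $U^\gau(\tau,\si)$ as \emph{defined} by \eqref{US}, which is plainly a strongly continuous unitary propagator (the cocycle property is inherited from $U$ since the intermediate phases cancel), and verifies (A) and (C)--(E) directly. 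Statement (A) is the cheapest: the proof of (A) in Theorem~\ref{interDirac} uses only \eqref{propW2} (through items (i)--(ii) of Lemma~\ref{lembound} and the commutator Theorem~\ref{op-ham}), so since $W^\gau$ satisfies \eqref{propW2} that proof goes through verbatim, giving essential self‑adjointness of $H^\gau(\tau)$ on $\Dh$ and on every core of $h_0$, with $\Dc(h_0)\subseteq\Dc(H^\gau(\tau))$.

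For (C)--(E) I would first record the elementary facts about the unitary multiplication operator $e^{\pm i\gau(\tau)}$ on the $L^2(\mR^3)$ factor of $\Hc$: it leaves $\Dc_c(\pv^2)$ invariant (it fixes supports and, by \eqref{potgau}, local $H^2$‑regularity); $\tau\mapsto e^{\pm i\gau(\tau)}$ is strongly continuous; and, since $[\pv^2,e^{i\gau}]e^{-i\gau}$ is a first‑order operator with coefficients controlled by $|\n\gau|\le C(\tau)\zm$, $|\Delta\gau|\le C(\tau)$ (from \eqref{potgau2}) and $|\n\gau|^2\le C(\tau)^2\zm^2$, together with the boundedness of $\zm^2h_0^{-1}$ and $\zm\,|\pv|\,h_0^{-1}$, the operator $h_0\,e^{\pm i\gau(\tau)}h_0^{-1}$ is bounded and strongly continuous in $\tau$; in particular $e^{\pm i\gau(\tau)}\Dc(h_0)=\Dc(h_0)$. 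Moreover conjugation by $e^{i\gau(\tau)}$ sends $p_i$ to $p_i-\p_i\gau$, hence $\pi_i$ to $\pi^\gau_i:=p_i+\hA^\gau_i$, while fixing $\la^i$ and $\mu$, so that on $\Dc(h_0)$ one has the algebraic identity $e^{i\gau(\tau)}H(\tau)e^{-i\gau(\tau)}=H^\gau(\tau)+\p_\tau\gau(\tau)$. Granting this, (C) is immediate ($e^{\pm i\gau}$ fix supports and $U(\tau,\si)$ respects causality and preserves $\Dc_c(\pv^2)$ by Theorem~\ref{interDirac}(C)). For (D), fix $\vph\in\Dc_c(\pv^2)$ and put $\psi_0:=e^{-i\gau(\si)}\vph\in\Dc_c(\pv^2)$; then
\[
 h_0U^\gau(\tau,\si)\vph=\big(h_0e^{i\gau(\tau)}h_0^{-1}\big)h_0U(\tau,\si)\psi_0,\qquad
 H^\gau(\tau)U^\gau(\tau,\si)\vph=\big(H^\gau(\tau)h_0^{-1}\big)h_0U^\gau(\tau,\si)\vph,
\]
and both right‑hand sides are strongly continuous by Theorem~\ref{interDirac}(D), the facts just recorded, and the boundedness and strong continuity of $H^\gau(\tau)h_0^{-1}=H_0(\tau)h_0^{-1}+W^\gau(\tau)h_0^{-1}$ (the continuity of $\tau\mapsto W^\gau(\tau)h_0^{-1}$ needs only $L^\infty_\mathrm{loc}$‑continuity of $\p_\mu\gau$, supplied by \eqref{potgau}); strong continuity of $U^\gau(\tau,\si)H^\gau(\si)\vph$ then follows by composing the jointly strongly continuous family $U^\gau$ with $\si\mapsto H^\gau(\si)\vph=(H^\gau(\si)h_0^{-1})h_0\vph$. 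Differentiating $U^\gau(\tau,\si)\vph=e^{i\gau(\tau)}U(\tau,\si)\psi_0$ by the Leibniz rule in the strong sense, using $i\p_\tau U(\tau,\si)\psi_0=H(\tau)U(\tau,\si)\psi_0$ and the algebraic identity, the two $\p_\tau\gau$ contributions cancel and one gets $i\p_\tau U^\gau(\tau,\si)\vph=H^\gau(\tau)U^\gau(\tau,\si)\vph$; the $\si$‑equation and the $C^1$ property follow similarly, and (E) drops out of the $\tau$‑equation and self‑adjointness of $H^\gau(\tau)$ exactly as in Theorem~\ref{interDirac}(E).

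The estimate bookkeeping for $W^\gau$ is routine; the part I expect to need care is the handling of unbounded factors in the conjugation argument — in particular the claim that $h_0\,e^{\pm i\gau(\tau)}h_0^{-1}$ is genuinely bounded and $\tau$‑continuous (this is exactly where the asymmetric $\zm$‑weights in \eqref{potgau2} are used) and the legitimacy of the strong‑sense Leibniz differentiation of $e^{i\gau(\tau)}U(\tau,\si)\psi_0$ despite $\p_\tau\gau$ growing like $\zm^2$; both reduce to the relative‑boundedness‑with‑respect‑to‑$h_0$ estimates already in place for $W$ and $W^\gau$, but must be argued with attention.
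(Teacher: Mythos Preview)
Your proposal is correct and follows essentially the same approach as the paper: part (i) is a direct verification of \eqref{propW1}--\eqref{propW2} for $W=\hA_\tau+\hA_i\la^i$ via \eqref{labound}, and part (ii) uses the observation that $W^\gau$ satisfies \eqref{propW2} (hence Lemma~\ref{lembound} (i)--(ii) and Theorem~\ref{op-ham} give (A)) but not \eqref{propW1}, together with the conjugation identity $e^{i\gau}H e^{-i\gau}=H-\la^i\p_i\gau=H^\gau+\p_\tau\gau$ and the invariance $e^{\pm i\gau(\tau)}\Dc_c(\pv^2)=\Dc_c(\pv^2)$ to obtain (C)--(E). You supply considerably more detail than the paper's two-line ``the remaining statements easily follow''---in particular your treatment of the boundedness and strong continuity of $h_0e^{\pm i\gau(\tau)}h_0^{-1}$ and of $H^\gau(\tau)h_0^{-1}$, and the care about Leibniz differentiation with $\p_\tau\gau$ of order $\zm^2$ on compactly supported vectors---but these are exactly the points the paper is tacitly invoking.
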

\begin{proof}
(i) The bounds \eqref{labound} are satisfied in our coordinate system.
Together with the assumed properties of $\hA_\mu$ this ensures that the
interaction term \eqref{WA} satisfies the assumptions of Theorem
\ref{interDirac}, so the thesis follows.

(ii) For (A) we note that the assumption \eqref{potgau2} implies that the
interaction term $W^\gau=\hA^\gau_\tau+\hA^\gau_i\la^i$ satisfies assumptions
imposed on $M$ in Lemma \ref{lembound} (i) and (ii) (not necessarily (iii)).
This is sufficient for the conclusion on self-adjointness of $H^\gau$
obtained as in Theorem \ref{interDirac}. Moreover, it follows from
\eqref{potgau} that $e^{i\gau(\tau)}\Dc_c(\pv^2)=\Dc_c(\pv^2)$, so (C) is
satisfied. Finally, for $\vph\in\Dc_c(\pv^2)$ we have
\begin{gather}
 i\p_\tau e^{i\gau(\tau)}\vph=-e^{i\gau(\tau)}\p_\tau \gau(\tau)\vph\,,\\
 e^{i\gau(\tau)}H(\tau)e^{-i\gau(\tau)}\vph=[H(\tau)-\la^i(\tau)\p_i\gau(\tau)]\vph\,,
\end{gather}
so the remaining statements easily follow with the use of \eqref{potgau}.
\end{proof}

\section{Scattering}\label{scattering}

We come here to our main objective in this article, scattering of the Dirac
field in an external time-dependent electromagnetic field. With the
assumptions of Theorem \ref{gengau}, augmented by some decay conditions
formulated in the two assumptions below, we shall obtain the complete
description of scattering in terms of the Cauchy surfaces of constant $\tau$,
as to be found in Theorem~\ref{scat}. The existence of the wave operators
needs only a rather simple additional Assumption \ref{asum0}, but their
unitarity and  completeness are more demanding, and they follow from
Assumption \ref{asum}. It is with regard to this latter question that we need
to discuss some further notation and properties.

It is easy to see that the operators
\begin{equation}\label{hatH}
 \mH=\tfrac{1}{2}[\la^i,\pi_i]_+=H-\mu-\hA_\tau\,,\quad
 \mH_0=\tfrac{1}{2}[\la^i,p_i]_+=H_0-\mu
\end{equation}
have similar properties as $H$ and $H_0$, in particular they are essentially
self-adjoint on $\Dh$, so $\Dc_c(\pv)=\Hc_c\cap\Dc(\pv)$ is contained in
their domains. Therefore, $\Dc_c(\pv^2)$ is contained in the domains of
$\mH^2$ and $\mH_0^2$ (as well as $H^2$ and $H_0^2$). Moreover, we note for
later use that
\begin{equation}\label{antHbe}
 [\mH,\beta]_+=[\mH_0,\beta]_+=0\,,
\end{equation}
which is a consequence of \eqref{comlabe}.

We shall need a more explicit form of $\mH^2$ below. We calculate (with
$\rho$ defined in \eqref{defro})
\begin{multline}\label{Hti2}
 \mH^2=(\pi_i\la^i+\tfrac{i}{2}\p\cdot\la)(\la^j\pi_j-\tfrac{i}2{}\p\cdot\la)\\
 =\pi_i\rho^{ij}\pi_j+\tfrac{1}{2}\pi_i[\la^i,\la^j]\pi_j
 +\tfrac{i}{2}(\p\cdot\la)\la^j\pi_j-\tfrac{i}{2}\pi_j\la^j(\p\cdot\la)
 +\tfrac{1}{4}(\p\cdot\la)^2\,.
\end{multline}
The second term on the rhs above may be written in two alternative ways:
\begin{equation}\label{pillpi}
\begin{aligned}
  \tfrac{1}{2}\pi_i[\la^i,\la^j]\pi_j&=-\tfrac{i}{2}(\p_i[\la^i,\la^j])\pi_j
  +\tfrac{1}{2}[\la^i,\la^j]\pi_i\pi_j\\
  &=\tfrac{i}{2}\pi_i(\p_j[\la^i,\la^j])+
  \tfrac{1}{2}\pi_i\pi_j[\la^i,\la^j]\,.
\end{aligned}
\end{equation}
Taking into account that $[\la^i,\la^j]$ is antisymmetric in the indices, one
can replace the product $\pi_i\pi_j$ by
$\frac{1}{2}[\pi_i,\pi_j]=-\frac{i}{2}\hF_{ij}$, and then replace
$[\la^i,\la^j]$ multiplying this expression by $2\la^i\la^j$.  Taking now one
half of the sum of the two expressions in \eqref{pillpi}, we find
\begin{equation}\label{pillpiF}
 \tfrac{1}{2}\pi_i[\la^i,\la^j]\pi_j=\tfrac{i}{4}\big[\pi_j,\p_i[\la^j,\la^i]\big]_+
 -\tfrac{i}{2}\la^i\la^j\hF_{ij}\,.
\end{equation}
For the next two terms on the rhs of \eqref{Hti2} we note
\begin{equation}\label{dll}
 \tfrac{i}{2}(\p\cdot\la)\la^j\pi_j-\tfrac{i}{2}\pi_j\la^j(\p\cdot\la)
 =-\tfrac{i}{4}[\pi_j,[\la^j,\p\cdot\la]]_+-\tfrac{1}{4}\p_j[\p\cdot\la,\la^j]_+\,,
\end{equation}
which is shown in a somewhat similar way as the former identity. The sum of
the first terms on the rhs of \eqref{pillpiF} and \eqref{dll} gives
$\tfrac{i}{4}[\pi_j,[\p_i\la^j,\la^i]]_+$. In this way we obtain
\begin{multline}
 \mH^2=\pi_i\rho^{ij}\pi_j+\tfrac{i}{4}[\p_i\la^j,\la^i]\pi_j
 +\tfrac{i}{4}\pi_j[\p_i\la^j,\la^i]\\
 -\tfrac{1}{4}\p_j([\la^j,\p\cdot\la]_+)+\tfrac{1}{4}(\p\cdot\la)^2
 -\tfrac{i}{2}\la^i\la^j\hF_{ij}\,.
\end{multline}
With the use of further notation
\begin{gather}
 \La_j=\tfrac{i}{4}(\rho^{-1})_{jk}[\p_i\la^k,\la^i]\,,\quad
 \pi_{\La i}=\pi_i+\La_i\,,\label{lapila}\\[1ex]
 Q=\tfrac{1}{4}\p_j([\la^j,\p\cdot\la]_+)-\tfrac{1}{4}(\p\cdot\la)^2\,,\quad
 N=Q+\La_i\rho^{ij}\La_j\,,\label{QN}\\[1ex]
 \Bc=\tfrac{i}{2}\la^i\la^j\hF_{ij}\,,\label{B}
\end{gather}
we can write
\begin{equation}\label{H2}
 \mH^2=\pi_{\La}\rho\pi_{\La}-N-\Bc\,,
\end{equation}
where in the first term on the rhs a symbolic notation for summation over
indices is used. A straightforward calculation shows that both $Q$ and $N$
are positive numerical functions (times the unit matrix; see Appendix
\ref{spvar}, formula \eqref{estQ} for $Q$, and then for $N$ this is obvious).
Therefore, if we further denote
\begin{equation}\label{defs}
 s=\sqrt{\rho}\,,
\end{equation}
then for $\vph\in\Dc_c(\pv^2)$ we have
\begin{equation}\label{hatHbo}
 \|\mH\vph\|\leq\big(\|s\pi_\La\vph\|^2-(\vph,\Bc\vph)\big)^{\frac{1}{2}}
 \leq \|s\pi_\La\vph\|+|(\vph,\Bc\vph)|^{\frac{1}{2}}
\end{equation}

Next denote
\begin{equation}\label{Xa}
 X=\tfrac{1}{2}\mu^{-1} \hA_i\la^i=\tfrac{1}{2}\beta a_i\la^i\,,\qquad
 a_i=(g^{\tau\tau})^{\frac{1}{2}}\hA_i\,.
\end{equation}
Below we shall need the following identity valid, with our assumptions on
$\hA_i$, on $\Dc_c(\pv)$:
\begin{equation}\label{XhatH}
 \beta[X,\mH]
 =a_i\rho^{ij}\pi_{\La j}-\tfrac{i}{2}\la^i\la^j\p_i a_j
 -\tfrac{i}{2}(\p_i\rho^{ij})a_j\,.
\end{equation}
To show this we note that $[X,\mH]=\tfrac{1}{2}\beta[\la^ia_i,\mH]_+$, write
$\mH=\la^j\pi_j-\tfrac{i}{2}\p\cdot\la$, and then the lhs of \eqref{XhatH}
takes the form
\begin{multline}
 \tfrac{1}{2}a_i[\la^i,\la^j]_+\pi_j-\tfrac{i}{2}\la^j\p_j(\la^i a_i)
 -\tfrac{i}{4}a_i [\la^i,\p_j\la^j]_+\\
 =a_i\rho^{ij}\pi_{\La j}-\tfrac{i}{2}\la^j\la^i\p_j a_i
 -\tfrac{i}{2}a_i\big(\la^j\p_j\la^i+\tfrac{1}{2}[\la^i,\p_j\la^j]_+
 +\tfrac{1}{2}[\p_j\la^i,\la^j]\big)\,,
\end{multline}
where after the equality sign we have added and subtracted the term
$a_i\rho^{ij}\La_j$. It is now easy to show that the terms in parentheses
multiplying $a_i$ sum up to $\tfrac{1}{2}\p_j[\la^j,\la^i]_+$, which ends the
proof of \eqref{XhatH}.

The spreading of the past and future is characterized in our coordinate
system by Lemma \ref{pastfut} in Appendix \ref{spvar}. Denote
\begin{equation}
 \Dc(r,\pv^2)=\{\psi\in \Dc(\pv^2)\mid \psi(\z)=0\ \text{for}\ |\z|\geq r\}\,,
\end{equation}
so that
\begin{equation}
 \Dc_c(\pv^2)=\bigcup_{r>0}\Dc(r,\pv^2)\,.
\end{equation}
We set $\tau_0=0$ in this lemma, and replace $r_0$ and $r$ by $r$ and
$r(\tau)$, respectively, so that
\begin{equation}
 r(\tau)=\rrm|\tau|+r\tm\,.
\end{equation}
Then according to this lemma, and statement (C) of Theorem \ref{interDirac},
we have
\begin{equation}\label{spread}
 \psi\in\Dc(r,\pv^2)\quad \Longrightarrow\quad
 U(\tau,0)\psi\in \Dc(r(\tau),\pv^2)\,.
\end{equation}
For a measurable function $f(\tau,\z)$ we define a semi-norm function
\begin{equation}
 \tau\mapsto\|f\|_{r,\tau}=\essup_{|\z|\leq r(\tau)}|f(\tau,\z)|\,,
\end{equation}
and then for $\psi$ as in \eqref{spread} we find
\begin{equation}\label{estsemi}
 \|f(\tau,.)U(\tau,0)\psi\|\leq\|f\|_{r,\tau}\|\psi\|\,.
\end{equation}
 Note that according to Lemma \ref{pastfut} we have
\begin{equation}
 |\z|\leq r(\tau)\ \Longrightarrow\ \zm\leq \rrm\tm+r|\tau|\,,
\end{equation}
so in that case
\begin{equation}
 |\z|\leq\zm\leq 2\rrm\tm\,.
\end{equation}

For any measurable function $k(\z)$ and $r>0$ we shall denote
\begin{equation}
 \|k\|_r=\essup_{|\z|\leq r}|k(\z)|\,.
\end{equation}
For a function $f(\tau,\z)$ we obviously have
$\|f(\tau,.)\|_r\leq\|f(\tau,.)\|_{r,\tau}$.

Our scattering theorem will apply to potentials satisfying the following
conditions of increasing restrictiveness.

\begin{asu}\label{asum0}
Potential $\hA_\mu(\tau,\z)$ satisfies the assumptions of Theorem
\ref{gengau}, and in addition the mapping
$
 \tau\mapsto \hA_i(\tau,\z)
$ is in $C^1(\mR,L^\infty_\mathrm{loc}(\mR^3))$. For each $r>0$ the following
expressions are integrable on $\mR$ with respect to $\tau$:
\begin{equation}
 \|\hA_\tau\|_r\,,\quad \tm^{-2}\|\hA_i\|_r^2\,,\quad
 \tm^{-2}\|\p_i\hA_j\|_r\,,\quad \tm^{-1}\|\p_\tau\hA_i\|_r\,.
\end{equation}
\end{asu}

\begin{asu}\label{asum}
Potential $\hA_\mu(\tau,\z)$ satisfies the assumptions of Theorem
\ref{gengau}, and in addition the mapping
$
 \tau\mapsto \hA_i(\tau,\z)
$
is in $C^1(\mR,L^\infty_\mathrm{loc}(\mR^3))$. For each $r>0$ the following
expressions are integrable on~$\mR$ with respect to $\tau$:
\begin{gather}
 \|\hA_\tau\|_{r,\tau}\,,\\
 \tm^{-2}\big(\|\hA_i\|_{r,\tau}^2+\|z^i\hA_i\|_{r,\tau}^2\big)\,,\\
 \tm^{-2}\big(\|\p_i\hA_j\|_{r,\tau}+\|z^i\p_i\hA_j\|_{r,\tau}
 +\|z^j\p_i\hA_j\|_{r,\tau}+\|z^iz^j\p_i\hA_j\|_{r,\tau}\big)\,,\\
 \tm^{-1}\big(\|\p_\tau\hA_i\|_{r,\tau}+\|z^i\p_\tau\hA_i\|_{r,\tau}\big)\,.
\end{gather}
Moreover, let $\xi:[1,\infty)\mapsto \mR$ be an appropriately chosen smooth,
positive, nondecreasing function, such that $\xi(1)=1$ and
\begin{equation}\label{xib}
 \xi'(u)\leq \frac{\kappa}{u}\xi(u)\,,\ \kappa\in(0,\tfrac{1}{2})\,,\qquad
 |\xi''(u)|\leq\frac{\con}{u}\xi(u)\,,
\end{equation}
where in the second bound the constant is arbitrary.  The following bounds
are satisfied
\begin{align}
 \|\hF_{ij}\|_{r,\tau}+\|z^i\hF_{ij}\|_{r,\tau}
 &\leq \con(r)\frac{\tm}{\xi(\tm)}\,,\\
 \|\hF_{i\tau}\|_{r,\tau}+\|z^i\hF_{i\tau}\|_{r,\tau}
 &\leq \frac{\con(r)}{\xi(\tm)}\,,
\end{align}
and for each $r>0$ the following expressions are integrable on $\mR$
\begin{equation}
 \frac{\|\xi(\zm)\hA_i\|_{r,\tau}}{\tm\xi(\tm)}\,,\qquad
 \frac{\|\xi(\zm)z^i\hA_i\|_{r,\tau}}{\tm\xi(\tm)}\,.
\end{equation}
\end{asu}
Before stating the theorem we make a comment on the function $\xi$ and
consider some additional consequences of Assumption \ref{asum}.

If one sets $\xi(u)=u^\kappa\xi_0(u)$, then the first bound in \eqref{xib} is
equivalent to $\xi_0'(u)\leq0$. It follows that
\begin{equation}\label{xiukap}
 \xi(u)\leq u^\kappa\,.
\end{equation}
Therefore, $\xi$ is a slowly increasing, non-oscillating function. Examples
include:
\begin{align}
 &\xi_1(u)=u^\kappa\,,\label{uep}\\
 &\xi_2(u)=\Big[1+\frac{\kappa}{m}\log (u)\Big]^m\,,\quad m>0\,,\label{logep}
\end{align}
where $\kappa$ is as assumed in \eqref{xib}. A particular choice of $\xi$
must guarantee the validity of the assumptions (if this is possible).

By the Schwarz inequality also the following integral is finite:
\begin{equation}
 \int_{\mR}\frac{\|\hA_i\|_{r,\tau}}{\tm^2}d\tau
 \leq \sqrt{\pi}\Big(\int_{\mR}\frac{\|\hA_i\|_{r,\tau}^2}{\tm^2}d\tau\Big)^{\frac{1}{2}}<\infty
\end{equation}
and similarly for $z^i\hA_i$. Moreover, for $\tau_2\geq\tau_1$ we have
\begin{equation}
 \frac{\hA_i(\tau_2,\z)}{\langle\tau_2\rangle}
 -\frac{\hA_i(\tau_1,\z)}{\langle\tau_1\rangle}=
 \int_{\tau_1}^{\tau_2}\Big(\frac{\p_\si\hA_i(\si,\z)}{\sm}
 -\frac{\si\hA_i(\si,\z)}{\sm^3}\Big)d\si\,,
\end{equation}
and similarly for $z^i\hA_i$. Therefore, by Assumption \ref{asum} the
functions
\begin{equation}
 \tm^{-1}\|\hA_i\|_{r,\tau}\quad \text{and}\quad
 \tm^{-1}\|z^i\hA_i\|_{r,\tau}
\end{equation}
have limits for $\tau\to\pm\infty$. Limits different from zero would
contradict other assumptions, so
\begin{equation}
 \lim_{\tau\to\pm\infty}\frac{\|\hA_i\|_{r,\tau}}{\tm}
 =\lim_{\tau\to\pm\infty}\frac{\|z^i\hA_i\|_{r,\tau}}{\tm}=0\,.
\end{equation}

For the sake of the proof of the coming theorem we note the following
estimates easily obtained with the use of formula \eqref{laspec}:
\begin{equation}\label{laA}
 |\la^i\hA_i|\leq \frac{\con}{\tm}\bigg(\frac{\tmz}{\tm}|\hA_i|+|z^i\hA_i|\bigg)\,,
\end{equation}
\begin{equation}\label{laladA}
 |\la^i\la^j\p_i\hA_j|
 \leq\frac{\con}{\tm^2}\Big[\frac{\tmz^2}{\tm^2}|\p_i\hA_j|
 +\frac{\tmz}{\tm}\big(|z^i\p_i\hA_j|+|z^j\p_i\hA_j|\big)
 +|z^iz^j\p_i\hA_j|\Big]\,.
\end{equation}

\begin{thm}\label{scat}
(i) Let $\hA_\mu$ satisfy Assumption \ref{asum0}. Then the following strong
limits exist:
\begin{gather}
 \slim_{\tau\to\pm\infty}U(0,\tau)U_0(\tau,0)=\W_\mp\,,\label{waveop}\\
 \slim_{\tau\to\pm\infty}U_\Phi(0,\tau)=U_\Phi(0,\pm\infty)
 =\Omega_{\mp}U_{0\Phi}(0,\pm\infty)\,.\label{waveopphi}
\end{gather}
(ii) Let $\hA_\mu$ satisfy Assumption \ref{asum}. Then the operators $\W_\mp$
are unitary and also the following strong limits exist:
\begin{gather}
 \slim_{\tau\to\pm\infty}U_0(0,\tau)U(\tau,0)=\W_\mp^*\,, \label{waveopad}\\
 \slim_{\tau\to\pm\infty}U_\Phi(\tau,0)=U_\Phi(\pm\infty,0)
 =U_{0\Phi}(\pm\infty,0)\Omega^*_{\mp}\,.\label{waveopphiad}
\end{gather}
\end{thm}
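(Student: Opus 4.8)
The plan is to realize all four families of limits by Cook's method and to deduce the unitarity of $\W_\mp$ and the $\Phi$-picture statements as corollaries. First the reductions: writing $U_\Phi(0,\tau)=[U(0,\tau)U_0(\tau,0)]\,U_{0\Phi}(0,\tau)$ and $U_\Phi(\tau,0)=U_{0\Phi}(\tau,0)\,[U_0(0,\tau)U(\tau,0)]$ (which follow at once from $\Phi(0)=\1$ and the definitions \eqref{HUPhi}), the bracketed families converge strongly by \eqref{waveop}, \eqref{waveopad}, while $U_{0\Phi}(0,\tau)$, $U_{0\Phi}(\tau,0)$ converge strongly to unitaries by Theorem \ref{UF0}; since a strong limit of a product of strongly convergent families, one of them uniformly bounded, exists, this gives \eqref{waveopphi}, \eqref{waveopphiad}. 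Moreover $\W_\mp$, being a strong limit of unitaries, is isometric; if $\slim_{\tau\to\pm\infty}U_0(0,\tau)U(\tau,0)$ exists on a dense set — hence, being uniformly bounded, on all of $\Hc$ — it is likewise isometric and, by passing to inner products, equals $\W_\mp^*$; thus $\W_\mp^*$ is isometric too, so $\W_\mp^*\W_\mp=\W_\mp\W_\mp^*=\1$ and $\W_\mp$ is unitary. It therefore remains only to prove that the strong limits \eqref{waveop} (under Assumption \ref{asum0}) and \eqref{waveopad} (under Assumption \ref{asum}) exist on a dense set.

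For $\vph$ in a suitable dense set ($\mathcal{F}^{-1}\vph\in C^\infty_0$ for \eqref{waveop}, $\vph\in\Dc_c(\pv^2)$ for \eqref{waveopad}), Theorems \ref{freeDiracHil} and \ref{interDirac} make the relevant maps strongly $C^1$, with derivatives $\pm iU(0,\tau)W(\tau)U_0(\tau,0)\vph$ respectively $\pm iU_0(0,\tau)W(\tau)U(\tau,0)\vph$, so by Cook's criterion it suffices to show these are norm-integrable in $\tau$. Here $W=\hA_\tau+\hA_i\la^i=\hA_\tau+2\mu X$ with $X$ from \eqref{Xa}. The scalar summand $\hA_\tau$ is harmless: by \eqref{estsemi} (and the causal support bound, Lemma \ref{pastfut}, for the free orbit) its contribution is bounded by $\|\hA_\tau\|_r\|\vph\|$ resp.\ $\|\hA_\tau\|_{r,\tau}\|\vph\|$, integrable by Assumption \ref{asum0} resp.\ \ref{asum}. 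The difficulty lies entirely in the spatial summand $\hA_i\la^i$: because $\la^i$ grows like $\tmz/\tm$ (cf.\ \eqref{laA}), the naive estimate only gives $\tm^{-1}(\|\hA_i\|+\|z^i\hA_i\|)$, which the square-integrability hypotheses alone do not render integrable.

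The device to handle the spatial part is the anticommutation $[\la^i,\beta]_+=0$ of \eqref{comlabe}, equivalently $[X,\beta]_+=0$. For \eqref{waveop} one inserts the asymptotics \eqref{asU0fi}, $U_0(\tau,0)\vph=\mp ie^{-i(\tau\pm\frac{\pi}{4})\beta}\Psi_\pm+R_\pm(\tau)$, with $\Psi_\pm$ supported in a fixed ball and $R_\pm$ of order $\tm^{-1}$ on compacts and rapidly decaying outside; the $R_\pm$-contributions and $\hA_\tau$ times the leading term are integrable by Assumption \ref{asum0} (using $\|\la^i\|_r\leq\con\tm^{-1}$, Schwarz against $\tm^{-2}$, and, in the far field, rapid decay against the polynomial growth allowed by Theorem \ref{gengau}), while the remaining term $\hA_i\la^i e^{-i(\tau\pm\frac{\pi}{4})\beta}\Psi_\pm=e^{i(\tau\pm\frac{\pi}{4})\beta}\hA_i\la^i\Psi_\pm$ carries the genuine oscillation $e^{i\tau\beta}$ and is integrated by parts in $\tau$: the boundary terms equal $\|\hA_i\la^i\Psi_\pm\|\leq\con\tm^{-1}\|\hA_i\|_r\to0$ (exactly the limit already deduced from Assumption \ref{asum0}), and in the bulk term $\p_\tau$ falls either on $\hA_i$ (then $\tm^{-1}\|\p_\tau\hA_i\|_r$ is integrable) or on the propagator, producing $H(\tau)$ acting on a fixed-ball spinor, whose norm is again controlled by the $\|\cdot\|_r$-bounds, a second integration by parts absorbing the residual oscillatory $\tm^{-1}$-order terms. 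For \eqref{waveopad} the same split is used, but the orbit is the interacting one $U(\tau,0)\vph$, supported by \eqref{spread} on $|\z|\leq r(\tau)=\rrm|\tau|+r\tm$; again $X$ anticommutes with $\beta$, so in the interaction-picture integrand it multiplies an oscillating factor, and integration by parts is carried out through the commutator identity \eqref{XhatH}. The terms so generated involve $\p_i a_j$, $\p_i\rho^{ij}$, the field strength via $\Bc=\tfrac{i}{2}\la^i\la^j\hF_{ij}$ from \eqref{B}, and — through $\|\mH U(\tau,0)\vph\|$ — the quadratic-form bound \eqref{hatHbo} together with the explicit form \eqref{H2} of $\mH^2$; these are made integrable precisely by the $\hF_{ij}$, $\hF_{i\tau}$ decay and the weighted integrability of $\hA_i$, $z^i\hA_i$ in Assumption \ref{asum}, the slowly growing weight $\xi$ of \eqref{xib} being what closes the chain of estimates by absorbing the polynomial losses coming from $\la^i\la^j$ on the growing support (cf.\ \eqref{laA}, \eqref{laladA}).

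The main obstacle is this last step, namely controlling $\hA_i\la^i U(\tau,0)\vph$ in the Cook integrand for the interacting evolution. In contrast with the free case no explicit asymptotics are available, so one must combine the $\beta$-anticommutation, integration by parts through \eqref{XhatH}, the operator inequalities \eqref{hatHbo}--\eqref{H2}, and the full strength of Assumption \ref{asum} — especially the interplay of the weight $\xi$ with the field-strength decay and the weighted bounds on $\hA_i$ — to extract an integrable bound. The remaining ingredients (the scalar part, the reductions above, and the free case under Assumption \ref{asum0}) are comparatively routine.
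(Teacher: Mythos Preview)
Your reductions between the $\Phi$-picture and the bare wave operators are fine, and for part~(ii) you identify the same ingredients the paper uses: the modifier $1+X$, the evolution identity \eqref{U0U}, the commutator \eqref{XhatH}, and the role of the weight~$\xi$. What you omit there is precisely the hard step, Lemma~\ref{scatlem}: the bound $\|\xi(\zm)^{-1}s\pi_\La U(\tau,0)\psi\|\leq\con(\psi)\,\xi(\tm)^{-1}$, obtained by a differential inequality for $\|s\pi_\La\xi(\zm)^{-1}\psi_\tau\|^2$ that uses \eqref{H2}, \eqref{hatHbo}, the sign of $\dot\rho$ in \eqref{dtro}, and the conditions \eqref{xib} on~$\xi$. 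Without it the term $a\rho\pi_\La U(\tau,0)\psi$ in the Cook integrand is not shown to be integrable, so your sketch for~(ii) stops short of a proof.

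For part~(i) there is a more structural gap. You attack \eqref{waveop} directly by Cook on $U(0,\tau)U_0(\tau,0)\vph$ with $\mathcal{F}^{-1}\vph\in C^\infty_0$ and the stationary-phase expansion \eqref{asU0fi}. But this choice of dense set puts you in a bind: with compact Fourier support, $U_0(\tau,0)\vph$ is \emph{not} compactly supported in~$\z$, so Theorem~\ref{interDirac}\,(D) (which is proved only on $\Dc_c(\pv^2)$) does not give you the Cook derivative, and your ``far-field'' appeal to ``rapid decay against polynomial growth from Theorem~\ref{gengau}'' fails because that theorem only bounds $|\hA_i|\leq C(\tau)\zm$ with $C(\tau)$ merely continuous, while Assumption~\ref{asum0} imposes only fixed-ball $\|\cdot\|_r$ conditions. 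Conversely, taking $\vph\in C^\infty_0(\mR^3)$ fixes the domain issue but destroys the compact support of the leading term~$\Psi_\pm$. The paper avoids this dichotomy by reversing your reduction: it proves \eqref{waveopphi} directly for $\psi\in\Dc(r,\pv^2)$, observing that $\Phi(\tau)\psi$ stays in the fixed ball $|\z|\leq r$. Rather than integrate by parts against the asymptotics, it differentiates the modified quantity $U(0,\tau)\big(1-\tfrac{1}{2}\beta\mH(\tau)\big)\Phi(\tau)\psi$; the algebraic identity $(1-\tfrac{1}{2}\beta\mH)\beta-H(1-\tfrac{1}{2}\beta\mH)=(\mu-\beta+\hA_\tau)(\tfrac{1}{2}\beta\mH-1)-\tfrac{1}{2}\beta\mH^2$ then yields only terms supported on $|\z|\leq r$, each directly estimable by the $\|\cdot\|_r$ integrability in Assumption~\ref{asum0}, and the correction $\tfrac{1}{2}U(0,\tau)\Phi(\tau)^*\beta\mH\psi$ vanishes in the limit since $\|\mH\psi\|=O(\tm^{-1})$ on the fixed ball.
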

\begin{proof}For the sake of the whole proof we assume that $\psi\in\Dc(r,\pv^2)$
and $\|\psi\|=1$.

(i) For $\tau\to\infty$, we prove the existence of the limit
\eqref{waveopphi}, from which the limit \eqref{waveop} follows with the use
of Theorem \ref{UF0}. The case $\tau\to-\infty$ is analogous.

We note the identity
\begin{equation}
 \big(1-\tfrac{1}{2}\beta\mH\big)\beta-H\big(1-\tfrac{1}{2}\beta\mH\big)
 =(\mu-\beta+\hA_\tau)\big(\tfrac{1}{2}\beta\mH-1\big)-\tfrac{1}{2}\beta\mH^2\,,
\end{equation}
where $\mH$ is as defined in \eqref{hatH} (to show this one eliminates $H$
with the use of \eqref{hatH} and takes into account the anticommutation
relation \eqref{antHbe}). Using it, we obtain the evolution equation
\begin{multline}
 i\p_\tau\big[U(0,\tau)\big(1-\tfrac{1}{2}\beta\mH\big)\Phi(\tau)\big]\psi\\
 =U(0,\tau)\big[-\tfrac{i}{2}\beta(\p_\tau\mH)
 +(\mu-\beta+\hA_\tau)\big(\tfrac{1}{2}\beta\mH-1\big)
 -\tfrac{1}{2}\beta\mH^2\big]\Phi(\tau)\psi\,.
\end{multline}
Taking into account the anticommutation relation \eqref{antHbe} we can write
this in the form
\begin{equation}\label{UU0}
\begin{aligned}
  i\p_\tau\big[U_\Phi(0,\tau)-\tfrac{1}{2}&U(0,\tau)\Phi(\tau)^*\beta\mH\big]\psi\\
 =-&U(0,\tau)\Phi(\tau)\big[\mu-\beta+\hA_\tau+\tfrac{1}{2}\beta\mH^2\big]\psi\\
 +\tfrac{1}{2}&U(0,\tau)\Phi(\tau)^*\beta\big[-i\dot{\mH}
 +(\mu-\beta+\hA_\tau)\mH\big]\psi\,.
\end{aligned}
\end{equation}
We estimate the terms in this equation, starting with the second term in
brackets on the lhs, and then going to the successive terms on the rhs. As
$|\z|\leq r$ on the support of $\psi$, each $\la^i$, $\p_i\la^j$ and
$\p_i\p_j\la^k$ give a bounding factor $\con(r)\tm^{-1}$, and each
$\p_\tau\la^i$ a factor $\con(r)\tm^{-2}$, which leads to an easy
straightforward estimation:
\begin{gather}
 \|\mH\psi\|\leq\frac{\con(r)}{\tm}\big(\|p_i\psi\|+\|\hA_i\|_r+1\big)\,,\\[1ex]
 \|(\mu-\beta)\psi\|
 \leq\frac{r^2}{2\tm^2}\,,\qquad \|\hA_\tau\psi\|\leq\|\hA_\tau\|_r\,,\\[1ex]
 \|\mH^2\psi\|\leq \frac{\con(r)}{\tm^2}\Big[\|p_ip_j\psi\|
 +\big(\|\hA_i\|_r+1\big)\big(\|\hA_i\|_r+1+\|p_i\psi\|\big)\Big]\,,
\end{gather}
\begin{gather}
 \|\dot{\mH}\psi\|
 \leq\frac{\con(r)}{\tm^2}\big(\|p_i\psi\|+\|\hA_i\|_r+1\big)
 +\frac{\con(r)}{\tm}\|\p_\tau\hA_i\|_r\,,\\[1ex]
 \|(\mu-\beta+\hA_\tau)\mH\psi\|
 \leq\frac{\con(r)}{\tm}\Big(\frac{r^2}{\tm^2}+\|\hA_\tau\|_r\Big)
 \big(\|p_i\psi\|+\|\hA_i\|_r+1\big)\,.
\end{gather}
Therefore, with the conditions of Assumption \ref{asum0} all the terms on the
rhs of \eqref{UU0} are integrable on $\mR$, so the strong limit of
$U_\Phi(0,\tau)\psi- \tfrac{1}{2}U(0,\tau)\Phi(\tau)^*\beta\mH\psi$ exists.
But the second term vanishes in the limit, so the thesis follows for
$\psi\in\Dc_c(\pv^2)$, and then by isometry for all $\psi\in\Hc$. (For the
integrability of $\tm^{-2}\|\hA_i\|_r$ and for vanishing of
$\tm^{-1}\|\hA_i\|_r$ one argues similarly as in the remarks following
Assumption \ref{asum}.)

(ii) We prove the existence of the limit \eqref{waveopad} for
$\tau\to\infty$, from which the limit \eqref{waveopphiad} follows. Combined
with the existence of the limit \eqref{waveop}, this also leads to unitarity
and the conjugation relation. The case $\tau\to-\infty$ is similar.

We note the identity
\begin{equation}
 (1+X)H-H_0(1+X)=\hA_\tau+WX
 +[X,\mH]\,,
\end{equation}
where $W$ and $X$ are defined in \eqref{WA} and \eqref{Xa}, respectively, and
we used the fact that
\begin{equation}
 [X,H-\mH]=[X,\mu]=-2\mu X=-\hA_i\la^i\,.
\end{equation}
Thus
\begin{multline}\label{U0U}
 i\p_\tau\big[U_0(0,\tau)(1+X)U(\tau,0)\big]\psi\\
 =U_0(0,\tau)\Big(i\p_\tau X+\hA_\tau+WX
 +[X,\mH]\Big)U(\tau,0)\psi\,.
\end{multline}
 If we can show that the norm of the rhs of \eqref{U0U} is
integrable over $[0,+\infty)$, then the strong limit of
$U_0(0,\tau)(1+X)U(\tau,0)\psi$ for $\tau\to\infty$ exists. But with the use
of formula \eqref{estsemi}, and taking into account \eqref{laA} and the value
of $g^{\tau\tau}$ to be found in Appendix \ref{spvar}, we obtain
\begin{equation}\label{Xnorm}
 \|X(\tau)U(\tau,0)\psi\|\leq\|X\|_{r,\tau}
 \leq \frac{\con}{\tm}\Big(\|\hA_i\|_{r,\tau}+\|z^i\hA_i\|_{r,\tau}\Big)\,,
\end{equation}
so this norm vanishes in the limit, which then implies the desired result.

We estimate the norms of the successive terms on the rhs of \eqref{U0U},
again with the use of \eqref{estsemi}. Differentiating formula \eqref{gla}
with respect to $\tau$ we find
\begin{multline}\label{dXnorm}
 \|(\p_\tau X)U(\tau,0)\psi\|\leq\|\p_\tau X\|_{r,\tau}
 \leq \frac{\con}{\tm^2}\Big(\|\hA_i\|_{r,\tau}+\|z^i\hA_i\|_{r,\tau}\Big)\\
 +\frac{\con}{\tm}\Big(\|\p_\tau\hA_i\|_{r,\tau}+\|z^i\p_\tau\hA_i\|_{r,\tau}\Big)\,,
\end{multline}
which is integrable. The norm of the second term is bounded by
$\|\hA_\tau(\tau)\|_{r,\tau}$, which is integrable by assumption. Next we
note that
\begin{equation}
 \beta WX=\hA_\tau\beta X-\tfrac{1}{2}(g^{\tau\tau})^{\frac{1}{2}}\hA_i\rho^{ij}\hA_j\,,
\end{equation}
so using the explicit form of $(g^{\tau\tau})^{\frac{1}{2}}\rho^{ij}$, see
\eqref{rhoex}, we estimate the norm of the third term on the rhs of
\eqref{U0U} by
\begin{equation}
 \|WX\|_{r,\tau}\leq \con\|\hA_\tau\|_{r,\tau}\|X\|_{r,\tau}
 +\frac{\con\rrm}{\tm^2}\Big(\|\hA_i\|^2_{r,\tau}+\|z^i\hA_i\|^2_{r,\tau}\Big)\,,
\end{equation}
which ensures integrability.

To estimate the norm of the fourth term we use \eqref{XhatH}, \eqref{laladA}
and \eqref{divro} to find
\begin{multline}
 \|\la^i\la^j\p_i a_j+(\p_i\rho^{ij})a_j\|_{r,\tau}
 \leq \con\frac{\rrm}{\tm^2}\|\p_i\hA_j\|_{r,\tau}\\
 +\frac{\con}{\tm^2}\Big(\|\hA_i\|_{r,\tau}+\|z^i\hA_i\|_{r,\tau}
 +\|z^i\p_i\hA_j\|_{r,\tau}+\|z^j\p_i\hA_j\|_{r,\tau}
 +\|z^iz^j\p_i\hA_j\|_{r,\tau}\Big)\,,
\end{multline}
which again is integrable.

 We are now left with the single term $a\rho\pi_\La
U(\tau,0)\psi$. As it turns out, in this case the methods applied up to now
are insufficient and it is here that we make use of the function $\xi$. We
write $\psi_\tau= U(\tau,0)\psi$ and  note that
\begin{align}
 \|a\rho\pi_{\La}\psi_\tau\|
 &\leq \|sa\xi(\zm)\|_{r,\tau}\|\xi(\zm)^{-1}s\pi_\La\psi_\tau\|\\
 &\leq \tm^{-1}\Big(\|\xi(\zm)\hA_i\|_{r,\tau}
 +\|\xi(\zm) z^i\hA_i\|_{r,\tau}\Big)\|\xi(\zm)^{-1}s\pi_\La\psi_\tau\|\,.
\end{align}
Now, the estimation of $\|\xi(\zm)^{-1}s\pi_\La\psi_\tau\|$ is the most
difficult part of the proof and we shift it to the lemma below. Substituting
its result in the above estimate and using Assumption \ref{asum} one
completes the proof of the existence of the limit $\tau\to+\infty$.
\end{proof}

\begin{lem}\label{scatlem}
Under the conditions of Assumption \ref{asum}, for $\psi\in \Dc_c(\pv^2)$ the
following estimate holds
 \begin{equation}
 \|\xi(\zm)^{-1}s\pi_\La U(\tau,0)\psi\|\leq \con(\psi)\xi(\tm)^{-1}\,.
 \end{equation}
\end{lem}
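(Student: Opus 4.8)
The plan is to control $\|\xi(\zm)^{-1}s\pi_\La\psi_\tau\|$ by a Gronwall-type argument in the evolution parameter $\tau$, using the ``energy'' of the modified momentum operator $\pi_\La$ weighted by $\xi(\zm)^{-1}$. The natural quantity to track is $f(\tau)=\|\xi(\zm)^{-1}s\pi_\La\psi_\tau\|^2=(\psi_\tau, \pi_\La\,\xi(\zm)^{-1}\rho\,\xi(\zm)^{-1}\pi_\La\,\psi_\tau)$ up to lower-order commutator corrections; by \eqref{H2} this is essentially $(\psi_\tau,\xi(\zm)^{-1}(\mH^2+N+\Bc)\xi(\zm)^{-1}\psi_\tau)$. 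The idea is that $\mH^2$ is (morally) conserved under the evolution generated by $H=\mH+\mu+\hA_\tau$ up to the interaction with $\mu+\hA_\tau$ and the weight $\xi(\zm)$, while the terms $N$ and $\Bc$ are controlled by the decay bounds on $\hF$ in Assumption~\ref{asum} together with the growth constraint \eqref{xib} on $\xi$.

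First I would differentiate $f(\tau)$ in $\tau$, using the Dirac equation \eqref{extDirac} for $\psi_\tau$ and the explicit $\tau$-dependence of $\rho$, $\la^i$, $\xi(\zm)$, $\La_i$ (all of which are smooth and, on the relevant support $|\z|\le r(\tau)$ controlled via \eqref{estsemi}, have the decay rates recorded in the statements preceding the theorem — each $\la^i,\p_i\la^j$ contributing $\con(r)\tm^{-1}$, each $\p_\tau\la^i$ contributing $\con(r)\tm^{-2}$). The commutator $[H,\pi_\La\,\xi(\zm)^{-1}\rho\,\xi(\zm)^{-1}\pi_\La]$ produces: (a) terms where $H$ meets $\mH$, which by \eqref{H2} reorganize into $\mH^2$-type expressions plus the curvature pieces $N,\Bc$ and the first-order pieces involving $\p_i\la^j$; (b) terms where $\mu+\hA_\tau$ appears, producing factors $\|\hA_\tau\|_{r,\tau}$ and $r^2\tm^{-2}$; and (c) terms where a derivative hits $\xi(\zm)$, producing a factor $\xi'(\zm)/\xi(\zm)\le\kappa\zm^{-1}\le 2\kappa\rrm^{-1}\tm^{-1}$ by \eqref{xib} and the support bound $\zm\le 2\rrm\tm$. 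The key point is that this last factor is $<\tm^{-1}$ with a coefficient strictly less than $1/2$, which is exactly what is needed for the differential inequality to close — this is why the condition $\kappa\in(0,\tfrac12)$ is imposed.

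The main obstacle, and the technical heart of the argument, will be bounding the term $(\psi_\tau,\xi(\zm)^{-1}\Bc\,\xi(\zm)^{-1}\psi_\tau)$ where $\Bc=\tfrac{i}{2}\la^i\la^j\hF_{ij}$, and similarly the $\hF_{i\tau}$ contributions coming from $[\pi_i,H]$. Here I would use \eqref{laladA}-type estimates to extract the $\tm^{-2}$ geometric decay from $\la^i\la^j$, leaving $\|\hF_{ij}\|_{r,\tau}+\|z^i\hF_{ij}\|_{r,\tau}+\dots$ which by Assumption~\ref{asum} is bounded by $\con(r)\tm/\xi(\tm)$; combined with the $\xi(\zm)^{-1}$ weights (which on the support give $\xi(\zm)^{-1}\le\xi(2\rrm\tm)^{-1}\le\con\,\xi(\tm)^{-1}$ by monotonicity and \eqref{xiukap}) this yields a contribution to $f'(\tau)$ of order $\tm^{-1}\xi(\tm)^{-2}(f(\tau)^{1/2}+\dots)$, precisely matched to the desired bound $f(\tau)\le\con\,\xi(\tm)^{-2}$. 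Once all contributions to $f'(\tau)$ are shown to be of the form $\big(a(\tau)+b(\tau)\xi(\tm)^{-2}\big)\big(1+f(\tau)\big)$ with $a$ integrable and $b$ bounded (or of the form $c(\tau)f(\tau)$ with $c$ integrable absorbing the $\xi$-derivative term with its sub-$\tfrac12$ coefficient), a Gronwall estimate gives $f(\tau)\le\con\,\xi(\tm)^{-2}$, hence the claim. I would also need to justify the formal manipulations: that $\psi_\tau\in\Dc_c(\pv^2)$ keeps $\pi_\La\psi_\tau$ and $\mH^2\psi_\tau$ well-defined with the differentiations valid in the strong sense, which follows from statements (C),(D) of Theorem~\ref{interDirac} and the essential self-adjointness of $\mH$ on $\Dh$ noted in \eqref{hatH}ff.
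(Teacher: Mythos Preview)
Your outline has the right overall shape (track $\|s\pi_\La\xi(\zm)^{-1}\psi_\tau\|^2$, differentiate, close a differential inequality), but it misses the mechanism that actually produces the decay, and without it the argument does not close.

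The coefficient you identify as multiplying $f(\tau)$ in the differential inequality is $\sim 2\kappa\tm^{-1}$, coming from the commutator of the evolution with the weight $\xi(\zm)^{-1}$. You call this ``$c(\tau)f(\tau)$ with $c$ integrable'', but $\tm^{-1}$ is \emph{not} integrable on $[0,\infty)$; a Gronwall step with that coefficient yields growth like $\tm^{2\kappa}$, not decay. What rescues the estimate in the paper is a \emph{dissipative} term you have not identified: the explicit $\tau$-derivative of $\rho$ satisfies, for $\tau>0$, the matrix inequality $\dot\rho\le -\tfrac{2\tau}{\tm^2}\rho$ (see \eqref{dtro}). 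This contributes $-\tfrac{2\tau}{\tm^2}f$ to $f'$, a negative non-integrable term that dominates the $+2\kappa\tm^{-1}f$ from the $\xi$-commutator. The condition $\kappa<\tfrac12$ is then used not where you place it, but in a Young-inequality step absorbing the cross term $2c\,f^{1/2}$: one writes $2c\,f^{1/2}\le \tfrac{1-2\kappa}{\tm}f+\tfrac{\tm}{1-2\kappa}c^2$, so that the net coefficient of $f$ becomes $b_0=\tfrac{2\tau}{\tm^2}-\tfrac{1}{\tm}$, whose integrating factor $\exp(-\!\int_0^\tau b_0)=(\tau+\tm)/\tm^2$ supplies the $\tm^{-1}$ decay. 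You will not get this from a plain Gronwall bound.

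Two smaller points. First, your monotonicity argument ``$\xi(\zm)^{-1}\le\xi(2\rrm\tm)^{-1}$'' is backwards: $\xi$ is nondecreasing, so $\xi(\zm)^{-1}\ge\xi(2\rrm\tm)^{-1}$ on the support. The paper never bounds $\xi(\zm)^{-1}$ pointwise from above; instead it uses that $u\mapsto u/\xi(u)$ is increasing, so that for the critical $\mu$-term one gets $|s\p\mu|/\xi(\zm)\le\zm/(\tm^2\xi(\zm))\le 2\rrm/(\tm\xi(\tm))$. This is precisely where the weight $\xi(\zm)^{-1}$ earns its keep (see the footnote at \eqref{estpmu}); without it the $c$-contribution from $s\p\mu$ would be $\con(r)\tm^{-1}$, giving $\tm c^2\sim\tm^{-1}$ and hence $\int_0^\tau\sm\,d_0\sim\tau$, which after division by $\tm$ yields only boundedness, not decay. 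Second, your bound $\kappa\zm^{-1}\le 2\kappa\rrm^{-1}\tm^{-1}$ also has the inequality the wrong way round; the correct $\tm^{-1}$ in $|\nu|\le\kappa\tm^{-1}$ comes from the factor $\la^i$ in $\nu=\xi(\zm)^{-1}\la^i\p_i\xi(\zm)$, not from the support bound on $\zm$.
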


\begin{proof}
We assume again that $\psi\in\Dc(r,\pv^2)$ and $\|\psi\|=1$. We observe that
the lhs of the inequality may be equivalently replaced by $\|s\pi_\La
\xi(\zm)^{-1}U(\tau,0)\psi\|$. Indeed, we have
\begin{equation}
 \big\|\big[s\pi_\La,\xi(\zm)^{-1}\big]\big\|_\infty
 = \Big\|\frac{\z\xi'(\zm)}{\tm\xi(\zm)^2}\Big\|_\infty
 \leq\Big\|\frac{\kappa\,\z}{\tm\zm\xi(\zm)}\Big\|_\infty
 \leq\frac{\kappa}{\tm}\,.
\end{equation}
Now we shall estimate the norm squared
\begin{equation}
 \|s\pi_\La\xi(\zm)^{-1}\psi_\tau\|^2
 =(\pi_\La\xi(\zm)^{-1}\psi_\tau,\rho\pi_\La\xi(\zm)^{-1}\psi_\tau)
\end{equation}
by first finding a differential inequality, and then integrating. Preparing
for that, we denote
\begin{equation}\label{defnu}
 \nu=i\xi(\zm)^{-1}[H,\xi(\zm)]=\frac{\la^i\p_i\xi(\zm)}{\xi(\zm)}
 =\frac{\xi'(\zm)}{\tm\xi(\zm)}z^i\al^i\,,
\end{equation}
with the standard notation $\al^i=\beta\gamma^i$, and observe that
\begin{equation}\label{xiH}
 \xi(\zm)^{-1} H=\mH\xi(\zm)^{-1}+(\hA_\tau+\mu-i\nu)\xi(\zm)^{-1}\,.
\end{equation}
To shorten notation we shall write $\psi_\tau^\xi=\xi(\zm)^{-1}\psi_\tau$.
Looking at the explicit form of $\Lambda_i$ \eqref{Laspec} we note that
\begin{equation}
 [\Lambda_i,\mu]=0\,,\qquad [\Lambda_i,\nu]_+=0\,.
\end{equation}
Now calculate
\begin{multline}
 \p_\tau[\pi_\La\psi_\tau^\xi]
 =-i\pi_\La\xi(\zm)^{-1} H\psi_\tau+(\p_\tau\La+\p_\tau\hA)\psi_\tau^\xi\\
 =-i\pi_\La \mH\psi_\tau^\xi
 +(\dot{\La}+\hF_{\tau .}+\p(i\nu-\mu)+2\nu\La)\psi_\tau^\xi
 -(\nu+i\hA_\tau+i\mu)\pi_\La\psi_\tau^\xi\,,
\end{multline}
and
\begin{equation}\label{difpixipsi}
\begin{aligned}
 \p_\tau(\pi_\La\psi_\tau^\xi,\rho\pi_\La\psi_\tau^\xi)
 &-(\pi_\La\psi_\tau^\xi,\dot{\rho}\pi_\La\psi_\tau^\xi)
 =-2(\pi_\La\psi_\tau^\xi,\rho\nu\pi_\La\psi_\tau^\xi)\\
 &-i(\pi_\La\rho\pi_\La\psi_\tau^\xi,\mH\psi_\tau^\xi)
 +i(\mH\psi_\tau^\xi,\pi_\La\rho\pi_\La\psi_\tau^\xi)\\
 &+2\Rp(\pi_\La\psi_\tau^\xi,\rho[\dot{\La}
 +\hF_{\tau .}+\p(i\nu-\mu)+2\nu\La]\psi_\tau^\xi)\,,
\end{aligned}
\end{equation}
where in both identities $\hF_{\tau .}$ denotes $\hF_{\tau i}$ with the index
$i$ suppressed (in the second identity summation over this index is implied).
In the first identity we have used \eqref{xiH} and commuted $\pi_\La$ with
the term $(\hA_\tau+\mu-i\nu)$.

From now on we continue the proof for $\tau\geq0$; for $\tau\leq0$ the proof
is analogous, but equation \eqref{difpixipsi} has to be multiplied by $-1$
before continuing. The first term on the rhs of \eqref{difpixipsi} is bounded
in absolute value by
\begin{equation}
 2\|\nu\|_\infty(\pi_\La\psi_\tau^\xi,\rho\pi_\La\psi_\tau^\xi)
 \leq 2\kappa\tm^{-1}\|s\pi_\La\psi_\tau^\xi\|^2\,,
\end{equation}
where we used the estimate given in \eqref{nusnu}. With the use of formula
\eqref{H2} the second line of equation \eqref{difpixipsi} takes the form
$2\Rp i(\mH\psi_\tau^\xi,(N+\Bc)\psi_\tau^\xi)$, and thanks to the estimate
\eqref{hatHbo} is bounded in absolute value by
\begin{equation}
 2\big(\|s\pi_\La\psi_\tau^\xi\|+|(\psi_\tau^\xi,\Bc\psi_\tau^\xi)|^{\frac{1}{2}}\big)\|(N+\Bc)\psi_\tau^\xi\|\,.
\end{equation}
The third line in \eqref{difpixipsi} is bounded by
\begin{equation}
 2\|s\pi_\La\psi_\tau^\xi\|
 \|s(\dot{\La}+\hF_{\tau\,.}+\p(i\nu-\mu)+2\nu\La)\psi_\tau^\xi\|\,.
\end{equation}
Finally, we observe that for $\tau>0$ we have (see \eqref{dtro})
\begin{equation}
 \dot{\rho}\leq-\frac{2\tau}{\tm^2}\rho\,,
\end{equation}
which allows us to use \eqref{difpixipsi} for the following estimate:
\begin{equation}\label{eqtri}
 \p_\tau\|s\pi_\La\psi_\tau^\xi\|^2\leq -b\|s\pi_\La\psi_\tau^\xi\|^2+2c\|s\pi_\La\psi_\tau^\xi\|+d\,,
\end{equation}
where
\begin{gather}
 b=\frac{2\tau}{\tm^2}-\frac{2\kappa}{\tm}\,,\label{b}\\[1ex]
 c=\|(N+\Bc)\psi_\tau^\xi\|+\|s(\dot{\La}+\hF_{\tau\,.}+\p(i\nu-\mu)
 +2\nu\La)\psi_\tau^\xi\|\,,\label{c}\\[2ex]
 d=2|(\psi_\tau^\xi,\Bc\psi_\tau^\xi)|^{\frac{1}{2}}\|(N+\Bc)\psi_\tau^\xi\|\,.\label{d}
\end{gather}
The second term on the rhs of \eqref{eqtri} may be estimated as follows
\begin{align}
 2c\|s\pi_\La\psi_\tau^\xi\|
 &=2\Big[\frac{1-2\kappa}{\tm}\Big]^{\frac{1}{2}}
 \|s\pi_\La\psi_\tau^\xi\|\ \Big[\frac{\tm}{1-2\kappa}\Big]^{\frac{1}{2}}c\\
 &\leq\frac{1-2\kappa}{\tm}\|s\pi_\La\psi_\tau^\xi\|^2+\frac{\tm c^2}{1-2\kappa}\,,
\end{align}
 which results in the inequality
\begin{equation}\label{eqtri1}
  \p_\tau\|s\pi_\La\psi_\tau^\xi\|^2\leq -b_0\|s\pi_\La\psi_\tau^\xi\|^2+d_0\,,
\end{equation}
where
\begin{equation}\label{b0d0}
 b_0=\frac{2\tau}{\tm^2}-\frac{1}{\tm}\,,\quad
 d_0=\frac{\tm c^2}{1-2\kappa}+d\,.
\end{equation}
We set
\begin{equation}
 \|s\pi_\La\psi_\tau^\xi\|^2=\exp\Big(-\int_0^\tau b_0(\si)d\si\Big)f(\tau)
 =\frac{\tau+\tm}{\tm^2}f(\tau)
\end{equation}
and then \eqref{eqtri1} takes the form
\begin{equation}
 \p_\tau f\leq \frac{\tm^2}{\tau+\tm}d_0(\tau)\leq \tm d_0(\tau)\,.
\end{equation}
We note that $f(0)=\|[s\pi_\La\psi_\tau^\xi]_{\tau=0}\|^2$ and find
\begin{equation}\label{spb}
 \|s\pi_\La\psi_\tau^\xi\|^2
 \leq \frac{2}{\tm}\bigg[\|[s\pi_\La\psi_\tau^\xi]_{\tau=0}\|^2
 +\int_0^\tau \sm d_0(\si)d\si\bigg]
\end{equation}
We have to estimate $d_0$ defined in \eqref{b0d0}. We start by estimating
$c$. For the terms depending on the electromagnetic field we have (we use the
form of $s$ and estimates of $\la^i$ given in Appendix \ref{spvar})
\begin{equation}\label{Best}
 \|\Bc\|_{r,\tau}\leq\frac{\con\rrm^2}{\tm^2}
 \big(\|\hF_{ij}\|_{r,\tau}+\|z^i\hF_{ij}\|_{r,\tau}\big)
 \leq \frac{\con(r)}{\tm\xi(\tm)}\,,
\end{equation}
\begin{equation}
 \|s\hF_{\tau\,.}\|_{r,\tau}
 \leq \frac{\con\rrm}{\tm}
 \big(\|\hF_{\tau i}\|_{r,\tau}+\|z^i\hF_{\tau i}\|_{r,\tau}\big)
 \leq \frac{\con(r)}{\tm\xi(\tm)}\,.
\end{equation}
For the term $s^{ij}\p_j(i\nu-\mu)$, using the estimates \eqref{esdemu} and
\eqref{esdenu} in Appendix \ref{spvar} we find\footnote{The problem of
estimation of the term $s^{ij}\p_j\mu$ is the ultimate reason for our
introduction of the function $\xi$. Without it, the bound in \eqref{estpmu}
would have the form $\con(r)\tm^{-1}$, which would be insufficient for our
application.}
\begin{equation}\label{estpmu}
 \|s\p(i\nu-\mu)\psi_\tau^\xi\|
 \leq\Big\|\frac{s\p(i\nu-\mu)}{\xi(\zm)}\Big\|_{r,\tau}
 \leq\frac{\con}{\tm^2}\Big\|\frac{\zm}{\xi(\zm)}\Big\|_{r,\tau}\leq\frac{\con(r)}{\tm\xi(\tm)}\,,
\end{equation}
where we used the fact, that both $u/\xi(u)$ as well as $\xi(u)$ are
increasing, so
\begin{equation}
 \frac{\zm}{\xi(\zm)}\leq\frac{2\rrm\tm}{\xi(2\rrm\tm)}\leq2\rrm\frac{\tm}{\xi(\tm)}\,.
\end{equation}
The estimation of the other terms in \eqref{c} uses the bounds \eqref{LaLa},
\eqref{estN} and \eqref{nusnu} and gives
\begin{equation}\label{NLaLa}
 \|N\psi_\tau^\xi\|+\|s\dot{\La}\psi_\tau^\xi\|
 +\|2\nu\La\psi_\tau^\xi\|\leq\frac{\con}{\tm^2}\,,
\end{equation}
so summing up we have
\begin{equation}
 c\leq\frac{\con(r)}{\tm\xi(\tm)}\,,\qquad \tm\, c^2\leq\frac{\con(r)}{\tm\xi(\tm)^2}\,.
\end{equation}
The use of \eqref{Best} and \eqref{NLaLa} shows that also
\begin{equation}
  d(\tau)\leq \frac{\con(r)}{[\tm\xi(\tm)]^{\frac{3}{2}}}
 \leq\frac{\con(r)}{\tm\xi(\tm)^2}\,,
\end{equation}
where we used the bound $u^\frac{1}{2}\geq\xi(u)\geq1$, see \eqref{xiukap}.
Summing up, we obtain
\begin{equation}
 d_0(\tau)\leq \frac{\con(r)}{\tm\xi(\tm)^2}\,.
\end{equation}
Now, it follows from \eqref{xib} that $u^\kappa/\xi(u)$ is an increasing
function. Therefore,
\begin{align}
 \int_0^\tau \sm d_0(\si)d\si
 &\leq  \con(r)\int_0^\tau \frac{d\si}{\xi(\sm)^2}
 \leq \con(r) \frac{\tm^{2\kappa}}{\xi(\tm)^2}\int_0^\tau\frac{d\si}{\sm^{2\kappa}}\\
 &\leq \con(r)\frac{\tm}{\xi(\tm)^2}\,.
\end{align}
This, when used in \eqref{spb}, gives
\begin{equation}
 \|s\pi_\La\psi_\tau^\xi\|^2
 \leq\frac{2\|[s\pi_\La\psi_\tau^\xi]_{\tau=0}\|^2}{\tm}
 +\frac{\con(r)}{\xi(\tm)^2}
 \leq \frac{\con(\psi)}{\xi(\tm)^2}\,,
\end{equation}
where for the second inequality we used \eqref{xiukap}.
\end{proof}

\section{Typical electromagnetic field and its special gauges}\label{typspec}

Assumption \ref{asum}, on which our main theorem on scattering \ref{scat} is
based, is rather technical and not easy for interpretation. Here we formulate
a rather typical situation met in scattering processes.\footnote{Possible
oscillating terms in the asymptotic behavior of charged currents are not
taken into account. One can expect that fields produced by such terms decay
more rapidly than those considered here; see also Discussion.} We show that
the electromagnetic field thus identified admits a gauge in which Assumption
\ref{asum} is satisfied. Moreover, there is a class of gauges which need not
satisfy this assumption, but still assure a similar asymptotic structure.

The retarded and advanced potentials are defined in terms of the source
current $J$ in standard way (as $\varphi_{\ret/\adv}$ is defined in terms of
$\rho$ in \eqref{adv} in Appendix \ref{decay}). Also, the radiated field of
the current $J$ is defined in standard way, $A_\rad=A_\ret-A_\adv$. The
Heaviside step function is denoted by $\theta$.

\begin{asu}\label{FAJ}
The Lorenz potential $A_a$ of the electromagnetic field $F_{ab}$ is given by
\begin{equation}
 A=A_\ret+A_\inc=A_\adv+A_\out\,,
\end{equation}
where $A_\ret$ is the retarded potential of a current $J$ satisfying the
assumptions listed below, and $A_\inc$ is the radiated potential of another
current $J_\inc$ with similar properties as $J$. Then also $A_\adv$ is the
advanced potential of the current~$J$, and $A_\out$ is the radiated potential
of the current $J_\out=J+J_\inc$, which has similar properties as $J$ and
$J_\inc$.

The conserved current $J(x)$ is of class $C^3$ and for some
$0<\vep<\frac{1}{2}$ satisfies the following estimates:
\begin{align}
 &|\n^\al J(x)|
 \leq\frac{\con}{(|x|+1)^{3+|\al|}}\bigg[\theta(x^2)+\frac{1}{(|x|+1)^\vep}\bigg]\,,
 & &\text{for}\quad |\al|\leq3,\label{estJ}\\[1ex]
 &|\n^\al(x\cdot\n+3) J(x)|\leq\frac{\con}{(|x|+1)^{3+|\al|+\vep}}\,,\label{esthomJ}
 & &\text{for}\quad |\al|\leq2\,.
\end{align}
The same is assumed for $J_\inc$, and then the same follows for $J_\out$. The
potential $A$ is then of class $C^3$ on the Minkowski spacetime.
\end{asu}
Note that $A$ is a linear combination of retarded and advanced potentials of
currents satisfying \eqref{estJ}. Therefore, the last statement of Assumption
\ref{FAJ} is a consequence of Lemma \ref{estlemrv} (i) in Appendix
\ref{decay}.
\begin{thm}\label{totelmga}
Let the electromagnetic field $F$ and its Lorenz potential $A$ satisfy
Assumption \ref{FAJ}. Define a new gauge by
\begin{equation}\label{gg}
 \cA(x)=A(x)-\n S(x)\,.
\end{equation}
Then the following holds.\\
 (i) For the choice
\begin{equation}
 S(x)=\gf(x)\equiv\log(\tm\zm)\,x\cdot A(x)\,,\label{specgauge}
\end{equation}
the potential \eqref{gg} fulfills Assumption \ref{asum}, so Theorems
\ref{gengau} and \ref{scat} are satisfied. \\
 (ii) Let $S(x)$ be another gauge function, such that the difference
\begin{equation}
 \gau(x)=S(x)-\gf(x)
\end{equation}
satisfies the assumptions of Theorem \ref{gengau} (ii), so that the thesis of
this theorem is true. Suppose, in addition, that there exist point-wise
limits
\begin{equation}\label{limgau}
 \lim_{\tau\to\pm\infty}\gau(\tau,\z)\equiv \gau_\pm(\z)\,.
\end{equation}
Then the potential \eqref{gg} satisfies the thesis of Theorem \ref{scat}
(but not necessarily the assumption of this theorem).\\
 (iii) In particular, the gauges defined by:
\begin{align}
 &\text{(a)}\qquad \glo(x)=\log\tm \, x\cdot A(x)\,,\label{consprop}\\
 &\text{(b)}\qquad \gtr(x)=\int_0^\tau \p_\si x(\si,\z)\cdot A(\si,\z)d\si\,,\label{spacegau}
\end{align}
are in the class defined in (ii). In case (b) one has $\hA_\tau=0$ and
\begin{equation}\label{Aperp}
 \hA_i(\tau,\z)=\hat{A}_i(0,\z)-\int_0^\tau\hF_{i\si}(\si,\z)d\si\,.
\end{equation}
\end{thm}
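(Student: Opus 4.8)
The plan is to reduce the whole statement to a verification of Assumption~\ref{asum} for the single distinguished gauge \eqref{specgauge}, and then to obtain the other gauges from that one by a gauge-factor argument built on the free asymptotics of Theorem~\ref{UF0}. The analytic input is the behaviour, in the coordinates \eqref{tauzed}, of the Lorenz potential $A$, of $F$, and of the scalar $x\cdot A$; under Assumption~\ref{FAJ} these are obtained by feeding the current bounds \eqref{estJ}--\eqref{esthomJ} into the standard integral representations of the retarded/advanced and radiation potentials (the bulk of the technical work, relegated to Appendices~\ref{decay}, \ref{elmfields}, \ref{spgava}), and I take them as given. A useful observation is that, $F$ being gauge invariant, the components $\hF_{ij}$, $\hF_{i\tau}$ are common to every gauge in play, so the $\xi$-weighted field-strength bounds in Assumption~\ref{asum} depend only on Assumption~\ref{FAJ} (their $z^i$-weighted versions requiring the finer decay structure of radiation fields); the whole role of $\gf=\log(\tm\zm)\,x\cdot A$ is to cure the long-range tail of the \emph{potential}.

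For part~(i), write $\cA=A-\n\gf$. A short computation with the dilation (Euler) vector field in our coordinates (Appendix~\ref{spvar}) shows that $\n\gf$ cancels, up to terms carrying one extra power of decay, exactly the homogeneous degree-$(-1)$ ``Coulombic'' part of $A$ along timelike directions; equivalently $x\cdot\cA$ vanishes fast inside the light cone. This is what lets $\hA_\tau$, $\hA_i$, their $z^i$-weighted versions and their $\p_z$-derivatives meet the integrability clauses of Assumption~\ref{asum}: on $|\z|\le r(\tau)$ one has $|x|\le\con\,\tm\zm$, near the light cone the radiated part contributes the slowest decay while the retarded/advanced remainder decays faster, and the margin $0<\vep<\tfrac{1}{2}$ in \eqref{esthomJ} leaves room to spare; the logarithm introduced by $\gf$ is absorbed by an appropriate choice of the slowly growing, non-oscillating function $\xi$ of the type \eqref{uep}--\eqref{logep}, which is also what makes the $\tm/\xi(\tm)$ and $\xi(\tm)^{-1}$ field-strength bounds and the last, $\xi(\zm)$-weighted, integrability requirement on $\hA_i$ and $z^i\hA_i$ close (here one uses $\xi(\zm)\le\zm^\kappa$ and $\zm\le2\rrm\tm$ on the support). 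Once Assumption~\ref{asum} is verified, Theorems~\ref{gengau} and~\ref{scat} apply and yield~(i).

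For part~(ii), set $\gau=S-\gf$ satisfying the hypotheses of Theorem~\ref{gengau}(ii). By~(i) the potential $A-\n\gf$ has a propagator $U$ with unitary wave operators $\W_\mp$, and by Theorem~\ref{gengau}(ii) the propagator for $\cA=(A-\n\gf)-\n\gau$ is $e^{i\gau(\tau)}U(\tau,\si)e^{-i\gau(\si)}$. For $\vph\in\mathcal{F}\Dh$ one writes
\begin{equation}
 e^{i\gau(0)}U(0,\tau)e^{-i\gau(\tau)}U_0(\tau,0)\vph
 = e^{i\gau(0)}\bigl[U(0,\tau)U_0(\tau,0)\bigr]\,\bigl[U_{0\Phi}(0,\tau)\,e^{-i\gau(\tau)}\,U_{0\Phi}(\tau,0)\bigr]\vph ,
\end{equation}
using that the scalar $e^{-i\gau(\tau)}$ commutes with $\Phi(\tau)$. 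The middle factor converges strongly to $\W_\mp$; in the last factor $U_{0\Phi}(\pm\infty,0)\vph$ is compactly $\z$-supported by Theorem~\ref{UF0}, $\gau(\tau,\z)\to\gau_\pm(\z)$ pointwise by \eqref{limgau}, and $|e^{-i\gau}|\equiv1$, so dominated convergence delivers the strong limit $U_{0\Phi}(0,\pm\infty)\,e^{-i\gau_\pm}\,U_{0\Phi}(\pm\infty,0)$, a unitary operator. Hence $\W^{\cA}_\mp$ exists and is unitary; the same computation performed on the adjoints gives the remaining limits \eqref{waveopad}--\eqref{waveopphiad}, so the full thesis of Theorem~\ref{scat} holds for $\cA$ although $\cA$ need not satisfy Assumption~\ref{asum}.

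For part~(iii)(a), $\glo-\gf=-\log\zm\,(x\cdot A)$: the prefactor is $\tau$-independent and smooth, $x\cdot A$ and its first $\p_\mu$-derivatives satisfy the mild regularity and polynomial-growth bounds required by \eqref{potgau}--\eqref{potgau2} (for which $A\in C^3$ with decay is enough), and $x\cdot A(\tau,\z)$ tends pointwise, as $\tau\to\pm\infty$, to the contraction of the degree-$(-1)$ asymptote of $A$ with the limiting timelike direction of $x(\tau,\z)$, so \eqref{limgau} holds and~(ii) applies. For part~(iii)(b), $\gtr=\int_0^\tau\hat A_\si(\si,\z)\,d\si$ in the Lorenz components, whence $\hA_\tau=\hat A_\tau-\p_\tau\gtr=0$ identically; combining the gauge-invariant identity $\hF_{i\si}=\p_i\hat A_\si-\p_\si\hat A_i$ with the fundamental theorem of calculus gives $\hA_i=\hat A_i(0,\z)-\int_0^\tau\hF_{i\si}(\si,\z)\,d\si$, which is \eqref{Aperp}. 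Finally $\gtr-\gf$ lies in the class of~(ii): $\hat A_\si$ and $\p_\si[\log(\sm\zm)\,x\cdot A]$ have the same leading large-$|\si|$ behaviour, a term $\sim c_\pm(\z)/\si$ with identical coefficient supplied by the radiation tail, so their difference is integrable in $\si$, the $\p_\mu$-bounds demanded by Theorem~\ref{gengau}(ii) follow, and $\gtr-\gf$ has pointwise limits as $\tau\to\pm\infty$. The main obstacle is part~(i): showing that this one explicitly prescribed gauge strips the tail of $A$ finely enough to satisfy \emph{every} clause of Assumption~\ref{asum}, in particular that the $\xi$-weighted integrals of $\hA_i$ and of the field strengths close, which is exactly where the admissible slowly growing functions $\xi$ of \eqref{xib} have to be dovetailed with the decay rates $3+|\al|$ and the extra $\vep$ of Assumption~\ref{FAJ} — and it rests on the detailed pointwise asymptotics of $A$, $F$, $x\cdot A$ gathered in Appendices~\ref{decay}--\ref{spgava}.
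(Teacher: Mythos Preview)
Your proposal is correct and follows the same overall architecture as the paper: reduce everything to verifying Assumption~\ref{asum} for the single gauge $\gf$ via the appendix estimates (Theorem~\ref{elmbo}, with the concrete choice $\xi(u)=u^\kappa$, $\kappa<\vep$), and then obtain the other gauges by a gauge-factor argument exploiting the pointwise limits $\gau_\pm$.

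Two minor tactical differences are worth flagging. In part~(ii) the paper works directly in the $\Phi$-picture: since $e^{i\gau(\tau)}$ commutes with $\Phi(\tau)$, one has $U^\gau_\Phi(\tau,0)=e^{i\gau(\tau)}U_\Phi(\tau,0)e^{-i\gau(0)}$, and the strong limit follows immediately from $U_\Phi(\pm\infty,0)$ (already known by part~(i)) together with strong convergence of the multiplication operators $e^{i\gau(\tau)}\to e^{i\gau_\pm}$; this avoids your insertion of $U_0(\tau,0)U_{0\Phi}(0,\tau)$ and the compact-support/dominated-convergence detour, which are correct but unnecessary. In part~(iii)(b) the paper does not compare $\gtr$ to $\gf$ directly but rather to $\glo$ (already placed in the class by~(iii)(a)); the difference $\gau'=\gtr-\glo$ is then handled by explicitly computing $\p_\tau\gau'$, $\p_i\p_\tau\gau'$, $\p_i\p_j\gau'$ and bounding each term with the estimates on $A_0$ and $C$ from the proof of Theorem~\ref{elmbo}, which gives the required \eqref{potgau2} bounds and the integrability of $\p_\tau\gau'$ (hence the existence of the limits~\eqref{limgau}) without the heuristic matching of leading $1/\si$ coefficients you sketch. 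Your route would also work once made precise, but the stepping-stone through $\glo$ is cleaner.
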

\begin{rem}\label{lorinv}
Assumption III, and consequently the validity of Theorem~\ref{totelmga}, is
independent of the choice of the time axis for the definition of the
foliation~\eqref{tauzed}.
\end{rem}
\begin{proof}[Proof of Thm.\ \ref{totelmga}]
(i) This potential obeys Theorem \ref{elmbo} in Appendix \ref{spgava}. As
$\cA$ is of class $C^2$ on the Min\-kow\-ski spacetime, so the assumption
\eqref{pot} of Theorem \ref{gengau} and the assumption on continuous
differentiability of the mapping $\tau\mapsto \hA_i(\tau,\z)$ in Assumption
\ref{asum} are clearly satisfied. Denote
\begin{equation}
 \xi(u)=u^\kappa\,,\qquad \kappa<\vep<\tfrac{1}{2}\,.
\end{equation}
The estimates of Theorem \ref{elmbo} imply then the following norm bounds:
\begin{gather}
 \|\hF_{i\tau}\|_\infty\,,\
 \|z^i\hF_{i\tau}\|_\infty\leq\frac{\con}{\tm}\,,\\[1ex]
 \|\hF_{ij}\|_\infty\,,\
 \|z^i\hF_{ij}\|_\infty\leq\con\,,\\[1ex]
 \|\hA_\tau\|_{r,\tau}\leq\con\Big(\frac{1+\log\tm}{\tm^{1+\vep}}
 +\frac{\log\rrm}{\tm^3}\Big)\,,\\[1ex]
 \|(1+\log\zm)^{-1}\hA_\tau\|_\infty+\|\p_i\hA_\tau\|_\infty+\|z^i\p_i\hA_\tau\|_\infty
 \leq\con\frac{1+\log\tm}{\tm^{1+\vep}}\,,\\
 \|\p_i\p_j\hA_\tau\|_\infty\leq\con(1+\log\tm)\,,\\[1ex]
  \|\hA_i\|_\infty+\|z^i\hA_i\|_\infty
  \leq\|\xi(\zm)\hA_i\|_\infty+\|\xi(\zm)z^i\hA_i\|_\infty
  \leq\con(1+\log\tm)\,,
\end{gather}
\begin{gather}
 \|\p_\tau\hA_i\|_\infty+\|z^i\p_\tau\hA_i\|_\infty
 \leq\frac{\con}{\tm}\,,\\[1ex]
 \|\p_i\hA_j\|_\infty+\|z^i\p_i\hA_j\|_\infty
 +\|z^j\p_i\hA_j\|_\infty+\|z^iz^j\p_i\hA_j\|_\infty
 \leq\con(1+\log\tm)\,.
\end{gather}
It is now easily checked that the potential satisfies assumption \eqref{pot2}
in Theorem \ref{gengau} and all the remaining bounds in Assumption
\ref{asum}, from which the thesis follows. (Note that in this case all
expressions are bounded in $L^\infty$-norm, except for $\hA_\tau$.)

(ii) Let the evolution operator $U(\tau,\si)$ refer to the potential defined
in~(i), and denote the new potential now considered by $\hA^\gau$. Following
further notation used in Theorem \ref{gengau} we have
\begin{multline}
 \lim_{\tau\to\pm\infty}U^\gau_\Phi(\tau,0)
 =\lim_{\tau\pm\infty}\Phi^*(\tau)e^{i \gau(\tau)}U(\tau,0)e^{-i\gau(0)}\\
 =\lim_{\tau\to\pm\infty}e^{i\gau(\tau)}U_\Phi(\tau,0)e^{-i\gau(0)}
 =e^{i\gau_\pm}U_\Phi(\pm\infty,0)e^{-i\gau(0)}\,,
\end{multline}
so
\begin{multline}
 \slim_{\tau\to\pm\infty}U_0(0,\tau)U^\gau(\tau,0)
 =U_{0\Phi}(0,\pm\infty)U^\gau_\Phi(\pm\infty,0)\\
 =U_{0\Phi}(0,\pm\infty)e^{i\gau_\pm}U_\Phi(\pm\infty,0)\W_\mp e^{-i\gau(0)}
 \equiv \W^\gau_\mp\,.
\end{multline}
Similarly for the limits of the conjugated operators.

(iii) Both gauges are easily seen to satisfy condition \eqref{potgau} of
Theorem \ref{gengau}; we turn to the estimates \eqref{potgau2}. In the case
(a), with $C(x)=x\cdot A(x)$, we have
\begin{equation}
 \gau(\tau,\z)=-\log\zm\,C(\tau,\z)\,,
\end{equation}
and the estimates are easily checked with the use of the results of the proof
of Theorem \ref{elmbo}. Also, it follows from the estimate of $|\p_\tau C|$
given there that $C(\tau,\z)$ has limits for $\tau\to\pm\infty$.

In the case (b), it is now sufficient to investigate the difference of the
gauge function $\gtr(x)$ as compared to the gauge function of case (a):
\begin{equation}
 \gau'(x)=\gtr(x)-\glo(x)\,.
\end{equation}
Differentiating and using the form of $\p_\tau x$ given in the proof of
Theorem \ref{elmbo} we find
\begin{gather}
 \p_\tau\gau'(\tau,\z)=\tm^{-2}\zm A_0-\log\tm\p_\tau C\,,\\
 \p_i\p_\tau\gau'(\tau,\z)=\frac{1}{\tm^2}\Big[\frac{z^i}{\zm}A_0+\zm\p_i A_0\Big]
 -\log\tm\p_i\p_\tau C\,,
\end{gather}
Now noting that $\p_i\gau(0,\z)=0$, integrating (the last term by parts) and
applying the derivative $\p_j$ we obtain (fields in the integrand depend on
$(\si,\z)$)
\begin{multline}
 \p_i\p_j\gau'(\tau,\z)
 =\int_0^\tau\Big[\frac{d^{ij}}{\zm}A_0+\frac{1}{\zm}(z^i\p_j A_0+z^j\p_i A_0)
 +\zm\p_i\p_jA_0+\si\p_i\p_j C\Big]\frac{d\si}{\sm^2}\\
 -\log\tm\p_i\p_j C\,,
\end{multline}
with $d^{ij}$ defined in \eqref{dd}.  With the use of the estimates listed in
the proof of Theorem \ref{elmbo} one finds
\begin{gather}
 |\p_\tau\gau'|\leq\con\frac{1+\log\tm}{\tm^{1+\vep}}\,,\qquad
 |\p_i\p_\tau\gau'|\leq\con\frac{1+\log\tm}{\tm^{1+\vep}}\,,\\
 |\p_i\p_j\gau'|\leq\con(1+\log\tm)\,,\qquad
 |\p_i\gau'|\leq\con\,,
\end{gather}
the fourth estimate by the integration of the second one. Thus the estimates
\eqref{potgau2} are satisfied. Finally, $\p_\tau\gau'$ is integrable on
$\mR$, so the thesis follows.
\end{proof}

\section{Discussion}\label{disc}

There are three questions we want to address in this section:
\begin{itemize}
\item[(i)] How far is the present analysis from a complete treatment of the
    Max\-well-Dirac system?
\item[(ii)] Is there a further physical selection criterion to choose a
    gauge from the class of gauges obtained in Theorem \ref{totelmga}?
\item[(iii)] Open problems.
\end{itemize}

With regard to the first of these questions we note that the form of the
charged currents producing  electromagnetic fields in Assumption \ref{FAJ}
mimics what one should expect in fully interacting theory. A possible
shortage of this assumption rests in the estimates of derivatives of the
currents, which would not be satisfied for oscillating terms in asymptotic
behavior. The Dirac field current does have such asymptotic terms, but it is
quite plausible to predict that oscillations dump the asymptotic behavior of
fields produced by them. What would be needed is an appropriately fast
vanishing of the leading oscillating asymptotic terms in the neighborhood of
the lightcone (which is a~quite reasonable prediction). This would presumably
lead to similar behavior of electromagnetic potentials as that following from
Assumption \ref{FAJ}. Moreover, we note that our Assumption \ref{asum}, on
which our analysis is based, leaves much more room for the types of
potentials than Assumption \ref{FAJ}.

Let us once more at this point recall the work by Flato et al.\cite{fst95}.
These authors do have a theorem on the evolution of the complete system, but
in a rather restricted setting and with not much control over the range of
validity. The theorem states that in the space of smooth (i.e.\ $C^\infty$)
initial data there exists a neighborhood of zero, which gives rise to the
Cauchy evolution and completeness. Our analysis is not fully developed to the
full interacting case, but gives complete results for the Dirac part of the
system, with the electromagnetic fields plausibly guessed.

Our main motivation for the present work was the expectation that the choice
of gauge does matter for asymptotic behavior and its interpretation. We have
shown that for a class of gauges the asymptotic behavior of the Dirac field
approaches that of a free field, without the need for any dynamical
corrections.

This brings us to the second question mentioned at the beginning. Theorem
\ref{totelmga} identifies a class of gauges for which the situation described
above takes place. The concrete gauge defined by \eqref{specgauge} was used
for technical reasons: this gauge satisfies Assumption \ref{asum}, and it is
the only gauge with that property among the gauges explicitly mentioned in
Theorem \ref{totelmga}. However, we want to argue now, in less precise terms
then those of the preceding sections, that a~different choice of gauges in
the given class has a definite physical interpretation when an extension, as
mentioned above, to fully interacting system is considered. Such extension,
partly based on conjectures, was considered in \cite{her95}, and we have to
briefly recall a few results of this analysis. It was found that if one uses
inside the lightcone the gauge related to the Lorenz gauge $A$ (of similar
properties as those of Assumption III) by
\begin{equation}
 \cA_\mathrm{cone}(x)=A(x)-\n\gco(x)\,,\qquad
 \gco(x)=\log\sqrt{x^2}\,x\cdot A(x)\,,
\end{equation}
or another with similar timelike asymptotic behavior, then the asymptotic
total four-mo\-men\-tum and angular momentum taken away into timelike
infinity have the same functional form as those for the free Dirac field.
Moreover, the energy-momentum radiated into null infinity is fully due to the
free outgoing electromagnetic field. However, the angular momentum going out
into the null infinity, in addition to the free radiated contribution, has a
mixed adv-out electromagnetic terms. These latter terms may be incorporated
into the free Dirac field by a change of the asymptotic limit addition to the
gauge function. As a result, both energy-momentum as well as angular momentum
are clearly separated into electromagnetic and Dirac parts.\footnote{For
explicit expressions and more extensive discussion we refer the reader to
\cite{her95}, Section~V. Here we would only like to reassure the reader that
the problems met for angular momentum in electrodynamics are accounted for in
that discussion.}\pagebreak[2] Now we would like to identify in our class of
global gauges those for which this separation may be expected. It is not
difficult to show that for points inside the future lightcone, represented by
$\la v$, with $\la>0$ and $v$ on the future unit hyperboloid, we have
\begin{equation}
 \lim_{\la\to\infty}[\glo(\la v)-\gco(\la v)]=0\
\end{equation}
for the gauge function $S_{\log}$ defined in Theorem \ref{totelmga} (iii).
Therefore, we put forward the selection criterion for the gauge functions $S$
to be used in \eqref{gg} for the definition of $\cA$:
\begin{equation}
 S(\tau,\z)=S_{\log}(\tau,\z)+\Delta S(\tau,\z)\,,
\end{equation}
where $\Delta S$ has the limit
\begin{equation}
 \Delta S(\infty,\z)=\lim_{\tau\to\infty}\Delta S(\tau,\z)
\end{equation}
which satisfies the condition of the separation of angular momentum as
described above (it is easy to see that for $\Delta S$ sufficiently regular
the rhs is equal to $\dsp\lim_{\la\to\infty}\Delta S(\la(\zm,\z))$). Similar
conditions should be applied for past infinity.

Gauges in the class thus selected have the property announced in the
introduction: $x\cdot\cA(x)$ vanishes asymptotically in timelike directions.
This may be checked easily for $S_{\log}$, and if $\Delta S(\tau,\z)$ is not
oscillating in $\tau$, the same is true for this part.

The existence and completeness of the wave operators, as indicated in Remark
\ref{lorinv}, does not depend on the choice of the time axis for the
definition of our spacetime foliation. On the other hand, and this is the
first of open problems, the precise transformation law from one inertial
observer to another needs further investigation. More generally, one can ask
what is a general class of Cauchy foliations for which the results could be
repeated, and how the results would depend on the choice in the class.

Let us end this discussion by expressing the belief that our scheme could be
applied to analogous problems on at least some of the smooth curved spacetime
backgrounds. Whether, as inquired by one of the Referees, further extension
to black hole type spacetimes would be possible, is a more speculative
question.

\appendix

\section{Transformation $K$}\label{trK}

The operator of the Lorentz rotation of Dirac spinors in the hyperplane
spanned by the timelike, future-pointing unit vectors $t$ and $n$ is given by
\begin{equation}
 K(\kappa)=\exp(\tfrac{\kappa}{2}[\ga\cdot n,\ga\cdot t])=
 \cosh(\kappa\sinh\zeta)+
 \frac{\sinh(\kappa\sinh\zeta)}{2\sinh\zeta}[\ga\cdot n, \ga\cdot t]\,,
\end{equation}
where $\kappa$ is the parameter of the rotation, $\zeta>0$ is such that
$n\cdot t=\cosh\zeta$, and the rhs is obtained by a simple calculation with
the use of the relation \mbox{$([\ga\cdot n,\ga\cdot t])^2=4(\sinh\zeta)^2$}.
Rotation of $\ga\cdot n$ gives
\begin{equation}
 K(-\kappa)\ga\cdot n K(\kappa)
 =\frac{\sinh(\zeta-2\kappa\sinh\zeta) \ga\cdot n
 +\sinh(2\kappa\sinh\zeta)\ga\cdot t}{\sinh\zeta}\,.
\end{equation}
Demanding that the coefficient at $\ga\cdot n$ vanishes we find
$\kappa_0=\zeta/(2\sinh\zeta)$ and then for $K=K(\kappa_0)$ we have
\begin{gather}
 K=\frac{1+\ga\cdot n \ga\cdot t}{\sqrt{2(1+n\cdot t)}}
 =\ga\cdot w\,\beta=K^\dagger \,,\qquad
 w=\frac{t+n}{\sqrt{(t+n)^2}}\,,\label{Kw}\\
 K^{-1}\ga\cdot n K=\ga\cdot t\,.\label{KnK}
\end{gather}

We now put $t=e_0$, and $n$ as given in \eqref{n}. With notation introduced
in Section \ref{free} we now have:
\begin{pr}\label{gK}
If $\Sigma_\tau$ are rotationally symmetric, that is $\tau=\tau(x^0,|\x|)$,
then
\begin{equation}\label{gKv}
 [\gax_K^\mu,(K^{-1}\p_\mu K)]_+=0\,.
\end{equation}
\end{pr}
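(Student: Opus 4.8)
The plan is to strip \eqref{gKv} down to a statement in the fixed Minkowski frame and then verify it from the explicit form of $K$. Since $\gax_K^\mu=K^{-1}\gax^\mu K$ and $K$ does not depend on the dummy index $\mu$, writing $\p_\mu(K^{-1})=-K^{-1}(\p_\mu K)K^{-1}$ and collecting terms gives
\begin{equation}\label{Npl1}
 [\gax_K^\mu,K^{-1}\p_\mu K]_+=K^{-1}\Big(\sum_\mu\big\{\gax^\mu,(\p_\mu K)K^{-1}\big\}\Big)K .
\end{equation}
Because $K$ depends on $x$ only through $n(x)$ one has $\p_\mu K=(\p x^b/\p\zeta^\mu)\,\n_b K$; together with $\gax^\mu=(\n_a\zeta^\mu)\ga^a$ and the chain rule $\sum_\mu(\n_a\zeta^\mu)(\p x^b/\p\zeta^\mu)=\delta_a^{\,b}$, the bracket in \eqref{Npl1} becomes $\sum_a\{\ga^a,(\n_a K)K^{-1}\}$. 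Hence it is enough to prove the Minkowski-frame identity
\begin{equation}\label{Npl2}
 \sum_a\big\{\ga^a,N_a\big\}=0,\qquad N_a:=(\n_a K)K^{-1},
\end{equation}
which no longer refers to the curvilinear coordinates or to the functions $z^i$.

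Next I would make $K$ explicit. Rotational symmetry $\tau=\tau(x^0,|\x|)$ forces $\n\tau$, hence $n$, to lie in the timelike plane spanned by $e_0$ and the radial unit vector $\hat{\x}=\x/|\x|$. Writing $n=\cosh\alpha\,e_0+\sinh\alpha\,\hat{\x}$ with $\alpha=\alpha(x^0,|\x|)$ and inserting this into \eqref{Kw}, the formula for $K$ collapses to
\begin{equation}\label{Npl3}
 K=\cosh\tfrac{\alpha}{2}+\sinh\tfrac{\alpha}{2}\,B,\qquad B=\ga^0\,(\hat{\x}\cdot\gab),\qquad B^2=\id ,
\end{equation}
so that $K^{-1}=\cosh\frac{\alpha}{2}-\sinh\frac{\alpha}{2}B$. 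Differentiating and using $B^2=\id$ to invert, with $\theta=\alpha/2$ one gets
\begin{equation}\label{Npl4}
 N_a=(\n_a\theta)\,B+\sinh\theta\cosh\theta\;\ga^0\big((\n_a\hat{\x})\cdot\gab\big)
 +\sinh^2\theta\;\big((\n_a\hat{\x})\cdot\gab\big)(\hat{\x}\cdot\gab),
\end{equation}
where $\n_0\hat{\x}=0$, $(\n_i\hat{\x})_j=|\x|^{-1}(\delta_{ij}-\hat{\x}_i\hat{\x}_j)$, and $\n_i\theta\propto\hat{\x}_i$.

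Finally I would substitute \eqref{Npl4} into \eqref{Npl2} and check that every piece vanishes. The $a=0$ contribution is $(\n_0\theta)\{\ga^0,B\}$, and the sum of the first terms $(\n_a\theta)B$ over spatial $a$ is a multiple of $\{\hat{\x}\cdot\gab,\,B\}$; both anticommutators vanish, since $(\hat{\x}\cdot\gab)^2=-\id$ and $\ga^0$ anticommutes with spatial $\ga$'s. The two remaining families carry the \emph{symmetric} tensor $\delta_{ij}-\hat{\x}_i\hat{\x}_j$ contracted against, respectively, $\{\ga^i,\ga^0\ga^j\}=-\ga^0[\ga^i,\ga^j]$ and $\{\ga^i,[\ga^j,\ga^k]\}=4\ga^{[i}\ga^j\ga^{k]}$, each of which is totally antisymmetric in the indices paired with that symmetric tensor; so both sums vanish. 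Conceptually, \eqref{Npl2} just says that the totally antisymmetrised spin connection $\omega_{[a\,bc]}$ of the field $K$ is zero — indeed $\{\ga^a,\tfrac14[\ga^b,\ga^c]\}=\ga^{[a}\ga^b\ga^{c]}$ — and the planar structure \eqref{Npl3} forced by rotational symmetry is exactly what makes that antisymmetrisation collapse.

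The reduction \eqref{Npl1}--\eqref{Npl2} is only a couple of lines, and once $N_a$ is in hand the vanishing is dictated by the Clifford relations; so I expect the only real work to be the passage from \eqref{Kw} to \eqref{Npl4} and the bookkeeping of the index symmetries in the last step. One should also keep in mind that $[\gax_K^\mu,K^{-1}\p_\mu K]_+$ in \eqref{gKv} carries the implicit summation over $\mu=\tau,1,2,3$.
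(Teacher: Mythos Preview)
Your argument is correct. The reduction \eqref{Npl1}--\eqref{Npl2} to the Minkowski frame is exactly the paper's first move. From there, however, the executions diverge: the paper uses the form $K=\ga\cdot w\,\beta$ with $w=(e_0+n)/|e_0+n|$ and shows in one line that
\[
\beta\,[\ga_K^a,K^{-1}\n_a K]_+\,\beta
=2\,\ga^b\ga^a\ga^c\,w_{[b}\n_a w_{c]}\,,
\]
so the whole claim reduces to the vanishing of the totally antisymmetrised tensor $w_{[b}\n_a w_{c]}$. Rotational symmetry gives $n_a=(k_1,k_2 x^i)$ with $k_1,k_2$ functions of $x^0,|\x|$, and that form makes $(e_0+n)_{[b}\n_a n_{c]}=0$ by inspection. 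You instead pass to the exponential parametrisation $K=\cosh\theta+\sinh\theta\,B$, compute $N_a=(\n_aK)K^{-1}$ explicitly, and kill each of the three pieces of $\{\ga^a,N_a\}$ by separate Clifford identities. Both routes exploit the same geometric fact---the boost is planar, so its ``spin connection'' has no totally antisymmetric part---but the paper packages this into a single antisymmetrisation, while you verify it term by term. One small remark: in your last step the relevant anticommutator is $\{\ga^i,\ga^j\ga^k\}=2\ga^{[i}\ga^j\ga^{k]}-2\delta^{jk}\ga^i$, not $\{\ga^i,[\ga^j,\ga^k]\}$; the extra $\delta^{jk}\ga^i$ piece is harmless since $(\n_i\hat{\x})\!\cdot\!\hat{\x}=0$, but it should be mentioned.
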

\begin{proof}
The lhs of \eqref{gKv} is coordinate independent, so we can use Minkowski
coordinates. If we use $K$ in the form \eqref{Kw} then we find
\begin{equation}
 \beta[\ga_K^a,(K^{-1}\n_a K)]_+\beta
 =(\ga^b\ga^a\ga^c-\ga^c\ga^a\ga^b)w_b\n_a w_c
 =2\ga^b\ga^a\ga^c w_{[b}\n_a w_{c]}\,,
\end{equation}
where the second equality follows from the easily seen complete antisymmetry
of the expression in gamma matrices following the first equality sign. Using
the form of $w$ given in \eqref{Kw} we find
\begin{equation}\label{wdw}
 w_{[b}\n_a w_{c]}=[(t+n)^2]^{-1}(t_{[b}+n_{[b})\n_a n_{c]}\,.
\end{equation}
If the assumption of the proposition holds, then $n_a=(k_1,k_2x^i)$, where
$k_1$, $k_2$ are functions of $x^0$ and $|\x|$. In consequence, the rhs of
\eqref{wdw} vanishes and the thesis follows.
\end{proof}

\section{An open system}\label{open}

We discuss here a class of open systems for which the (general) Schr\"odinger
equation may be defined. The main tool for the discussion of this Section is
the commutator theorem, as formulated in Thm.\,X.37 in
\cite{rs79II}.\footnote{Unlike in \cite{her19}, here we make no use of the
Kato theorem.}

\subsection{Hamiltonians}\label{open-hamilt}

\begin{thm}\label{op-ham}
 Let $h_0\geq1$ be a self-adjoint operator, for which the dense subspace
     $\Dc\subseteq\Dc(h_0)$ is a~core. Suppose that for $\tau\in\mR$ the
     following holds:
 \begin{itemize}
 \item[(a)]  $h(\tau)$ is a family of symmetric operators on $\Dc$,

 \item[(b)] the commutator $[h(\tau),h_0]$, defined originally as a
     quadratic form on $\Dc$, extends to an operator on $\Dc$,
 \item[(c)] for $\psi\in\Dc$ the maps $\tau\mapsto h(\tau)\psi$,
     $\tau\mapsto [h(\tau),h_0]\psi$ are strongly continuous and the
     following bounds are satisfied:
\begin{align}
 \|h(\tau)\psi\|&\leq c_1(\tau)\|h_0\psi\|\,,\label{hbound}\\
 |(\psi,[h(\tau),h_0]\psi)|&\leq c_2(\tau) (\psi,h_0\psi)\,,\label{hhbounds1}\\
 \|[h(\tau),h_0]\psi\|&\leq c_3(\tau)\|h_0\psi\|\,,\label{hhbounds2}
\end{align}
where $c_i(\tau)$ are continuous functions.
\end{itemize}
Denote by the same symbols the closures of the operators $h(\tau)$ and
$[h(\tau),h_0]$.

Then the following holds:
\begin{itemize}
\item[(A)] $\Dc(h_0)\subseteq\Dc(h(\tau))$,
    $\Dc(h_0)\subseteq\Dc([h(\tau),h_0])$, and $h(\tau)$ are essentially
    self-adjoint on $\Dc$ and on each core of $h_0$,
\item[(B)] operators $h(\tau)h_0^{-1}$, $h_0^{-1}h(\tau)$,
    $[h(\tau),h_0]h_0^{-1}$ and $h_0^{-1}[h(\tau),h_0]$ extend to bounded
    operators, strongly continuous with respect to $\tau$, and
\begin{equation}
 \begin{aligned}\label{hhinv}
 \|h(\tau)h_0^{-1}\|&=\|h_0^{-1}h(\tau)\|\leq c_1(\tau)\,,\\
 \|[h(\tau),h_0]h_0^{-1}\|
 &=\|h_0^{-1}[h(\tau),h_0]\|\leq c_3(\tau)\,,
\end{aligned}
\end{equation}
\item[(C)] statement (c) remains true for all $\psi\in\Dc(h_0)$.
\end{itemize}
\end{thm}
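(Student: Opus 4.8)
The plan is to reduce assertion (A), for each fixed $\tau$, to the commutator theorem (Thm.\,X.37 in \cite{rs79II}) with $N=h_0$ and $A=h(\tau)$, and then to read off (B) and the remaining part of (C) from the relative bounds \eqref{hbound}--\eqref{hhbounds2}. The one point requiring attention is that the hypotheses of (c) are imposed only on the core $\Dc$, whereas the commutator theorem and the statements (B), (C) live on $\Dc(h_0)$ and on all of $\Hc$; so the first task is a clean extension argument.

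First I would extend $h(\tau)$ and $[h(\tau),h_0]$ from $\Dc$ to $\Dc(h_0)$. Being symmetric on $\Dc$, $h(\tau)$ is closable, and the bound \eqref{hbound} says its graph norm is dominated by that of $h_0$ on $\Dc$; since $\Dc$ is a core for $h_0$, a standard approximation along $h_0$-graph-convergent sequences shows that the closure of $h(\tau)$ has domain containing $\Dc(h_0)$ and still satisfies $\|h(\tau)\psi\|\leq c_1(\tau)\|h_0\psi\|$ there. The operator $[h(\tau),h_0]$ is anti-symmetric on $\Dc$ (a consequence of the symmetry of $h(\tau)$ and $h_0$), hence closable, and the bound \eqref{hhbounds2} likewise puts $\Dc(h_0)$ into the domain of its closure with $\|[h(\tau),h_0]\psi\|\leq c_3(\tau)\|h_0\psi\|$. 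Finally the form bound \eqref{hhbounds1} passes to $\Dc(h_0)$ because on $\Dc$ one has $(\psi,[h(\tau),h_0]\psi)=(h(\tau)\psi,h_0\psi)-(h_0\psi,h(\tau)\psi)$, and both sides are continuous along an $h_0$-graph-convergent sequence. This yields the domain inclusions of (A) and the three bounds of (C) on $\Dc(h_0)$, and the hypotheses of the commutator theorem are now met on $\Dc(h_0)$; it gives essential self-adjointness of $h(\tau)$ on $\Dc(h_0)$, hence on every core of $h_0$, in particular on $\Dc$. We relabel the self-adjoint closure, and the closure of the commutator, by the same symbols.

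For (B), since $h_0\geq1$ the operator $h_0^{-1}$ is bounded with norm $\leq1$ and its range is $\Dc(h_0)$; hence $h_0^{-1}\phi\in\Dc(h(\tau))$ for every $\phi$, and $\|h(\tau)h_0^{-1}\phi\|\leq c_1(\tau)\|\phi\|$, so $h(\tau)h_0^{-1}$ is bounded with norm $\leq c_1(\tau)$, and the same argument gives $\|[h(\tau),h_0]h_0^{-1}\|\leq c_3(\tau)$. The operators $h_0^{-1}h(\tau)$ and $h_0^{-1}[h(\tau),h_0]$ I would treat by duality: using that $h(\tau)$ is self-adjoint and $h_0^{-1}$ self-adjoint (and that $[h(\tau),h_0]$ is closed and anti-symmetric, so its domain lies in that of its adjoint), their adjoints extend the everywhere-defined bounded operators $h(\tau)h_0^{-1}$ and $-[h(\tau),h_0]h_0^{-1}$ and therefore coincide with them, which gives boundedness and the stated equalities of norms. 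As for strong continuity in $\tau$: for $\psi\in\Dc(h_0)$ the maps $\tau\mapsto h(\tau)\psi$ and $\tau\mapsto[h(\tau),h_0]\psi$ are strongly continuous, because one approximates $\psi$ in the $h_0$-graph norm by $\psi'\in\Dc$ and the errors are controlled by $c_1(\tau)\|h_0(\psi-\psi')\|$ and $c_3(\tau)\|h_0(\psi-\psi')\|$, which are uniformly small on compact $\tau$-intervals since the $c_i$ are continuous; this is precisely the last remaining assertion of (C). Strong continuity of the four bounded products then follows: on the dense set $\Dc(h_0)$ (the range of $h_0^{-1}$) it reduces to the continuity just established, and the local uniform bound by $c_i(\tau)$ extends it to all of $\Hc$.

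The computations are entirely routine; the main obstacle is bookkeeping, namely carrying out the closure/extension of the first step cleanly. The subtlest item is the strong continuity of $h_0^{-1}h(\tau)$, which does \emph{not} follow formally from that of its bounded adjoint $h(\tau)h_0^{-1}$ (strong continuity is not inherited by adjoints) and must instead be obtained directly from the strong continuity of $\tau\mapsto h(\tau)\psi$ on $\Dc(h_0)$ together with the uniform norm bound.
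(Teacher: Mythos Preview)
Your proposal is correct and follows essentially the same route as the paper: extend the bounds \eqref{hbound}--\eqref{hhbounds2} from the core $\Dc$ to $\Dc(h_0)$ by closure, invoke the commutator theorem X.37 of \cite{rs79II} for essential self-adjointness, obtain the norm bounds \eqref{hhinv} directly and the reversed products by duality, and get strong continuity from assumption (c) plus the locally uniform bounds. The only cosmetic difference is that you first establish (C) on $\Dc(h_0)$ and then deduce the strong continuity in (B), whereas the paper treats (B) directly via an $\ep$-approximation from $\Dc$; your ordering is in fact slightly cleaner, since once continuity of $\tau\mapsto h(\tau)\psi$ holds on all of $\Dc(h_0)$ the continuity of $h(\tau)h_0^{-1}$ is immediate without any further approximation.
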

\begin{proof}
As $\Dc$ is a core for $h_0$, the domains inclusions in (A) and the extension
of inequalities \eqref{hbound}--\eqref{hhbounds2} to $\psi\in\Dc(h_0)$ follow
immediately, if we denote the closures of the operators $h(\tau)$ and
$[h(\tau),h_0]$ by the same symbols. In addition, the essential
self-adjointness of $h(\tau)$ follows from \eqref{hbound} and
\eqref{hhbounds1} by the commutator theorem, Thm.\,X.37 in \cite{rs79II}.

The bounds \eqref{hhinv} for the operators $h(\tau)h_0^{-1}$ and
$[h(\tau),h_0]h_0^{-1}$ follow now immediately. The operators
$h_0^{-1}h(\tau)$ and $h_0^{-1}i[h(\tau),h_0]$, initially densely defined on
$\Dc(h_0)$, extend to conjugates of $h(\tau)h_0^{-1}$ and
$i[h(\tau),h_0]h_0^{-1}$, respectively, so all the bounds \eqref{hhinv} are
satisfied.

Strong continuity in $\tau$ of the operators $h_0^{-1}h(\tau)$ and
$h_0^{-1}[h(\tau),h_0]$ follows immediately from the bounds \eqref{hhinv} and
the assumption (c). For the continuity of $h(\tau)h_0^{-1}$ we note what
follows. For $\psi\in\Hc$ the vector $h_0^{-1}\psi$ is in $\Dc(h_0)$, so for
each $\ep>0$ there exist $\vph\in\Dc$ (a core for $h_0$) such that
$\|\psi-h_0\vph\|\leq\ep$. We estimate
\begin{equation}
 \|h(\tau)h_0^{-1}\psi-h(\si)h_0^{-1}\psi\|
 \leq \ep[c_1(\tau)+c_1(\si)]+\|h(\tau)\vph-h(\si)\vph\|\,,
\end{equation}
so the strong continuity follows. The case of $[h(\tau),h_0]h_0^{-1}$ is
analogous. This completes the proof of (B), and its consequence is the
statement of strong continuity of maps in (C).
\end{proof}

\subsection{Evolution}\label{open-evol}

\begin{thm}\label{op-ev}
 Let all the assumptions of Theorem \ref{op-ham} be satisfied. Moreover, let
 $(\tau,\si)\mapsto u(\tau,\si)$ be a strongly continuous
 unitary evolution system in~$\Hc$:
 \begin{equation}\label{evolu}
 u(\tau,\rho)u(\rho,\si)=u(\tau,\si)\,,\quad u(\tau,\tau)=1\,,
 \end{equation}
such that $u(\tau,\si)\Dc=\Dc$.
 Suppose that for $\psi\in\Dc$:
\begin{itemize}
\item[(a)]  the maps $(\tau,\si)\mapsto h(\tau)u(\tau,\si)\psi$ and
    $(\tau,\si)\mapsto h_0u(\tau,\si)\psi$ are strongly continuous,
\item[(b)] the map $(\tau,\si)\mapsto u(\tau,\si)\psi$ is of class $C^1$ in
    the strong sense and the following equations hold:
\begin{equation}\label{schrD}
 i\p_\tau u(\tau,\si)\psi=h(\tau)u(\tau,\si)\psi\,,\quad
 i\p_\si u(\tau,\si)\psi=-u(\tau,\si)h(\si)\psi\,,
\end{equation}
\end{itemize}
Then the following holds:
\begin{itemize}
\item[(A)] the operator $h_0u(\tau,\si)h_0^{-1}$ is bounded, with the norm
\begin{equation}
    \|h_0u(\tau,\si)h_0^{-1}\|\leq \exp\Big[\con\int_\si^\tau c_3(\rho)d\rho\Big]\,,
\end{equation}
so in particular
\begin{equation}
 u(\tau,\si)\Dc(h_0)=\Dc(h_0)\,;
 \end{equation}
\item[(B)] the map $(\tau,\si)\mapsto h_0u(\tau,\si)h_0^{-1}$ is strongly
    continuous;
\item[(C)] all statements of assumptions (a) and (b) remain true for all
    $\psi\in\Dc(h_0)$.
\end{itemize}
\end{thm}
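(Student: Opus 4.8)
The plan is to derive first an a priori bound of the form $\|h_0 u(\tau,\si)\psi\|\le \exp\big[\con\big|\int_\si^\tau c_3(\rho)d\rho\big|\big]\,\|h_0\psi\|$ for $\psi\in\Dc$, and then to read off (A)--(C) from it together with Theorem~\ref{op-ham}(B). For the a priori bound one cannot differentiate $\|h_0u(\tau,\si)\psi\|^2$ directly, so I would regularise: fix $\psi\in\Dc$, put $\psi_\tau=u(\tau,\si)\psi\in\Dc$, and introduce the bounded self-adjoint operators $J_\ep=h_0(1+\ep h_0)^{-1}$, $\ep>0$, which commute with $h_0$, satisfy $\|J_\ep\phi\|\le\|h_0\phi\|$ on $\Dc(h_0)$, and converge strongly to $h_0$ on $\Dc(h_0)$. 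Since $J_\ep$ is bounded and $\tau\mapsto\psi_\tau$ is strongly $C^1$ with $i\p_\tau\psi_\tau=h(\tau)\psi_\tau$ (assumptions (a),(b)), the scalar function $g_\ep(\tau)=\|J_\ep\psi_\tau\|^2$ is $C^1$ with $g_\ep'(\tau)=2\Ip\big(J_\ep h(\tau)\psi_\tau,\,J_\ep\psi_\tau\big)$. Now $J_\ep\psi_\tau\in\Dc(h_0)\subseteq\Dc(h(\tau))$ by Theorem~\ref{op-ham}(A), so $(h(\tau)J_\ep\psi_\tau,J_\ep\psi_\tau)$ is real; inserting $J_\ep h(\tau)\psi_\tau=h(\tau)J_\ep\psi_\tau+[J_\ep,h(\tau)]\psi_\tau$ gives
\[
 g_\ep'(\tau)=2\Ip\big([J_\ep,h(\tau)]\psi_\tau,\,J_\ep\psi_\tau\big)\,.
\]
The key algebraic identity is $[J_\ep,h(\tau)]=(1+\ep h_0)^{-1}[h_0,h(\tau)](1+\ep h_0)^{-1}$ (first checked on a suitable domain, then extended since both sides are bounded, using Theorem~\ref{op-ham}); writing $[h_0,h(\tau)](1+\ep h_0)^{-1}=-\big([h(\tau),h_0]h_0^{-1}\big)J_\ep$ and using $\|(1+\ep h_0)^{-1}\|\le1$ and $\|[h(\tau),h_0]h_0^{-1}\|\le c_3(\tau)$ (Theorem~\ref{op-ham}(B)), one obtains $|g_\ep'(\tau)|\le2c_3(\tau)g_\ep(\tau)$. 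Gronwall's inequality then yields $g_\ep(\tau)\le\|h_0\psi\|^2\exp\big[\con\big|\int_\si^\tau c_3(\rho)d\rho\big|\big]$, and letting $\ep\to0$ (so $J_\ep\psi_\tau\to h_0\psi_\tau$) gives the asserted bound for every $\psi\in\Dc$.

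Since $\Dc$ is a core for $h_0\ge1$, the subspace $h_0\Dc$ is dense in $\Hc$ (given $\chi\in\Hc$, approximate $h_0^{-1}\chi$ by $\psi_n\in\Dc$ in the graph norm, then $h_0\psi_n\to\chi$). Hence the a priori bound shows that $h_0u(\tau,\si)h_0^{-1}$, initially defined on $h_0\Dc$, extends to a bounded operator with exactly the stated norm, which is the quantitative content of (A); the invariance $u(\tau,\si)\Dc(h_0)=\Dc(h_0)$ follows by approximating $\phi\in\Dc(h_0)$ by $\psi_n\in\Dc$ in the $h_0$-graph norm, noting that $h_0u(\tau,\si)\psi_n$ is then Cauchy and $h_0$ is closed, and applying the same argument to $u(\si,\tau)$. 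For (B), on the dense set $h_0\Dc$ the map $(\tau,\si)\mapsto h_0u(\tau,\si)h_0^{-1}(h_0\psi)=h_0u(\tau,\si)\psi$ is strongly continuous by assumption (a), while the family $\{h_0u(\tau,\si)h_0^{-1}\}$ is uniformly bounded on compact parameter sets because $c_3$ is continuous; a standard $3\ep$-argument then gives strong continuity of $(\tau,\si)\mapsto h_0u(\tau,\si)h_0^{-1}$ everywhere.

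For (C), let $\psi\in\Dc(h_0)$. Then $h_0u(\tau,\si)\psi=\big(h_0u(\tau,\si)h_0^{-1}\big)(h_0\psi)$ is strongly continuous by (B), and $h(\tau)u(\tau,\si)\psi=\big(h(\tau)h_0^{-1}\big)\big(h_0u(\tau,\si)\psi\big)$ is strongly continuous since $h(\tau)h_0^{-1}$ is bounded and strongly continuous in $\tau$ (Theorem~\ref{op-ham}(B)). The $C^1$ property and the equations \eqref{schrD} are obtained by approximating $\psi$ by $\psi_n\in\Dc$ in the $h_0$-graph norm: by assumption (b), $u(\tau,\si)\psi_n$ is $C^1$ with $\p_\tau u(\tau,\si)\psi_n=-ih(\tau)u(\tau,\si)\psi_n$ and $\p_\si u(\tau,\si)\psi_n=iu(\tau,\si)h(\si)\psi_n$; from $\|h(\tau)u(\tau,\si)(\psi_n-\psi_m)\|\le c_1(\tau)\|h_0u(\tau,\si)(\psi_n-\psi_m)\|\le c_1(\tau)\exp[\,\cdot\,]\,\|h_0(\psi_n-\psi_m)\|$ and $\|u(\tau,\si)h(\si)(\psi_n-\psi_m)\|\le c_1(\si)\|h_0(\psi_n-\psi_m)\|$ one gets uniform convergence of the derivatives on compact parameter sets, so the limit is $C^1$ with the expected derivatives, and the equations persist.

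The hard part is the a priori estimate of the first paragraph: because we do not yet know $h(\tau)u(\tau,\si)\psi\in\Dc(h_0)$, the map $\tau\mapsto h_0u(\tau,\si)\psi$ is a priori merely strongly continuous and need not be differentiable, so $\|h_0u(\tau,\si)\psi\|^2$ cannot be differentiated directly. Introducing the regularisers $J_\ep=h_0(1+\ep h_0)^{-1}$, together with the clean commutator identity $[J_\ep,h(\tau)]=(1+\ep h_0)^{-1}[h_0,h(\tau)](1+\ep h_0)^{-1}$, is precisely what makes the Gronwall estimate rigorous and, crucially, uniform in $\ep$, so that the limit $\ep\to0$ may be taken.
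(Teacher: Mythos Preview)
Your proof is correct and follows essentially the same strategy as the paper's: regularise $h_0$ by a bounded function, derive a Gronwall inequality for the regularised quantity, and let the regularisation parameter tend to zero; then use density of $\Dc$ in $\Dc(h_0)$ for (B) and (C). The only cosmetic difference is the choice of regulariser: you use $J_\ep=h_0(1+\ep h_0)^{-1}$ and work with $\|J_\ep\psi_\tau\|^2$, whereas the paper uses $h_0^2(1+\ep h_0^2)^{-1}$ directly as a quadratic form; both yield the same differential inequality $|g_\ep'|\le 2c_3\,g_\ep$. For part (C) the paper passes to the integral form of the Schr\"odinger equation and then differentiates, while you approximate the derivatives directly and invoke uniform convergence; the two arguments are equivalent.
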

\begin{proof}
To show (A) we use a standard trick.\footnote{See e.g.\ Proposition B.3.3 in
\cite{der97}.} It is easy to see that
\begin{equation}
 \Big[\frac{h_0^2}{1+\ep h_0^2},h(\tau)\Big]
 =\frac{h_0}{1+\ep h_0^2}\big[[h(\tau),h_0],h_0^{-1}\big]_+\frac{h_0}{1+\ep h_0^2}\,,
\end{equation}
so using the second bound in \eqref{hhinv} we have for $\psi\in\Dc$
\begin{equation}
 \p_\tau\Big(u(\tau,\si)\psi,
 \frac{h_0^2}{1+\ep h_0^2}u(\tau,\si)\psi\Big)
 \leq2 c_3(\tau)\Big(u(\tau,\si)\psi,
 \frac{h_0^2}{1+\ep h_0^2}u(\tau,\si)\psi\Big)\,,
\end{equation}
By Gronwall's inequality we obtain
\begin{equation}
 \Big\|\frac{h_0}{(1+\ep h_0^2)^\frac{1}{2}}u(\tau,\si)\psi\Big\|
 \leq e^{\int_\si^\tau c_3(\rho)d\rho}\Big\|\frac{h_0}{(1+\ep h_0^2)^\frac{1}{2}}\psi\Big\|\,.
\end{equation}
As $h_0$ is closed, letting $\ep\searrow 0$ we show that
$u(\tau,\si)\psi\in\Dc(h_0)$ for $\psi\in\Dc$, and
\begin{equation}
 \|h_0u(\tau,\si)\psi\|\leq e^{\int_\si^\tau c_3(\rho)d\rho}\|h_0\psi\|\,.
\end{equation}
But $\Dc$ is a core for~$h_0$, so this inequality extends to $\Dc(h_0)$.
Setting $\psi=h_0^{-1}\chi$, with arbitrary $\chi\in\Hc$, we obtain (A).

For statement (B) we note that for $\psi\in\Hc$ and $\vph\in\Dc$ such that
\mbox{$\|\psi-h_0\vph\|\leq\ep$} (as in the proof of Theorem \ref{op-ham}) we
have
\begin{multline}
 \|h_0[u(\tau,\si)-u(\tau',\si')]h_0^{-1}\psi\|\\
 \leq\|h_0[u(\tau,\si)-u(\tau',\si')]h_0^{-1}\|\ep +\|h_0u(\tau,\si)\vph-h_0u(\tau',\si')\vph\|\,,
\end{multline}
so the map $(\tau,\si)\mapsto h_0u(\tau,\si)h_0^{-1}$ is strongly continuous.

The extension of~(a) to all $\psi\in\Dc(h_0)$ now follows from the present
statement (B), and the statement (B) of Theorem \ref{op-ham}. For the
extension of (b) we write the integral forms of equations \eqref{schrD} for
$\psi\in\Dc$
\begin{equation}
 u(\tau,\si)\psi-\psi=-i\int_\si^\tau h(\rho)u(\rho,\si)\psi\,d\rho
 =-i\int_\si^\tau u(\rho,\si)h(\rho)\psi\,d\rho\,.
\end{equation}
But we already know that both integrands extend to continuous functions of
$\rho$ for $\psi\in\Dc(h_0)$, so this extension followed by differentiation
leads to the extended equations \eqref{schrD}.
\end{proof}

\section{A lemma}\label{lemma}

Here we state a result to be applied in Sections \ref{free} and
\ref{evolelmg}. The lemma below enables the application of Theorem
\ref{op-ham} to our system.

\begin{lem}\label{lembound}
Let $L^i(\z)$, $i=1,2,3$, and $M(\z)$ be four $n\times n$ hermitian matrix
functions on $\mR^3$, and denote
\begin{equation}
 h=\tfrac{1}{2}(L^ip_i+p_iL^i)+ M\,,\quad
 h_0=\tfrac{1}{2}(\pv^2+\z^2)\,.
\end{equation}
 Then for some constants $C_i$ ($i=1,2,3$)
and all $\vph\in C_0^\infty(\mR^3, \mC^n)$ the following inequalities are
satisfied:\\
 (i) if the functions $\zm^{-1}L^i(\z)$, $\p_iL^j(\z)$ and
$\zm^{-2}M(\z)$ are in $L^\infty(\mR^3)$, then
\begin{equation}\label{hh0}
 \|h\vph\|\leq C_1\|h_0\vph\|\,;
\end{equation}
 (ii) if in addition $\p_i\p_jL^k(\z)$ and $\zm^{-1}\p_i
 M(\z)$ are in $L^\infty(\mR^3)$, then
\begin{equation}
 |(h\psi,h_0\vph)-(h_0\psi,h\vph)|
 \leq C_2\|h_0^{\frac{1}{2}}\psi\|\|h_0^{\frac{1}{2}}\vph\|\,;\label{hh01}
\end{equation}
(iii) if in addition to (i) and (ii) also $\p_i\p_j\p_k L^l(\z)$ and
$\p_i\p_j M(\z)$ are in $L^\infty(\mR^3)$, then
\begin{equation}
 \|[h,h_0]\vph\|\leq C_3\|h_0\vph\|\,.\label{hh02}
\end{equation}
\end{lem}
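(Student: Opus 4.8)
The plan is to reduce all three inequalities to a few standard estimates for the three-dimensional harmonic oscillator $h_0=\tfrac12(\pv^2+\z^2)$. Introducing $a_i=\tfrac1{\sqrt2}(z^i+ip_i)$ one has $[a_i,a_j^*]=\delta_{ij}$ and $h_0=N+\tfrac32$ with $N=\sum_i a_i^*a_i\geq0$, so on $C_0^\infty(\mR^3,\mC^n)$ the graph norm $\|h_0\vph\|$ dominates $\|N\vph\|$ and $\|h_0^{1/2}\vph\|^2=(\vph,h_0\vph)\geq(\vph,N\vph)$. A short calculation with the commutation relations gives $\sum_{i,j}\|a_ia_j\vph\|^2=\|N\vph\|^2-(\vph,N\vph)$ together with the analogous bounds $\sum_{i,j}\big(\|a_i^*a_j\vph\|^2+\|a_ia_j^*\vph\|^2+\|a_i^*a_j^*\vph\|^2\big)\leq\con\|h_0\vph\|^2$ and $\sum_i\big(\|a_i\vph\|^2+\|a_i^*\vph\|^2\big)\leq\con\|h_0^{1/2}\vph\|^2$; expressing $z^i,p_i$ through $a_i,a_i^*$ then yields the estimates used below: $\|p_ip_j\vph\|$, $\|z^iz^j\vph\|$, $\|z^ip_j\vph\|$, $\|p_iz^j\vph\|$, $\|\zm p_i\vph\|$, $\|\zm z^i\vph\|\leq\con\|h_0\vph\|$ and $\|p_i\vph\|$, $\|z^i\vph\|$, $\|\zm\vph\|$, $\|\vph\|\leq\con\|h_0^{1/2}\vph\|$. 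I would record this as a preliminary step, without dwelling on the arithmetic.

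I would then rewrite $h$ in a form convenient for commutation: from $p_iL^i=L^ip_i-i(\p_iL^i)$ one gets $h=L^ip_i-\tfrac{i}{2}(\p_iL^i)+M$. Part~(i) follows at once by pointwise matrix estimates and the preliminary bounds:
\[
 \|h\vph\|\leq\|\zm^{-1}L^i\|_\infty\|\zm p_i\vph\|
 +\tfrac12\|\p_iL^i\|_\infty\|\vph\|
 +\|\zm^{-2}M\|_\infty\big(\|\z^2\vph\|+\|\vph\|\big)\leq C_1\|h_0\vph\|.
\]
Next I would compute $[h,h_0]=\tfrac12[h,\pv^2]+\tfrac12[h,\z^2]$ on $C_0^\infty$. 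One finds $\tfrac12[h,\z^2]=-iL^jz^j$, while $\tfrac12[h,\pv^2]$, after commuting each $p_k$ past the matrix functions it meets, is a sum with numerical coefficients of terms of the six types $(\p_kL^j)p_kp_j$, $p_k(\p_kL^j)p_j$, $(\p_k\p_jL^j)p_k$, $p_k(\p_k\p_jL^j)$, $(\p_kM)p_k$, $p_k(\p_kM)$.

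For (iii) I would bound the operator norm of each of these summands, and of $L^jz^j$, on $C_0^\infty$: every term carries a factor of order at most two in $p,z$ times a matrix coefficient that is bounded, resp.\ $\zm^{-1}$- or $\zm^{-2}$-bounded by the hypotheses, and whenever a $p_k$ sits to the left of a matrix function it is commuted through at the cost of one further derivative --- precisely the derivatives $\p_i\p_j\p_kL^l$, $\p_i\p_jM$ that are assumed bounded at this level --- after which the preliminary second-order bounds finish the job (e.g.\ $\|p_k(\p_kM)\vph\|\leq\|(\p_kM)p_k\vph\|+\|(\p_k\p_kM)\vph\|\leq\con\|h_0\vph\|$, and $\|L^jz^j\vph\|\leq\|\zm^{-1}L^j\|_\infty(\|\z^2\vph\|+\|\vph\|)$), giving $\|[h,h_0]\vph\|\leq C_3\|h_0\vph\|$. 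For (ii) I would use that $h,h_0$ are symmetric on $C_0^\infty$, so $(h\psi,h_0\vph)-(h_0\psi,h\vph)=(\psi,[h,h_0]\vph)$, and then estimate this sesquilinear form summand by summand: using hermiticity of the matrix coefficients and the symmetry of $p_k$ I redistribute the factors across the inner product so that $\psi$ and $\vph$ each carry at most one first-order factor ($p$, $z$, or a $\zm^{-1}$-bounded coefficient absorbing one $z$) --- for instance $(\psi,L^jz^j\vph)=(L^j\psi,z^j\vph)$, and $(\psi,(\p_kL^j)p_kp_j\vph)=(p_k(\p_kL^j)\psi,p_j\vph)$ with $p_k(\p_kL^j)=(\p_kL^j)p_k-i(\p_k\p_kL^j)$ --- and conclude by the $h_0^{1/2}$ bounds, obtaining $C_2\|h_0^{1/2}\psi\|\|h_0^{1/2}\vph\|$.

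I expect the main difficulty to be organizational rather than conceptual: tracking which derivative of $L$ or $M$ each repeated commutator $[p_k,\,\cdot\,]$ produces, so that the hypotheses are matched precisely at each of the three levels, and --- for (ii) --- choosing for each term the symmetric splitting that leaves a factor of order at most one on each entry of the form. This needs some care for the $p_kp_j$-type terms, and is subtlest for $L^jz^j$: here $\zm z^j$ is not $h_0^{1/2}$-bounded, so one must split it as $(\psi,L^jz^j\vph)=(L^j\psi,z^j\vph)$ and use $\|L^j\psi\|\leq\|\zm^{-1}L^j\|_\infty\|\zm\psi\|\leq\con\|h_0^{1/2}\psi\|$.
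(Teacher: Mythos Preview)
Your proposal is correct and follows essentially the same route as the paper: establish second-order and first-order harmonic-oscillator bounds, write $h=L^ip_i-\tfrac{i}{2}(\p_iL^i)+M$, compute $[h,h_0]$ explicitly, and then bound each term operator-wise for (iii) and form-wise (redistributing one $p$ or $z$ onto $\psi$) for (ii). The only notable difference is cosmetic: you derive the preliminary bounds $\|p_ip_j\vph\|,\|z^iz^j\vph\|,\|z^ip_j\vph\|\leq\con\|h_0\vph\|$ via the creation/annihilation algebra, whereas the paper obtains the equivalent estimates directly from the identity $\Rp(\pv^2\vph,\z^2\vph)+3\|\vph\|^2=\||\z|\pv\vph\|^2$ and the bound $\|\pv^2\vph\|^2+\|\z^2\vph\|^2\leq\tfrac{5}{3}\|(\pv^2+\z^2)\vph\|^2$; it also treats the $L$-part and $M$-part separately rather than together.
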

\begin{proof}\footnote{It may be remarked that the proof admits some
weakening  of the assumptions. For instance, in (i) for the derivatives of
$L^i$ it would be sufficient that $\zm^{-2}\p_iL^i$ be in $L^\infty$; in (ii)
for the second derivatives of $L^i$ it suffices that $\zm\Delta L^i$ is in
$L^\infty$. We shall not need such more general results.}
 Before starting the proof of (i)--(iii) we note a few simple facts. For $\vph\in
 C_0^\infty(\mR^3, \mC^n)$ we note the identity
\begin{equation}
 \vph^\hc\pv^2\vph+(\pv^2\vph)^\dag\vph+\Delta|\vph|^2=2|\pv\vph|^2\,,
\end{equation}
which multiplied by $\z^2$ and integrated (transfer $\Delta$ by parts) gives
\begin{equation}\label{ident}
 \Rp(\pv^2\varphi,\z^2\varphi)+3\|\varphi\|^2=\||\z|\pv\varphi\|^2\,.
\end{equation}
Also, it follows from $\pv^2+\z^2\geq 3\cdot\1$ that
\begin{equation}\label{ineq}
 \|\vph\|\leq\tfrac{1}{3}\|(\pv^2+\z^2)\vph\|\,.
\end{equation}
Using \eqref{ident} and \eqref{ineq} one finds
\begin{multline}
 \|(\pv^2+\z^2)\vph\|^2 - \|\pv^2\vph\|^2-\|\z^2\vph\|^2
 =2\Rp(\pv^2\vph,\z^2\vph)\\
 \geq-6\|\vph\|^2
 \geq -\tfrac{2}{3}\|(\pv^2+\z^2)\vph\|^2\,,
\end{multline}
which gives the bound:
\begin{equation}\label{domains}
 \|\pv^2\vph\|^2+\|\z^2\vph\|^2\leq \tfrac{5}{3}\,\|(\pv^2+\z^2)\vph\|^2\,.
\end{equation}

We note that the $L$- and $ M$-parts of the thesis may be proved separately.
Consider the $L$-part first, and assume that $L^i$ satisfy the assumptions
in~(i). Then $N_{\al,i\beta}(\z)=\zm^{-1}L^i_{\al\beta}(\z)$ may be treated
as a bounded operator \mbox{$\mC^{3n}\otimes L^2(\mR^3)\mapsto \mC^n\otimes
L^2(\mR^3)$}, whose norm we denote $d_1$. Then
\begin{multline}
 \|L^ip_i\vph\|^2
 \leq d_1^2\sum_i\|\zm p_i\vph\|^2=d_1^2\big(\||\z|\pv\vph\|^2+(\vph,\pv^2\vph)\big)\\
 \leq d_1^2 \big(\|\pv^2\vph\|\|\z^2\vph\|+3\|\vph\|^2+\|\vph\|\|\pv^2\vph\|\big)
 \leq \tfrac{1}{2}d_1^2\big(2\|\pv^2\vph\|^2+\|\z^2\vph\|^2+7\|\vph\|^2\big)\,,
\end{multline}
where in the third step we used identity \eqref{ident} and the Schwarz
inequality. Taking into account that
\mbox{$\lab\cdot\pv+\pv\cdot\lab=2\lab\cdot\pv-i(\pav\cdot\lab)$} and
$\|\pav\cdot\lab\|_\infty\leq \con$, with the use of \eqref{ineq} and
\eqref{domains} one easily arrives at the estimate \eqref{hh0} for the
$L$-part.

Let now $L^i$ satisfy the assumptions in (ii) and let $\vph\in
C_0^\infty(\mR^3, \mC^n)$. Then using the usual rules for calculation of
commutators one finds
\begin{equation}\label{distcom}
 \big[[L^i,p_i]_+,h_0\big]\vph=-2iz^iL^i\vph
 +\tfrac{i}{2}[p_i,p_j(\p_jL^i)+(\p_jL^i)p_j]_+\vph\,,
\end{equation}
which makes sense as a distributional identity ($p_i$, $p_j$ treated
distributionally). As $L^i$ are twice differentiable, we can transform the
second term on the rhs:
\begin{equation}
 \tfrac{i}{2}[p_i,p_j(\p_jL^i)+(\p_jL^i)p_j]_+\vph
 =ip_i\big((\p_jL^i)+(\p_iL^j)\big)p_j\vph+\tfrac{i}{2}[p_i,\Delta L^i]\vph\,.
\end{equation}
Setting this into \eqref{distcom} and taking the product with $\psi\in
C_0^\infty(\mR^3, \mC^n)$, we obtain
\begin{multline}\label{comL1}
 ([L^i,p_i]_+\psi, h_0\vph) -(h_0\psi,[L^i,p_i]_+\vph)
 =i(p_i\psi,(\p_iL^j+\p_jL^i)p_j\vph)\\
 -\tfrac{1}{2}(\psi,(\Delta L^i)p_i\vph)
 +\tfrac{1}{2}((\Delta L^i)p_i\psi,\vph)
 -2i(z^i\psi,L^i\vph)\,.
\end{multline}
We treat $P_{i\al,j\beta}(\z)=\p_iL^j_{\al\beta}(\z)+\p_jL^i_{\al\beta}(\z)$
as a bounded operator acting in \mbox{$\mC^{3n}\otimes L^2(\mR^3)$}, whose
norm we denote $d_2$. Then
\begin{equation}
 |(p_i\psi,(\p_iL^j+\p_jL^i)p_j\vph)|
 \leq d_2\Big(\sum_i\|p_i\psi\|^2\sum_j\|p_j\vph\|^2\Big)^{\frac{1}{2}}
 \leq 2d_2\|h_0^{\frac{1}{2}}\psi\|\|h_0^{\frac{1}{2}}\vph\|\,.
\end{equation}
Similarly, $Q_{\al,i\beta}(\z)=\Delta L^i_{\al\beta}(\z)$ is a bounded
operator from $\mC^{3n}\otimes L^2(\mR^3)$ into $\mC^n\otimes L^2(\mR^3)$
with the norm $d_3$, and $R_{i\al,\beta}(\z)=\zm^{-1}L^i_{\al\beta}(\z)$ a
bounded operator from $\mC^n\otimes L^2(\mR^3)$ into $\mC^{3n}\otimes
L^2(\mR^3)$ with the norm $d_4$. Therefore
\begin{gather}
 |(\psi,(\Delta L^i)p_i\vph)|
 \leq d_3\|\psi\|\Big(\sum_i\|p_i\vph\|^2\Big)^\frac{1}{2}
 \leq\con\|h_0^{\frac{1}{2}}\psi\|\|h_0^{\frac{1}{2}}\vph\|\,,\\
 |(z^i\psi,L^i\vph)|\leq d_4\|\zm\vph\|\Big(\sum_i\|z^i\psi\|^2\Big)^{\frac{1}{2}}
 \leq\con\|h_0^{\frac{1}{2}}\psi\|\|h_0^{\frac{1}{2}}\vph\|\,,
\end{gather}
which ends the proof of \eqref{hh01} for the $L$-part.

If in addition the assumptions in (iii) are satisfied, then on the rhs of
\eqref{distcom} all operators $p_i$ may be commuted to the right, and the
commutator is a differential operator with functional coefficients:
\begin{equation}
[[L^i,p_i]_+,h_0]=2i(\p_i L^j)p_ip_j +(\Delta L^j+\p_j\pav\cdot\lab)p_j
-\tfrac{i}{2}(\Delta\pav\cdot\lab)-2i\z\cdot\lab\,.\label{comL2}
\end{equation}
Using similar techniques as above we find
\begin{align}
 &\|(\p_iL^j)p_ip_j\vph\|\leq d_5\Big(\sum_{i,j}\|p_ip_j\vph\|^2\Big)^{\frac{1}{2}}
 =d_5\|\pv^2\vph\|\,,\\
 &\|(\Delta L^j+\p_j\pav\cdot\lab)p_j\vph\|
 \leq d_6\Big(\sum_j\|p_j\vph\|^2\Big)^{\frac{1}{2}}
 \leq \con\|h_0^{\frac{1}{2}}\vph\|
 \leq\con\|h_0\vph\|\,,
\end{align}
\begin{equation}
 \|(\Delta\pav\cdot\lab)\vph\|\leq \con\|\vph\|\,,\qquad
 \|\z\cdot\lab\vph\|\leq \con\|\zm^2\vph\|\,.
\end{equation}
Again using \eqref{ineq} and \eqref{domains} one obtains \eqref{hh02} for the
$L$-part.

For the $M$-part note that
\begin{equation}
 \| M\vph\|\leq\con\|\zm^2\vph\|\leq\con\|(\pv^2+\z^2)\vph\|\,,
\end{equation}
which proves \eqref{hh0}. Consider the commutator
\begin{equation}
 [ M,h_0]=\tfrac{i}{2}[(\pav M)\cdot\pv+\pv\cdot(\pav M)]\,.
\end{equation}
If the assumptions in (ii) are satisfied, then
\begin{multline}
 2|(M\psi,h_0\vph)-(h_0\psi,M\vph)|\leq
 |((\p_iM)\psi,p_i\vph)|+|(p_i\psi,(\p_iM)\vph)|\\
 \leq\con\|h_0^{\frac{1}{2}}\psi\|\|h_0^{\frac{1}{2}}\vph\|\,,
\end{multline}
which proves \eqref{hh01}. Finally, with the assumption in (iii),
\eqref{hh02} for this part is obtained similarly as \eqref{hh0} for the
$L$-part.
\end{proof}

\section{Special variables}\label{spvar}

Variables $(\tau,\z)$ are defined by:
\begin{equation}
 x^0=\tau\zm\,,\qquad \x=\tm\z\,,
\end{equation}
see \eqref{tauzed}. Further notation:
\begin{equation}
 \tmz=(\tau^2+\z^2+1)^\frac{1}{2}\,,\quad
 \tMz=\tm\zm+\tmz\,.
\end{equation}
At various points of the calculations in this section the following
identities are used
\begin{equation}
 \tau^2|\z|^2=\tm^2\zm^2-\tmz^2= \tMz(\tm\zm-\tmz)\,.
\end{equation}

The derivative transformation between these coordinates and the Min\-kow\-ski
system is given by
\begin{equation}\label{trvar}
 \frac{\p(x^0,\x)}{\p(\tau,\z)}
 =\begin{pmatrix}\zm & \dfrac{\tau\z^\tr}{\zm}\\ \dfrac{\tau\z}{\tm} & \tm \1\end{pmatrix},\quad
 \frac{\p(\tau,\z)}{\p(x^0,\x)}
 =\begin{pmatrix}\dfrac{\tm^2\zm}{\tmz^2} & -\dfrac{\tau\tm\z^\tr}{\tmz^2}\\[3ex]
 -\dfrac{\tau\zm\z}{\tmz^2} & \dfrac{1}{\tm}\bigg[\1+\dfrac{\tau^2}{\tmz^2}\z\z^\tr\bigg]\end{pmatrix}\,.
\end{equation}
Using these transformations one finds the new metric tensor matrices
\begin{gather}
 (g_{\mu\nu})
 =\begin{pmatrix}\dfrac{\tmz^2}{\tm^2} & 0\\
  0 & -\tm^2\1+\dfrac{\tau^2}{\zm^2}\z\z^\tr\end{pmatrix}\,,\quad
 \gz=-\frac{\tm^4\tmz^2}{\zm^2}\,,\\
 (g^{\mu\nu})
 =\begin{pmatrix}\dfrac{\tm^2}{\tmz^2} & 0\\
  0 & -\dfrac{1}{\tm^2}\bigg[\1+\dfrac{\tau^2}{\tmz^2}\z\z^\tr\bigg]\end{pmatrix}\,
\end{gather}
and the new vector components of the gamma matrices
\begin{align}
 \gax^\tau&=\frac{\tm}{\tmz^2}\big(\tm\zm\ga^0-\tau\z\cdot\gab\big)\,,\\
 \gax^i&=\frac{1}{\tm}\ga^i-\frac{\tau}{\tm\tmz^2}\big(\tm\zm\ga^0-\tau\z\cdot\gab\big)z^i\,.
\end{align}
The operator $K$ defined by \eqref{defK} and discussed in Appendix \ref{trK}
takes the form
\begin{gather}
 K=\frac{1}{\sqrt{2\tmz}}
 \Big[\sqrt{\tMz}+\frac{\tau}{\sqrt{\tMz}}\ga^0\z\cdot\gab\Big]=K^\dagger\,,\\
 K^{-1}=\ga^0K\ga^0\,,
\end{gather}
which leads to the following form of the transformed gamma matrices
\eqref{gaK} and $\la$ matrices introduced by \eqref{mula}:
\begin{gather}
 \gax_K^i=\frac{1}{\tm}\ga_\bot^i+\frac{\zm}{\tmz}\ga_z\nz^i\,,\\
 \begin{gathered}
 \la^i
 =\frac{\tmz}{\tm^2}\alt^i+\frac{\zm}{\tm}\al_z \nz^i\\
 =\frac{\tmz}{\tm^2}\alt^i+\frac{1}{(\zm+|\z|)\tm}\al_z \nz^i
 +\frac{1}{\tm}\al_z z^i\,,
 \end{gathered}\label{laspec}
\end{gather}
where
\begin{gather}
 \nz^i=\frac{z^i}{|\z|}\,,\qquad
 \ga_z=\nzb\cdot\gab\,,\qquad \ga_\bot^i=\ga^i-\ga_z\nz^i\,,\\[1ex]
 \al^i=\ga^0\ga^i\,,\qquad \al_z=\nzb\cdot\alb\,,\qquad \al_\bot^i=\al^i-\al_z\nz^i\,.
 \end{gather}
For the estimation of $X$ in \eqref{Xnorm} it is convenient to have
\begin{equation}\label{gla}
 (g^{\tau\tau})^{\frac{1}{2}}\la^i
 =\frac{1}{\tm}\alt^i+\frac{1}{(\zm+|\z|)\tmz}\al_z \nz^i
 +\frac{1}{\tmz}\al_z z^i\,.
\end{equation}
The derivatives of $\lambda$ matrices are
\begin{equation}
 \p_j\la^i=\frac{|\z|}{\tm^2}\bigg[\frac{\nz^j\alt^i}{\tmz}
 +\frac{\tau^2\alt^j\nz^i}{\tMz}
 +\Big(\frac{\tau^2\delt^{ji}}{\tMz}
 +\frac{\tm\nz^j\nz^i}{\zm}\Big)\al_z\bigg]\,,
\end{equation}
\begin{equation}
 \p_i\la^i=\frac{1}{\tm^2}
 \bigg[\frac{2\tau^2}{\tMz}+\frac{\tm}{\zm}\bigg]\z\cdot\alb\,,
\end{equation}
where
\begin{equation}
 \delta^{ij}_\bot=\delta^{ij}-\nz^i\nz^j\,,
\end{equation}
and their estimates are easily obtained:
\begin{equation}
|\la^i|\leq\con\frac{\zm}{\tm}\,,\quad
 |\p_j\la^i|\leq\con\frac{|\z|}{\tm\zm}\,,\quad
 |\p_j\p_i\la^k|\leq\frac{\con}{\tm\zm}\,.
\end{equation}
The $\mu$ matrix introduced in \eqref{mula}, and its derivative, take the
form
\begin{equation}
 \mu=\frac{\tmz}{\tm}\beta\,,\quad \p_i\mu=\frac{z^i}{\tm\tmz}\beta\,.
\end{equation}
The commutator, and the anti-commutator of matrices $\la^i$, the latter
defining the form $\rho^{ij}$ introduced in \eqref{defro}, yield
\begin{gather}
 [\la^i,\la^j]=\frac{\tmz^2}{\tm^4}[\al_\bot^i,\al_\bot^j]
 +2\frac{\tmz\zm}{\tm^3}\al_z(\nz^i\al_\bot^j-\nz^j\al_\bot^i)\,,\\
\begin{gathered}\label{rhoex}
 \rho^{ij}=\tfrac{1}{2}[\la^i,\la^j]_+
 =-g_{\tau\tau}g^{ij}
 =\frac{\tmz^2}{\tm^4}\delt^{ij}+\frac{\zm^2}{\tm^2}\nz^i\nz^j\\
 =\frac{1}{\tm^2}\Big[\Big(1+\frac{|\z|^2}{\tm^2}\Big)\delt^{ij}+\zm^2\nz^i\nz^j\Big]\,.
\end{gathered}
\end{gather}
Using the latter form, we find for the $\tau$-derivative of $\rho$
\begin{equation}
 \dot{\rho}^{ij}
 =-\frac{2\tau}{\tm^2}\rho^{ij}-\frac{2\tau|\z|^2}{\tm^6}\delt^{ij}\,,
\end{equation}
from which, in the sense of quadratic forms,
\begin{equation}\label{dtro}
 \sgn(\tau)\dot{\rho}^{ij}
 \leq-\frac{2|\tau|}{\tm^2}\rho^{ij}\,,
\end{equation}
an inequality needed in Lemma~\ref{scatlem}. The expression first appearing
in \eqref{XhatH} and needed in the proof of Theorem \ref{scat} takes the form
\begin{equation}\label{divro}
 \p_i\rho^{ij}=2\frac{\tau^2+\tm^2}{\tm^4}z^j\,.
\end{equation}
The square root of the form $\rho$, introduced in \eqref{defs}, is explicitly
\begin{equation}
 s^{ij}=\frac{\tmz}{\tm^2}\delt^{ij}+\frac{\zm}{\tm}\nz^i\nz^j\,.
\end{equation}
The quantity introduced in \eqref{lapila}, and its properties, are:
\begin{align}
 \La_j&=\frac{|\z|}{2\tmz}\Big(\frac{\tau^2}{\tMz}
 -\frac{\tm\zm}{\tmz^2}\Big)i\al_z\al_\bot^j\,,\label{Laspec}\\[1ex]
 s^{ij}\La_j&=\frac{|\z|}{2\tm^2}\Big(\frac{\tau^2}{\tMz}
 -\frac{\tm\zm}{\tmz^2}\Big)i\al_z\al_\bot^i\,,
\end{align}
\begin{equation}\label{LaLa}
 |\La_j|\leq \frac{\tm}{\tmz}\,,\quad |s^{ij}\La_j|\leq \frac{1}{\tm}\,,\quad
 |\dot{\La}_j|\leq\frac{\con}{\tmz}\,,\quad
 |s^{ij}\dot{\La}_j|\leq\frac{\con}{\tm^2}\,.
\end{equation}
The calculation of $Q$ defined in \eqref{QN} is more tedious, and we write
down two of the intermediate steps:
\begin{gather}
 \tfrac{1}{4}\p_j\big([\la^j,\p\cdot\la]_+\big)
 =\frac{\tau^2}{\tm^3\tMz}\Big(2\zm+\frac{|\z|^2}{\zm}\Big)
 +\frac{\tau^2}{\tm^2\tmz\tMz}+\frac{3}{2\tm^2}\,,\\
 \tfrac{1}{4}(\p\cdot\la)^2
 =\frac{\tau^2\zm}{\tm^3\tMz}-\frac{\tau^2\tmz}{\tm^4\tMz}
 +\frac{\tau^2|\z|^2}{\tm^3\zm\tMz}+\frac{|\z|^2}{4\tm^2\zm^2}\,,
\end{gather}
which substituted in \eqref{QN} gives
\begin{align}
 0<Q&=\tm^{-2}\Big(\frac{\tau^2}{\tm^2}+\frac{\tau^2}{\tmz\tMz}
 +\frac{1}{4\zm^2}+\frac{5}{4}\Big)\leq \frac{\con}{\tm^2}\,,\label{estQ}\\
 0<N&\leq\frac{\con}{\tm^2}\,.\label{estN}
\end{align}
The following estimates are needed in the proof of Lemma \ref{scatlem}
\begin{equation}\label{esdemu}
 s^{ij}\p_j\mu=\frac{z^i\zm}{\tm^2\tmz}\beta\,,\qquad
 |s^{ij}\p_j\mu|\leq\frac{|\z|}{\tm^2}\,.
\end{equation}
The matrix $\nu$ defined by \eqref{defnu}, with the use of \eqref{xib} is
shown to satisfy the estimates
\begin{equation}\label{nusnu}
 |\nu|\leq\frac{\kappa}{\tm}\,,\quad
 |s^{ij}\La_j\nu|\leq \frac{\kappa}{\tm^2}\,,
\end{equation}
which are needed in the proof of Lemma \ref{scatlem}. To simplify notation of
its derivative let us denote $\chi(u)=\xi(u)^{-1}\xi'(u)$. Then
\begin{align}
 \p_i\nu&=\frac{1}{\tm}\Big[\chi(\zm)\al_\bot^i
 +\Big(\chi(\zm)+\frac{|\z|^2}{\zm}\chi'(\zm)\Big)\al_z\nz^i\Big]\,,\\
 s^{ij}\p_j\nu&=\frac{\tmz}{\tm^3}\chi(\zm)\al_\bot^i
 +\frac{1}{\tm^2}\Big(\zm\chi(\zm)+|\z|^2\chi'(\zm)\Big)\al_z\nz^i\,.
\end{align}
It follows from \eqref{xib} that $\chi(u)\leq\kappa/u$ and
$|\chi'(u)|\leq\con/u$, so
\begin{equation}\label{esdenu}
 |s^{ij}\p_j\nu|\leq \con\frac{|\z|}{\tm^2}\,.
\end{equation}

The spreading of the past and the future in Minkowski space, in the language
of the coordinates $(\tau,\z)$, is given by the following lemma. This result
is needed for the description of scattering in Section \ref{scattering}.
\begin{lem}\label{pastfut}
The past (resp.\ future) of the set $\tau=\tau_0$, $|\z|\leq r_0$ consists of
the points $(\tau,\z)$ such that $\tau<\tau_0$ (resp.\ $\tau>\tau_0$) and
$|\z|\leq r$, where
 \begin{equation}\label{rad}
 \begin{aligned}
 r&=r_0[\tmo\tm-\tau_0\tau]+\rrmo|\tau\tmo-\tm\tau_0| \,,\\[1ex]
 \rrm &=\rrmo[\tmo\tm-\tau_0\tau]+r_0|\tau\tmo-\tm\tau_0| \,.
 \end{aligned}
 \end{equation}
It follows that
\begin{equation}\label{rmb}
 \rrm\leq \rrmo\left\{
 \begin{aligned}
        &(\tmo-\tau_0)(\tm+\tau)\,, & &\tau\geq\tau_0\,,\\
        &(\tmo+\tau_0)(\tm-\tau)\,, & &\tau\leq\tau_0\,.
 \end{aligned}\right.
\end{equation}
In particular, for $\tau_0=0$ and $\tau\in\mR$ we have
\begin{equation}
 \rrm\leq 2\rrmo\tm\,.
\end{equation}
\end{lem}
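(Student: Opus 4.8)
The plan is to reduce the statement to a one–dimensional causality problem using rotational invariance, and then to linearise the causal relations by a rapidity substitution, after which everything becomes elementary hyperbolic trigonometry.

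First I would set $E=\{x(\tau_0,\z_0):|\z_0|\le r_0\}\subset\Sigma_{\tau_0}$, so that $J^\pm(E)=\bigcup_{|\z_0|\le r_0}J^\pm\big(x(\tau_0,\z_0)\big)$. A direct computation, using $x^0=\tau\zm$, $\x=\tm\z$ and $\langle\tau\rangle^2-\tau^2=\langle z\rangle^2-|\z|^2=1$, gives
\[
 \big(x(\tau,\z)-x(\tau_0,\z_0)\big)^2=(\tau^2-|\z|^2)+(\tau_0^2-|\z_0|^2)-2\tau\tau_0\,\zm\langle z_0\rangle+2\tm\tmo\,\z\cdot\z_0 ,
\]
in which the only term depending on the \emph{direction} of $\z_0$ is the last one, maximal for $\z_0\parallel\z$; since the time–ordering condition $x^0(\tau,\z)\ge x^0(\tau_0,\z_0)$ depends only on $|\z_0|$, it follows that $x(\tau,\z)\in J^\pm(E)$ iff it lies in $J^\pm\big(x(\tau_0,\rho_0\hat\z)\big)$ for some $\rho_0\in[0,r_0]$, where $\hat\z=\z/|\z|$. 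All the points involved then lie in the timelike $2$–plane spanned by $e_0$ and $\hat\z$, so it suffices to analyse causality in $1{+}1$ dimensions there.

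The key step is the substitution $\tau=\sinh a$, $|\z|=\sinh b$ (so $\tm=\cosh a$, $\zm=\cosh b$, $b\ge0$), and likewise $\tau_0=\sinh a_0$, $\rho_0=\sinh b'_0$, $r_0=\sinh b_0$. In the plane under consideration $x(\tau,\z)$ has Minkowski coordinates $(x^0,x^1)=(\tau\zm,\tm|\z|)$, hence null coordinates
\[
 x^0-x^1=\sinh(a-b),\qquad x^0+x^1=\sinh(a+b).
\]
In $1{+}1$ dimensions $x\in J^+(y)$ iff both null coordinates of $x$ are not smaller than those of $y$ ($J^-$: reversed), and since $\sinh$ is increasing, membership of $x(\tau,\z)$ in the future of $E$ (the case $\tau>\tau_0$, i.e.\ $a>a_0$) amounts to the existence of $b'_0\in[0,b_0]$ with $a-b\ge a_0-b'_0$ and $a+b\ge a_0+b'_0$, i.e.\ $a_0-a+b\le b'_0\le a-a_0+b$. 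For $a>a_0$, $b\ge0$ the upper bound and positivity are automatic, so the system is solvable precisely when $b\le b_0+(a-a_0)$. Thus the future of $E$ consists of the points with $\tau>\tau_0$ and $|\z|\le r$, where $r=\sinh\!\big(b_0+(a-a_0)\big)$; expanding by the addition formulas, and noting $\sinh(a-a_0)=\tau\tmo-\tm\tau_0>0$, $\cosh(a-a_0)=\tm\tmo-\tau\tau_0$, reproduces exactly the first line of \eqref{rad}, and $\rrm=\cosh\!\big(b_0+(a-a_0)\big)$ reproduces the second. The case $\tau<\tau_0$ is handled by the same argument with the two surfaces exchanged (equivalently, by time reversal), and gives the same formulas because there $\sinh(a-a_0)<0$, which is where the absolute values in \eqref{rad} come from.

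For the remaining estimates, from $r_0\le\rrmo$ one gets $\rrm\le\rrmo\big([\tmo\tm-\tau_0\tau]+|\tau\tmo-\tm\tau_0|\big)$, and with $A=\tmo\tm-\tau_0\tau$, $B=\tau\tmo-\tm\tau_0$ one checks $A+B=(\tmo-\tau_0)(\tm+\tau)$ and $A-B=(\tmo+\tau_0)(\tm-\tau)$; splitting according to the sign of $B$ (equivalently of $\tau-\tau_0$) yields \eqref{rmb}. Finally, for $\tau_0=0$ this reads $\rrm\le\rrmo(\tm+\tau)$ when $\tau\ge0$ and $\rrm\le\rrmo(\tm-\tau)$ when $\tau\le0$, and in either case $\rrm\le2\rrmo\tm$ since $|\tau|\le\tm$. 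The only genuinely delicate point is the first paragraph—establishing that the naive $1{+}1$ causal characterisation captures $J^\pm(E)$ exactly—which is why the reduction to $\z_0\parallel\z$ and the bookkeeping of the time–ordering are carried out explicitly; after the rapidity substitution the rest is routine.
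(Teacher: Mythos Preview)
Your proof is correct and follows essentially the same approach as the paper: both reduce to a $1{+}1$ problem and linearise via the rapidity substitution $\tau=\sinh a$, $|\z|=\sinh b$, arriving at $R=R_0+|T-T_0|$ (in the paper's notation) or equivalently $b\le b_0+|a-a_0|$. Minor differences are that you are more explicit about the reduction to the $2$-plane (the paper just invokes ``simple geometrical analysis''), you phrase the $1{+}1$ causal condition via null coordinates rather than the lightlike-vector condition, and for \eqref{rmb} you factor $A\pm B$ algebraically whereas the paper uses $\cosh(R_0+|T-T_0|)\le\cosh(R_0)\exp|T-T_0|$; these are equivalent routes to the same bounds.
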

\begin{proof}
Simple geometrical analysis shows that the radius $r$ is given by the
 conditions that the vector $(\tau\rrm-\tau_0\rrmo,0,0,\tm r-\tmo r_0)$ is
 lightlike and $\tm r-\tmo r_0>0$, that is
\begin{equation}
 \ep(\tau\rrm-\tau_0\rrmo)=\tm r-\tmo r_0>0\,,
\end{equation}
where $\ep=-1$ (resp.\ $\ep=\!+1$) for past (resp.\ future). We set
$\tau_0=\sinh(T_0)$, \mbox{$r_0=\sinh(R_0)$}, $\tau=\sinh(T)$,
\mbox{$r=\sinh(R)$}, where $R_0>0$, $R>0$. Then the equality above is easily
transformed to $\sinh(R-\ep T)=\sinh(R_0-\ep T_0)$, so $R-R_0=\ep(T-T_0)$.
Treating $R_0$, $T$ and $(T-T_0)$ as independent we put
\mbox{$R=R_0+\ep(T-T_0)$} and $T_0=T-(T-T_0)$ in the inequality, which is
then transformed to $\ep\sinh(T-T_0)\cosh(R_0+\ep T)>0$. Thus
$\ep=\sgn(T-T_0)$ and $R=R_0+|T-T_0|$. Substituting this in $r=\sinh(R)$ and
$\rrm=\cosh(R)$ one obtains \eqref{rad}. Moreover,
\begin{equation}
 \rrm\leq\cosh(R_0)\exp|T-T_0|
 =\begin{cases}
  \rrmo\exp(T)\exp(-T_0)\,, & T\geq T_0\,,\\
  \rrmo\exp(T_0)\exp(-T)\,, & T\leq T_0\,,
  \end{cases}
\end{equation}
which gives \eqref{rmb}.
\end{proof}
Finally, we state the following inequalities needed for the estimation of
electromagnetic fields, discussed in Appendix \ref{elmfields}.
\begin{lem}\label{xest}
 For $x=(\tau\zm,\tm\z)$ the following holds:
\begin{align}
 |x^0|+|\x|+1&\geq\tm\zm\,,\label{xplus}\\
 \big||x^0|-|\x|\big|+1&\geq \frac{\tm^2+\zm^2}{2\tm\zm}\,.\label{xminus}
\end{align}
\end{lem}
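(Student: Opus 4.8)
The plan is to reduce both bounds to elementary manipulations, rationalizing the square roots and working with $|\tau|$ and $|\z|$ throughout. Since $x^0=\tau\zm$ and $\x=\tm\z$ we have $|x^0|=|\tau|\zm$, $|\x|=\tm|\z|$, and the one algebraic fact that drives everything is
\begin{equation*}
 |x^0|^2-|\x|^2=\tau^2\zm^2-\tm^2|\z|^2=\tau^2-|\z|^2=\tm^2-\zm^2 .
\end{equation*}
In particular $\big||x^0|-|\x|\big|=|\tau^2-|\z|^2|/(|\tau|\zm+\tm|\z|)$ and $(\tm-\zm)^2=(\tau^2-|\z|^2)^2/(\tm+\zm)^2$; these two rationalized identities are all I will need.

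For \eqref{xplus} I would estimate the defect directly: $\tm\zm-\tm|\z|=\tm/(\zm+|\z|)\leq\tm\leq|\tau|+1\leq|\tau|\zm+1$, where I used $\zm\geq1$ (twice) and $\tm=(\tau^2+1)^{1/2}\leq|\tau|+1$. Rearranging gives $\tm\zm\leq\tm|\z|+|\tau|\zm+1=|\x|+|x^0|+1$. Equivalently, one may record the exact identity $|x^0|+|\x|-\tm\zm=|\tau||\z|-(\tm-|\tau|)(\zm-|\z|)$ and note that the subtracted product is $\frac{1}{(\tm+|\tau|)(\zm+|\z|)}\leq1$.

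For \eqref{xminus} I would first rewrite the target: since $\frac{\tm^2+\zm^2}{2\tm\zm}-1=\frac{(\tm-\zm)^2}{2\tm\zm}$, the claim is equivalent to $\big||x^0|-|\x|\big|\geq(\tm-\zm)^2/(2\tm\zm)$. Inserting the rationalized forms above, and disposing of the trivial case $\tau^2=|\z|^2$ separately, this becomes — after cancelling the common factor $|\tau^2-|\z|^2|$ —
\begin{equation*}
 2\tm\zm(\tm+\zm)^2\geq|\tau^2-|\z|^2|\,\big(|\tau|\zm+\tm|\z|\big),
\end{equation*}
which follows from the two crude bounds $|\tau^2-|\z|^2|\leq\max(\tau^2,|\z|^2)\leq\max(\tm^2,\zm^2)\leq(\tm+\zm)^2$ and $|\tau|\zm+\tm|\z|\leq2\tm\zm$ (using $|\tau|\leq\tm$, $|\z|\leq\zm$).

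There is no genuine obstacle here; the computation is short once it is set up correctly. The only points that need a little care are to rationalize $\big||x^0|-|\x|\big|$ rather than splitting into the cases $|\tau|\geq|\z|$ and $|\tau|\leq|\z|$, and — for \eqref{xminus} — to pass to the homogeneous-looking form of the inequality before making the (deliberately lossy) final estimates, so that the leftover factor $|\tau^2-|\z|^2|$ cancels cleanly.
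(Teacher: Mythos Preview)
Your proof is correct and follows essentially the same approach as the paper: both arguments hinge on the identity $|x^0|^2-|\x|^2=\tau^2-|\z|^2=\tm^2-\zm^2$, rationalize the relevant differences of square roots, and bound $|\tau|\zm+\tm|\z|\leq 2\tm\zm$. The only cosmetic differences are that for \eqref{xplus} the paper squares both sides while you estimate the defect $\tm\zm-\tm|\z|$ directly, and for \eqref{xminus} the paper keeps the $+1$ and finishes with $|\tm^2-\zm^2|+2\tm\zm\geq\sqrt{(\tm^2-\zm^2)^2+4\tm^2\zm^2}=\tm^2+\zm^2$, whereas you first subtract $1$ from the target to reduce to $(\tm-\zm)^2/(2\tm\zm)$ and then cancel the common factor $|\tau^2-|\z|^2|$.
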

\begin{proof}
The first inequality follows from
\begin{equation}
 (|\tau|\zm+\tm|\z|+1)^2\geq |\tau|^2\zm^2+\tm^2|\z|^2+1\geq \tm^2\zm^2\,,
\end{equation}
and the second from
\begin{align}
 \big||\tau|\zm-\tm|\z|\big|+1&=\frac{|\tau^2\zm^2-\tm^2|\z|^2|}{|\tau|\zm+\tm|\z|}+1
 \geq\frac{|\tm^2-\zm^2|+2\tm\zm}{2\tm\zm}\\
 &\geq\frac{[(\tm^2-\zm^2)^2+4\tm^2\zm^2]^{\frac{1}{2}}}{2\tm\zm}
 =\frac{\tm^2+\zm^2}{2\tm\zm}\,.
\end{align}
\end{proof}

\section{Free Dirac equation: the Fourier representation}\label{fDe}

Here, mainly to fix notation, we write down the integral Fourier form of the
solution of the free Dirac equation for a given initial condition.

Let $H$ denote the unit hyperboloid $v^2=1$, $v^0>0$, with the
Lorentz-invariant measure $d\mu(v)=d^3v/v^0$. Moreover, introduce the
projection operators in $\mC^4$ defined by $P_\pm(v)=\tfrac{1}{2}(\1\pm
v\cdot\gamma)$. For $f\in \Sc(H,\mC^4)$ the function
\begin{equation}
 \psi(x)=(2\pi)^{-\frac{3}{2}}\int \big[e^{-ix\cdot v}P_+(v)-e^{ix\cdot
 v}P_-(v)\big]f(v) d\mu(v)
\end{equation}
is a smooth solution of the Dirac equation with the initial condition
\begin{equation}
 \psi(0,\x)=(2\pi)^{-\frac{3}{2}}\int \big[e^{i\x\cdot \vv}P_+(v)-e^{-i\x\cdot
 \vv}P_-(v)\big]f(v) d\mu(v)
\end{equation}
in the space $\Sc(\mR^3,\mC^4)$ (where $\x\cdot\vv$ is the Euclidean product
of the space parts of $x$ and $v$). The transformation
\begin{equation}
 \Sc(H,\mC^4)\ni f\mapsto \mathcal{F}f=\psi(0,.)\in \Sc(\mR^3,\mC^4)
\end{equation}
extends to an isometric transformation from the Hilbert space $L^2_\ga(H)$ of
four-spinor functions with the scalar product $(f_1,f_2)=\int
\ov{f_1(v)}\ga\cdot v f_2(v) d\mu(v)$ into $\Hc=\mC^4\otimes L^2(\mR^3)$.
For~$\psi(0,\x)\in \Sc(\mR^3,\mC^4)$ the inverse transformation is given by
\begin{equation}
 f(v)=(2\pi)^{-\frac{3}{2}}\int\big[
 e^{-i\vv\cdot\x}P_+(v)+e^{i\vv\cdot\x}P_-(v)\big]\ga^0\psi(0,\x)d^3x\,,
\end{equation}
which again extends to an isometric transformation on
 $\mC^4\otimes L^2(\mR^3)$.
Therefore,
\begin{equation}
\mathcal{F}: L^2_\ga(H)\mapsto \Hc
\end{equation}
is a unitary transformation, and again
\mbox{$\mathcal{F}^{-1}:\Sc(\mR^3,\mC^4)\mapsto\Sc(H,\mC^4)$}. Thus for
 $\vph\in \Sc(\mR^3,\mC^4)$ the solution of the free Dirac equation with the initial
condition $\psi(0,\x)=\vph(\x)$ is represented by
\begin{equation}\label{Dirini}
 \psi(x)=(2\pi)^{-\frac{3}{2}}\int \big[e^{-ix\cdot v}P_+(v)-e^{ix\cdot
 v}P_-(v)\big][\mathcal{F}^{-1}\vph](v)d\mu(v)\,.
\end{equation}

\section{Solutions of the wave equation}\label{fwe}

In this and the following two sections we shall use the language more
extensively explained in \cite{her17}, where also bibliographic sources are
indicated. Moreover, we shall repeatedly make use of the following estimates,
which we write down here for the convenience of the reader, and whose proof
may be found in \cite{her95}, Appendix B.
\begin{lem}\label{estlem}
 Let $a>0$, $b\geq0$, $c>0$ and $\al>0$. Then
\begin{equation}\label{est}
 \int_0^c(a+bu)^{-\al}du<
 \left\{
 \begin{aligned}
 &\frac{\al}{\al-1}\frac{c}{a^{\al-1}(a+bc)}\,,& &\al>1\,,\\
 &\frac{1}{1-\al}\frac{c}{(a+bc)^\al}\,,& &\al<1\,.
 \end{aligned}\right.
\end{equation}
\end{lem}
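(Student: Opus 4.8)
The plan is to evaluate the integral in closed form and then reduce each of the two bounds to a one‑variable elementary inequality. Throughout $[0,c]$ the integrand $(a+bu)^{-\al}$ is continuous and positive, so the integral is a finite positive number and there is no convergence issue even when $\al<1$. First I would dispose of the degenerate case $b=0$: then $a+bc=a$, the integral equals $c\,a^{-\al}$, while the right‑hand sides reduce to $\frac{\al}{\al-1}\,c\,a^{-\al}$ (for $\al>1$) and $\frac{1}{1-\al}\,c\,a^{-\al}$ (for $0<\al<1$); since both numerical prefactors strictly exceed $1$, the strict inequalities hold.

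Assuming now $b>0$, I would substitute $t=a+bu$ to obtain $\int_0^c(a+bu)^{-\al}du=\frac1b\int_a^{a+bc}t^{-\al}dt=\frac{1}{b(1-\al)}\big[(a+bc)^{1-\al}-a^{1-\al}\big]$, valid for $\al\neq1$ (the factor $\frac{1}{1-\al}$ and the bracket share a sign, so the expression is positive in both regimes). Writing $A=a$, $B=a+bc$, so that $0<A<B$ and $B-A=bc$, and setting $x=A/B\in(0,1)$, I would clear the common positive factors from the target inequality. For $0<\al<1$ this leaves, after multiplying by $b(1-\al)>0$, exactly $B^{1-\al}-A^{1-\al}<(B-A)B^{-\al}$; expanding $(B-A)B^{-\al}=B^{1-\al}-AB^{-\al}$ reduces this to $A^{-\al}>B^{-\al}$, which is clear since $\al>0$ and $A<B$. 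For $\al>1$, rewriting the closed form as $\frac{1}{b(\al-1)}\big[a^{1-\al}-(a+bc)^{1-\al}\big]$, multiplying by $b(\al-1)>0$ and dividing by $A^{1-\al}>0$ reduces the claim to the scalar inequality $1-x^{\al-1}<\al(1-x)$ for $x\in(0,1)$.

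That last inequality is the only point that needs a genuine argument, and it is the step I expect to be the (mild) obstacle. I would settle it by a case split on whether the exponent $\al-1$ is at most or at least $1$. If $1<\al\le2$, then $\al-1\in(0,1]$, so raising $x\in(0,1)$ to this power only increases it, $x^{\al-1}\ge x$, whence $1-x^{\al-1}\le 1-x<\al(1-x)$. If $\al>2$, then $\al-1>1$ and Bernoulli's inequality $(1+t)^r\ge 1+rt$ with $r=\al-1$ and $t=x-1\in(-1,0)$ gives $x^{\al-1}\ge 1-(\al-1)(1-x)$, hence $1-x^{\al-1}\le(\al-1)(1-x)<\al(1-x)$; in both cases strictness survives because $1-x>0$. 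Assembling these pieces proves both estimates. I do not anticipate any real difficulty — the statement is essentially a packaging of Bernoulli‑type inequalities — so the only care required is bookkeeping of the signs of $1-\al$ and of the prefactors when clearing denominators.
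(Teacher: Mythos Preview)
Your argument is correct. The degenerate case $b=0$ is handled cleanly, and for $b>0$ the explicit antiderivative followed by the substitution $x=A/B$ reduces both inequalities to the elementary facts you isolate; the Bernoulli step for $\al>2$ and the monotonicity step for $1<\al\le2$ are both valid, and strictness is preserved throughout as you note.

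As for comparison: the paper does not actually prove this lemma. It is stated in Appendix~\ref{fwe} with the remark that the proof may be found in \cite{her95}, Appendix~B, so there is no internal argument to compare against. Your self-contained proof is therefore a genuine addition relative to the present text.
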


A large class of the solutions of the homogeneous wave equation may be
represented in the following form
\begin{equation}\label{freefi}
 \vph(x)=-\frac{1}{2\pi}\int\fd (x\cdot l,l)d^2l\,,
\end{equation}
where $\fd(s,l)=\p_s\f(s,l)$, with $\f(s,l)$ a homogeneous of degree $-1$
function of a real variable $s$ and null (light-like), future-pointing
vectors $l$. Consequently, $\fd(s,l)$ is homogeneous of degree $-2$, and at
this stage this function is all one needs for the above representation, but
later the function $\f$ will play a role. For $\fd$ in the class $C^0$ the
field $\vph$ is in $C^0$ and satisfies the wave equation in the
distributional sense. For $\f$ in the class $C^3$ the field is in the class
$C^2$ and the wave equation is satisfied in the direct sense. The measure
$d^2l$ is the Lorentz invariant measure on the set of null directions,
applicable to functions of $l$, homogeneous of degree $-2$. If $f(l)$ is such
function, and $(e_0,\ldots,e_3)$ is any Minkowski basis, then
\begin{equation}
 \int f(l) d^2l= \int f(k)d\W(\mathbf{k})\,,
\end{equation}
where $k=l/l^0$, $d\W(\mathbf{k})$ is the spherical angle measure in this
coordinate system, and the value of this integral does not depend on the
choice of basis. The operators of intrinsic differentiation in the cone are
given by $L_{ab}=l_a(\p/\p l^b)-l_b(\p/\p l^a)$, and satisfy
\begin{equation}
 \int L_{ab}f(l)d^2l=0.
\end{equation}
Let $Z^a(l)$ be a vector function, such that $l\cdot Z(l)=0$. If one extends
$Z(l)$ to the neighborhood of the cone with the preservation of this
property, then
\begin{equation}
 L_{0a}[Z^a(l)/l^0]=\p\cdot Z(l)\,,
\end{equation}
independently of any particular extension, and where after differentiation
the rhs is restricted to the cone. Therefore, the rhs is well defined and
independent of the choice of basis. Moreover, let in addition $Z(l)$ be
homogeneous of degree $-1$, then
\begin{equation}\label{intdZ}
 \int\p\cdot Z(l)\, d^2l=0\,.
\end{equation}

We shall use the coordinates $(u,\phi)$ defined by
\begin{equation}\label{luf}
 l=l^0k\,,\quad k=e_0+e_3u+\sqrt{1-u^2}R(\phi)\,,\quad
 R(\phi)=e_1\cos\phi+e_2\sin\phi\,.
\end{equation}
We note for later use that
\begin{equation}\label{RR}
 \ddot{R}(\phi)\equiv\p_\phi^2 R(\phi)=-R(\phi)\,.
\end{equation}
 We choose $e_3=\x/|\x|$, and then
\begin{equation}\label{freefiu}
 \vph(x)=-\frac{1}{2\pi}\int \f(x^0-|\x|u,k(u,\phi))du\,d\phi\,.
\end{equation}

The following theorem gives the properties of decay in spacetime of the
solutions \eqref{freefi}. In the estimates \eqref{dW}, \eqref{dWW2},
\eqref{dWW3} below, and other of similar type to appear later, one should
understand that vectors $l$ are scaled to $l^0=1$. The form of such estimates
does not depend on the choice of a basis, only the bounding constants change.
Recall that $\theta$ is the Heaviside step function.
\begin{thm}\label{estWfi}
Let $\vph(x)$ be represented by the formula \eqref{freefi}, with $\fd(s,l)$
and $\f(s,l)$ as described there, and let $\vep\in(0,1)$.\\
 (i) If $\fd(s,l)$ is of class $C^0$, and for some $n\in\{0,1,\ldots\}$
is bounded by
 \begin{equation}\label{dW}
 |\fd(s,l)|\leq\frac{\con}{(|s|+1)^{n+\vep}}\,,
 \end{equation}
then $\vph(x)$ is a $C^0$ field and the following estimates are satisfied
\begin{equation}\label{estfi}
 |\vph(x)|\leq\con
 \begin{cases}
 \dsp\frac{1}{(|x|+1)^\vep}\,,&
 n=0\,,\\[2ex]
 \dsp\frac{1}{|x|+1}\bigg[\frac{\theta(x^2)}{(|x^0|-|\x|+1)^{n-1+\vep}}
 +\theta(-x^2)\bigg]\,,& n\geq1\,.
 \end{cases}
\end{equation}
(ii) If $\f(s,l)$ is of class $C^1$, and for some $n\in\{1,2,\ldots\}$ is
bounded by
\begin{equation}\label{dWW2}
 |L^\al \f(s,l)|\leq\frac{\con}{(|s|+1)^{n-1+\vep}}\,,\quad |\al|\leq 1\,,
\end{equation}
then $\vph(x)$ is a $C^0$ field and the following stronger estimates hold
\begin{equation}\label{estfidi2}
 |\vph(x)|\leq\frac{\con}{(|x|+1)(||x^0|-|\x||+1)^\vep}\,, \qquad
 \vep<\tfrac{1}{2}\,,\quad n=1\,,
\end{equation}
\begin{multline}
 |\vph(x)|\leq\frac{\con}{|x|+1}\bigg[\frac{1}{(||x^0|-|\x||+1)^{n-1+\vep}}
 +\frac{\theta(-x^2)}{(|x|+1)^\frac{1}{2}(|\x|-|x^0|+1)^\frac{1}{2}}\bigg]\,,\\[1ex]
   n\geq2\,.
 \end{multline}
(iii) If $\f(s,l)$ is of class $C^2$, and for some $n\in\{1,2,\ldots\}$ is
bounded by
\begin{equation}\label{dWW3}
 |L^\al \f(s,l)|\leq\frac{\con}{(|s|+1)^{n-1+\vep}}\,,\quad |\al|\leq 2\,,
\end{equation}
then $\vph(x)$ is a $C^1$ field and the following stronger estimates hold
\begin{equation}\label{estfidi3}
 |\vph(x)|\leq
 \begin{cases}
 \dsp \frac{\con}{(|x|+1)(||x^0|-|\x||+1)^\vep}\,,& n=1\,,\\[2ex]
 \dsp\frac{\con}{|x|+1}\bigg[\frac{1}{(||x^0|-|\x||+1)^{n-1+\vep}}
 +\frac{\theta(-x^2)}{|x|+1}\bigg]\,,& n\geq2\,.
 \end{cases}
\end{equation}
\end{thm}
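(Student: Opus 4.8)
The plan is to insert the integral representation \eqref{freefiu} and to choose the Minkowski frame so that $e_3=\x/|\x|$, so that the ``phase'' becomes $x\cdot k=x^0-|\x|u$, a function of $u\in[-1,1]$ alone. By the symmetry $x^0\to-x^0$ (which, combined with $u\to-u$, $R(\phi)\to-R(\phi)$, maps the problem to one of the same type) one may assume $x^0\ge0$. Everything then reduces to estimating $\int_{-1}^1\int_0^{2\pi}(\,\cdot\,)\,du\,d\phi$, and the proof is by case distinction according to the causal position of $x$.

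For part (i): when $|x^0|\lesssim|\x|$ (so $|\x|$ is comparable to $|x|+1$ in the non-trivial range) I would substitute $s=x^0-|\x|u$, picking up the Jacobian factor $|\x|^{-1}\sim(|x|+1)^{-1}$, and be left with $\int_{x^0-|\x|}^{x^0+|\x|}(|s|+1)^{-(n+\vep)}\,ds$; splitting this at $s=0$ and applying Lemma \ref{estlem} yields \eqref{estfi} — for $n=0$ the exponent $\vep<1$ makes the integral grow like a power, while for $n\ge1$ the exponent $n+\vep>1$ makes it finite and controlled by the endpoint nearest $0$, namely $x^0-|\x|$ in the timelike region. In the complementary deep timelike region $|x^0|>2|\x|$ one has $|x^0-|\x|u|\ge|x^0|-|\x|\ge\tfrac12|x^0|\gtrsim|x|+1$ uniformly in $u$, so a pointwise estimate of the integrand suffices; the case $|x|\le\con$ is trivial.

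For parts (ii) and (iii) the extra hypotheses control the angular derivatives $L^\al\f$ — including $\f$ itself — but \emph{not} $\fd$, while $\vph$ is built from $\fd$, so the first move is to re-express $\vph$ through $\f$. In the region $|x^0|\lesssim|\x|$ I would integrate by parts in $u$:
\[
 \fd(x^0-|\x|u,k)=-\tfrac1{|\x|}\p_u\big[\f(x^0-|\x|u,k)\big]+\tfrac1{|\x|}(\p_l\f)\cdot\p_u k .
\]
Since $k(\pm1,\phi)=e_0\pm e_3$ is independent of $\phi$, the boundary terms assemble into $|\x|^{-1}\big[\f(x^0-|\x|,e_0+e_3)-\f(x^0+|\x|,e_0-e_3)\big]$, bounded by $\con(|x|+1)^{-1}(||x^0|-|\x||+1)^{-(n-1+\vep)}$ using the $\al=0$ case of \eqref{dWW2}/\eqref{dWW3} — precisely the first term of \eqref{estfidi2}--\eqref{estfidi3}. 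The remaining area integral of $|\x|^{-1}(\p_l\f)\cdot\p_u k$ is treated by observing that $\p_u k$ is tangent to the cone (so $(\p_l\f)\cdot\p_u k$ is an intrinsic angular derivative, hence $\le\con(1-u^2)^{-1/2}$ times the $L\f$-bound) and by splitting the $u$-integral into the band around the phase-zero $u_0=x^0/|\x|$ (present only when $|x^0|<|\x|$) and the neighbourhoods of the poles $u=\pm1$; Lemma \ref{estlem} again produces the claimed powers of $||x^0|-|\x||+1$ and, from the weight $(1-u^2)^{-1/2}$ near $u_0$ (where $1-u_0^2=-x^2/|\x|^2$), the spacelike factor. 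For part (iii) one further integrates by parts, now in $\phi$, using $\ddot R(\phi)=-R(\phi)$ from \eqref{RR} and the second-order bound \eqref{dWW3}, to gain the extra $(|x|+1)^{-1}$ in the spacelike term. Finally, in the deep timelike region $|x^0|>2|\x|$ (where $u_0\notin[-1,1]$, so there is no singularity) I would instead invoke homogeneity: the Euler relation for a degree-$(-1)$ function gives $s\,\fd(s,l)=-\f(s,l)+(\text{a combination of }L_{ab}\f)$, hence $|\fd(x^0-|\x|u,k)|\le\con|x^0-|\x|u|^{-1}(|x^0-|\x|u|+1)^{-(n-1+\vep)}$, after which a direct estimate closes the case.

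The main obstacle is expected to be technical bookkeeping rather than any single new idea: the chart $(u,\phi)$ degenerates at the poles $u=\pm1$, so the smoothness of $\f$ on the cone must be used to absorb the $(1-u^2)^{-1/2}$ blow-up of $\p_u k$, and the $u$-integral must be decomposed into pole neighbourhoods and the band around $u_0$ with enough care that the partial bounds glue together continuously across the light cone $||x^0|-|\x||\to0$ — in particular reproducing the constraint $\vep<\tfrac12$ in \eqref{estfidi2} and the precise spacelike terms $\theta(-x^2)(|x|+1)^{-1/2}(|\x|-|x^0|+1)^{-1/2}$ resp.\ $\theta(-x^2)(|x|+1)^{-1}$ for $n\ge2$.
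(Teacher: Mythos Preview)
Your outline is correct and follows essentially the same route as the paper. The integration by parts in $u$ that you describe is precisely the content of the paper's Lemma~\ref{lemF}: identity \eqref{Fdot1} is your first IBP (producing the boundary terms and the area integral with the $(1-u^2)^{-1/2}$ weight from $\p_u k$), and identity \eqref{Fdot2} is your further $\phi$-IBP for case~(iii) (eliminating the square-root singularity via $\ddot R=-R$). The paper then estimates the remaining $u$-integral by the substitution $1-u=t$ and a three-interval split $[0,a/(2b)]\cup[a/(2b),2a/b]\cup[2a/b,1]$ with $a=|\x|-|x^0|$, $b=|\x|$, which is exactly your ``band around $u_0$ plus pole neighbourhoods'' decomposition.

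One small point where you are more explicit than the paper: for the timelike region in (ii)/(iii) the paper simply asserts, in a one-line note, that ``due to the properties of homogeneity of $\fd$ and $\f$, the bound \eqref{dW} is satisfied in all three cases,'' and then falls back on part~(i). Your Euler-relation argument unpacks this: from $s\p_s\f+l\cdot\p_l\f=-\f$ and the cone identity $l^aL_{ab}=-l_b\,l\cdot\p_l$ one gets $s\fd=-\f-(l^0)^{-1}l^iL_{0i}\f$, so the $|\al|\le1$ bounds on $L^\al\f$ indeed yield $|\fd(s,l)|\le\con\,|s|^{-1}(|s|+1)^{-(n-1+\vep)}$, which in the region $|x^0|>2|\x|$ (where $|s|\gtrsim|x|+1$) gives \eqref{dW}. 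This is a useful clarification of the paper's terse remark.
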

Note that due to the properties of homogeneity of $\fd$ and $\f$, the bound
\eqref{dW} is satisfied in all three cases (with $n\geq2$ in (ii) and
$n\geq1$ in (iii)). Case (ii) for $n=1$ could also be treated, but the result
is more involved, and we do not need it anyway.

\begin{proof}
(i) Using the bound \eqref{dW} and the integral \eqref{freefiu}, after the
change of variables $\si=\sgn(x^0)(x^0-|\x|u)$ we obtain
\begin{multline}
 |\vph(x)|
 \leq\frac{\con}{|\x|}\int\limits_{|x^0|-|\x|}^{|x^0|+|\x|}\frac{d\si}{(|\si|+1)^{n+\vep}}\\
 \leq \frac{\con}{|\x|}
 \begin{cases}
 \dsp\int_0^{2|\x|}\frac{d\xi}{(\xi+|x^0|-|\x|+1)^{n+\vep}}\,,& x^2\geq0\,,\\[2ex]
 \dsp\int_0^{|x^0|+|\x|}\frac{d\si}{(\si+1)^{n+\vep}}\,,&x^2\leq0\,.
 \end{cases}
\end{multline}
Lemma \ref{estlem} gives now the bounds \eqref{estfi}.

 (ii) and (iii) These cases give more restrictive bounds only in the region
$x^2\leq0$, $|\x|\geq1$, in which $|\x|\geq\con(|x|+1)$. We  use the
equalities \eqref{Fdot1} and \eqref{Fdot2}, respectively, of Lemma \ref{lemF}
below. To shorten notation we denote the exponent in \eqref{dWW2} and
\eqref{dWW3} by $q=n-1+\vep$. For $x^2\leq0$ we then have
\begin{equation}
 |\vph(x)|\leq\frac{\con}{|\x|}\bigg[\frac{1}{(|\x|-|x^0|+1)^q}
 +\int\limits_{-1}^1\left\{\begin{matrix}(1-u^2)^{-\frac{1}{2}}\\1\end{matrix}\right\}
 \frac{du}{(||x^0|+|\x|u|+1)^q}\bigg]\,,
\end{equation}
where $(1-u^2)^{-\frac{1}{2}}$ applies in case (ii) and $1$ in case (iii).
Integration in case (iii) is done as in case (i) and the thesis of this case
follows. The integral in case (ii), after the substitution $1-u=t$, is
bounded by $2$ times the integral
\begin{gather}
 \int_0^1 \frac{dt}{f(a,b,t)}\,,\quad
 f(a,b,t)=t^\frac{1}{2}(|bt-a|+1)^q\,,\\
  0\leq a=|\x|-|x^0|\leq|\x|=b\geq\con(|x|+1)\,.
\end{gather}
For $t\in[0,a/(2b)]$ we have $f(a,b,t)\geq\con\, t^\frac{1}{2}(a+1)^q$, so
\begin{equation}
 \int_0^\frac{a}{2b}\frac{dt}{f(a,b,t)}\leq\con\,b^{-\frac{1}{2}}(a+1)^{\frac{1}{2}-q}
\end{equation}
For $t\in[a/(2b),2a/b]$ we have $f(a,b,t)\geq \con\,
a^\frac{1}{2}b^{-\frac{1}{2}}(|bt-a|+1)^q$. The integral is now split into
intervals $[a/(2b),a/b]$ and $[a/b,2a/b]$ and evaluated with the use of Lemma
\ref{estlem}. This gives
\begin{equation}
 \int_\frac{a}{2b}^\frac{2a}{b}\frac{dt}{f(a,b,t)}
 \leq\con\,b^{-\frac{1}{2}}\times
 \begin{cases}
 \dsp (a+1)^{\frac{1}{2}-q}\,, & q<1\,\\
  \dsp (a+1)^{-\frac{1}{2}}\,, & q>1\,.
 \end{cases}
\end{equation}
Finally, if $2a/b\leq 1$, then for $t\in[2a/b,1]$ we have $f(a,b,2a/b+s)\geq
\con\,s^\frac{1}{2}(a+1+bs)^q$, and by changing further the integration
variable to $r=b(a+1)^{-1}s$ one finds
\begin{equation}
 \int_\frac{2a}{b}^1\frac{dt}{f(a,b,t)}
 \leq \con\times
 \begin{cases}
 \dsp b^{-q}\,, & q<\frac{1}{2}\,,\\
 \dsp b^{-\frac{1}{2}}(a+1)^{\frac{1}{2}-q}\,, & q>\frac{1}{2}\,.
 \end{cases}
\end{equation}
Substituting the values of $a$ and $b$ and noting that
$b^{-\frac{1}{2}}(a+1)^{\frac{1}{2}-q}\leq\con\, b^{-q}$ for $q<\frac{1}{2}$,
$b\geq1$, one obtains the thesis.
\end{proof}
\begin{lem}\label{lemF}
Let $\f(s,l)$ be homogeneous of degree $-1$ and choose $e_3$ in \eqref{luf}
as $e_3=\x/|\x|$. Then the following identities hold
\begin{align}
 \frac{1}{2\pi}\int \dot{\f}(x\cdot l,l)&d^2l
 +\frac{1}{|\x|}\big[\f(x^0-|\x|,(1,\x/|\x|))-\f(x^0+|\x|,(1,-\x/|\x|))\big]\\[1ex]
 &=\frac{1}{2\pi|\x|}\int (l^0)^{-1}\big[L'_{03}-\frac{u}{\sqrt{1-u^2}}R^iL'_{0i}\big]\f(x\cdot l,l)d^2l\label{Fdot1}\\[1ex]
 &=\frac{1}{2\pi|\x|}\int (l^0)^{-1}\big[L'_{03}-u\dot{R}^i\dot{R}^jL'_{0i}L'_{0j}\big]\f(x\cdot l,l)d^2l\,,\label{Fdot2}
\end{align}
where prime at $L'_{ab}$ indicates that these operators act only on the
second argument in $\f(x\cdot l,l)$. The equalities \eqref{Fdot1} and
\eqref{Fdot2} hold for $\f$ of class $C^1$ and $C^2$, respectively.
\end{lem}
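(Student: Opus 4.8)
\textbf{Proof plan for Lemma \ref{lemF}.} The plan is to work in the coordinates $(u,\phi)$ of \eqref{luf} with the choice $e_3=\x/|\x|$ already made in \eqref{freefiu}, so that $x\cdot k=x^0-|\x|u$ while $\x\cdot R(\phi)=\x\cdot\dot{R}(\phi)=0$, and to derive both identities by integrating by parts on the cone: in the variable $u$ for \eqref{Fdot1}, and in addition in $\phi$ for \eqref{Fdot2}. Throughout, the prime on $L'_{0a}$ and the notation $\n_l$ refer to differentiation acting on the second argument of $\f$ only.

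First I would set $g(u,\phi)=\f(x\cdot l,l)\big|_{l=k(u,\phi)}$ and differentiate along $\p_u k=e_3-\tfrac{u}{\sqrt{1-u^2}}R(\phi)$. Since $\p_u(x\cdot k)=x\cdot\p_u k=-|\x|$, the chain rule gives $\p_u g=-|\x|\,\dot{\f}(x\cdot k,k)+(\p_u k\cdot\n_l)\f$. Integrating over $u\in[-1,1]$ produces the boundary values $g(1,\phi)=\f(x^0-|\x|,(1,\x/|\x|))$ and $g(-1,\phi)=\f(x^0+|\x|,(1,-\x/|\x|))$, which are $\phi$-independent because $k$ degenerates to $e_0\pm e_3$ at the poles; integrating then over $\phi$, dividing by $2\pi|\x|$ and rearranging, the left-hand side of the lemma becomes $\tfrac{1}{2\pi|\x|}\int(\p_u k\cdot\n_l)\f(x\cdot l,l)\,d^2l$. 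It remains to re-express the integrand at $l^0=1$ in terms of $L'_{0a}$: using $L_{0a}=l^0\,\p/\p l^a+l^a\,\p/\p l^0$ for spatial $a$ together with $l^a=u\,\delta^a_3+\sqrt{1-u^2}R^a$ on the cone, one gets $\p\f/\p l^3=L'_{03}\f-u\,\p_0\f$ and $\p\f/\p l^i=L'_{0i}\f-\sqrt{1-u^2}R^i\,\p_0\f$ for $i=1,2$; substituting into $\p_u k\cdot\n_l=\p/\p l^3-\tfrac{u}{\sqrt{1-u^2}}R^i\,\p/\p l^i$ and using $(R^1)^2+(R^2)^2=1$, every term containing $\p_0\f$ cancels, leaving $(\p_u k\cdot\n_l)\f=\big[L'_{03}-\tfrac{u}{\sqrt{1-u^2}}R^iL'_{0i}\big]\f$. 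Homogeneity of degree $-1$ supplies the factor $(l^0)^{-1}$ off the section $l^0=1$, which also makes the integrand of the right degree $-2$ for the measure $d^2l$; this proves \eqref{Fdot1}. The factor $(1-u^2)^{-1/2}$ is integrable near the poles (and its $\phi$-average is regular there), so all the integrals make sense for $\f$ of class $C^1$.

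For \eqref{Fdot2} I would integrate by parts in $\phi$ under the integral sign, writing $R=-\ddot{R}$ from \eqref{RR}: $\int\tfrac{u}{\sqrt{1-u^2}}R^iL'_{0i}\f\,d\phi=-\tfrac{u}{\sqrt{1-u^2}}\int\ddot{R}^iL'_{0i}\f\,d\phi=-\tfrac{u}{\sqrt{1-u^2}}\int\dot{R}^i\,\p_\phi\!\big(L'_{0i}\f(x\cdot l,l)\big)\,d\phi$ (periodicity kills the boundary term). Along the curve $l=k(u,\phi)$ the operator $\p_\phi$ acts on any function of the second slot as $\sqrt{1-u^2}\,\dot{R}^jL'_{0j}$: the first-slot contribution drops because $x\cdot\p_\phi k=-\sqrt{1-u^2}\,\x\cdot\dot{R}=0$, and the extra $\p_0$ term drops because $l^j\dot{R}^j=0$ on the cone. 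This gives $\int\tfrac{u}{\sqrt{1-u^2}}R^iL'_{0i}\f\,d\phi=u\int\dot{R}^i\dot{R}^jL'_{0i}L'_{0j}\f\,d\phi$ (the symmetry of $\dot{R}^i\dot{R}^j$ makes the ordering of $L'_{0i}$ and $L'_{0j}$ immaterial), and hence \eqref{Fdot2}; the two $\phi$-differentiations require $\f$ of class $C^2$.

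The step I expect to be most delicate is the bookkeeping of homogeneity and of the primed operators — in particular getting the $(l^0)^{-1}$ normalisation right and, relatedly, verifying that all first-slot ($\p_0\f$) terms cancel: in the pointwise identity this rests on $R\cdot R=1$, and in the $\phi$-integration by parts on $\x\cdot\dot{R}=0$ and $l\cdot\dot{R}=0$. The rest is routine.
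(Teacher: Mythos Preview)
Your approach is essentially the same as the paper's: integrate by parts in $u$ to produce the boundary terms and the $L'$-expression \eqref{Fdot1}, then integrate by parts in $\phi$ using $R=-\ddot R$ together with $\p_\phi=\sqrt{1-u^2}\,\dot R^{j}L_{0j}$ to obtain \eqref{Fdot2}. The only cosmetic difference is that the paper packages the first step via the operator identity $\p_u=(l^0)^{-1}(\p_u l^{a})L_{0a}=L_{03}-\tfrac{u}{\sqrt{1-u^2}}R^{i}L_{0i}$ applied to $\f(s,l)$ before substituting $s=x\cdot l$, which makes the appearance of the primed operators and of the factor $(l^0)^{-1}$ automatic, whereas you verify the $\p_0\f$ cancellations by hand; both routes are equivalent. (There is a sign slip in your displayed intermediate step $-\int\ddot R^{i}(\ldots)\,d\phi=-\int\dot R^{i}\p_\phi(\ldots)\,d\phi$, but your stated final identity $\int\tfrac{u}{\sqrt{1-u^2}}R^{i}L'_{0i}\f\,d\phi=u\int\dot R^{i}\dot R^{j}L'_{0i}L'_{0j}\f\,d\phi$ is correct.)
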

\begin{proof}
Using the form \eqref{luf} we find
\begin{equation}
 \p_u=\frac{1}{l^0}\frac{\p l^a}{\p u}L_{0a}=L_{03}-\frac{u}{\sqrt{1-u^2}}R^iL_{0i}\,.
\end{equation}
We apply this identity to $\f(s,l)$, then set $s=x\cdot l$ and divide by
$l^0$. As $l^0$ does not depend on $u$, the result may be written as
\begin{equation}
 \p_u\big[(l^0)^{-1}\f(x\cdot l,l)\big]
 +|\x|\fd(x\cdot l,l)
 =(l^0)^{-1}\Big[L'_{03}-\frac{u}{\sqrt{1-u^2}}R^iL'_{0i}\Big]\f(x\cdot l,l)\,,
\end{equation}
(primes at $L'$ as in the thesis). We integrate this identity over $l$'s. The
first term on the lhs gives
\begin{multline}
 \int \p_u \f\big(x^0-|\x|u,k(u,\phi)\big)du\,d\phi\\
 =2\pi\big[ \f\big(x^0-|\x|,(1,\x/|\x|)\big)- \f\big(x^0+|\x|,(1,-\x/|\x|)\big)\big]\,.
\end{multline}
Dividing by $2\pi|\x|$ and rearranging the terms we obtain equality
\eqref{Fdot1}. Next, using \eqref{RR} we transform the second term of the
integrand as
\begin{multline}
 -\frac{u}{l^0\sqrt{1-u^2}}R^iL'_{0i}\f(x\cdot l,l)
 =\p_\phi\Big[\frac{u}{l^0\sqrt{1-u^2}}\dot{R}^iL'_{0i}\f(x\cdot l,l)\Big]\\
 -\frac{u}{l^0\sqrt{1-u^2}}\dot{R}^i\p'_\phi \big[L'_{0i}\f(x\cdot l,l)\big]\,,
\end{multline}
where we have taken into account that $l^0$ and $x\cdot l$ do not depend on
$\phi$. The first term on the rhs vanishes when integrated, while in the
second term we use another differential identity:
\begin{equation}\label{dphi}
 \p_\phi=\frac{1}{l^0}\frac{\p l^a}{\p \phi}L_{0a}=\sqrt{1-u^2}\dot{R}^aL_{0a}\,.
\end{equation}
Substituting the result into the integral we obtain \eqref{Fdot2}.
\end{proof}

\section{Sources}\label{decay}

Let $\rho(x)$ be a continuous field on Minkowski space. We formulate two
decay conditions, with some fixed numbers $\w>0$ and $\vep\in(0,1)$:
\begin{align}
 &\text{(A)}& & |\rho(x)|\leq\frac{\con}{(|x|+1)^\w}\,,\label{cA}\\
 &\text{(B)}&
 & |\rho(x)|\leq\frac{\con}{(|x|+1)^3}\bigg[\theta(x^2)+\frac{1}{(|x|+1)^\vep}\bigg]\,.\label{cB}
\end{align}
We want to consider the existence and decay properties of the advanced and
retarded solutions
\begin{equation}\label{adv}
 \varphi_{\ret/\adv}(x)=4\pi\int D_{\ret/\adv}(x-y)\rho(y)dy
\end{equation}
of the equation
\begin{equation}
 \Box\,\varphi=4\pi\rho\,,
\end{equation}
and of their difference, the radiation field
\begin{equation}
 \vph_\rad(x)=\vph_\ret(x)-\vph_\adv(x)=4\pi\int D(x-y)\rho(y) dy
 =-\frac{1}{2\pi}\int\dV(x\cdot l,l)d^2l\,,
\end{equation}
where (see \cite{her17})
\begin{equation}
 V(s,l)=\int\delta(s-x\cdot l)\rho(x)dx\,.
\end{equation}
We use the standard notation
\begin{gather}
  D_{\ret/\adv}(x)=\frac{1}{2\pi}\theta(\pm x^0)\delta(x^2)\,,\\
  D(x)=D_\ret(x)-D_\adv(x)=\frac{1}{2\pi}\sgn(x^0)\delta(x^2)\,.
\end{gather}

\begin{lem}\label{estlemrv}\ \\
(i) If $\rho(x)$ satisfies condition (A) for $\w>2$, then
$\vph_{\adv/\ret}(x)$ are well defined and continuous, and the following
bound is satisfied:
\begin{equation}
 |\varphi_\adv(x)|\leq\frac{\con}{(|x|+1)^{\w-2}}\qquad \text{for}\quad x^0\geq 0\,;
\end{equation}
similarly for the retarded solution in the half-space $x^0\leq0$. \\
 (ii) If in
addition to (i) $\rho(x)$ is of class $C^1$, and $\n\rho(x)$ and
$x\cdot\n\rho(x)$ satisfy condition (A) for $\w>2$, then
$\vph_{\adv/\ret}(x)$ are of class $C^1$ and
\begin{equation}\label{xdph}
 x\cdot\n\vph_{\adv/\ret}(x)=\int D_{\adv/\ret}(x-y)(y\cdot\n+2)\rho(y)\,dy\,.
\end{equation}
(iii) If $\rho(x)$ satisfies condition (A) for $\w>3$, or condition (B), then
$V(s,l)$ is well defined and continuous and satisfies the bound
\begin{equation}\label{estV}
 |V(s,l)|\leq\frac{\con}{(|s|+1)^{\w-3}}\,,
\end{equation}
where $\w=3$ in case of condition (B).\\
 (iv) If in addition to (iii)
$\rho(x)$ is of class $C^1$, and $\n\rho(x)$ and $x\cdot\n\rho(x)$ satisfy
condition (A) for $\w>3$ or condition (B), then $V(s,l)$ is of class $C^1$
and
\begin{equation}\label{sds}
 s\p_s\int \delta(s-x\cdot l)\rho(x)\,dx
 =\int\delta(s-x\cdot l)(x\cdot\n+3)\rho(x)\,dx\,.
\end{equation}
\end{lem}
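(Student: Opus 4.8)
The plan is to treat the four parts in the order stated, since each one builds on the previous. For part (i), I would pass to the standard representation of the advanced solution as an integral over the backward light cone: for $x^0\geq 0$,
\[
 \vph_\adv(x)=\int_{y^0\geq x^0,\ (x-y)^2=0} \frac{\rho(y)}{|\x-\y|}\,d^3y
 =\int_{\mR^3}\frac{\rho\big(x^0+|\x-\y|,\y\big)}{|\x-\y|}\,d^3y,
\]
and then estimate using condition (A). The key geometric observation is that on this cone $|y|\geq\con(|x|+1)$ when one is, say, in the regime $|x^0|+|\x|$ large with $x^0\geq0$; more precisely one splits the $\y$-integral according to whether $|\x-\y|$ is comparable to $|x|$ or not, and in each region bounds $(|y|+1)^{-\w}$ by an appropriate power, integrating the resulting radial integral with Lemma~\ref{estlem}. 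The condition $\w>2$ is exactly what makes the radial integral $\int r^2(r+\con)^{-\w}\,dr$ converge and produce the claimed decay $(|x|+1)^{2-\w}$. Continuity follows by dominated convergence once the pointwise bound is in hand.

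For part (ii), the identity \eqref{xdph} is most cleanly obtained by a scaling argument: write $\vph_\adv$ as a convolution $\vph_\adv=4\pi D_\adv*\rho$, note that $D_\adv$ is homogeneous of degree $-2$, so that $(x\cdot\n_x+2)D_\adv=0$ as a distribution, and then transfer the Euler operator $x\cdot\n_x$ onto $\rho$ inside the convolution via the standard identity $x\cdot\n_x(f*g)=\big((x\cdot\n+d_f)f\big)*g+f*\big((x\cdot\n+d_g-d_f)g\big)$ adapted to homogeneous kernels; here this yields $x\cdot\n\vph_\adv = 4\pi D_\adv*\big((y\cdot\n+2)\rho\big)$, which is \eqref{xdph} up to the normalization already absorbed in \eqref{adv}. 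The hypotheses on $\n\rho$ and $x\cdot\n\rho$ (again condition (A) with $\w>2$) guarantee by part (i) that the right-hand side is well defined and continuous, hence that $\vph_\adv\in C^1$; the differentiation under the integral sign is justified by the same kind of dominated-convergence estimate as in (i).

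For parts (iii) and (iv) the relevant object is $V(s,l)=\int\delta(s-x\cdot l)\rho(x)\,dx$, an integral over the null hyperplane $\{x\cdot l=s\}$. Fixing $l$ (scaled to $l^0=1$) and choosing adapted affine coordinates on this hyperplane, one is reduced to a three-dimensional integral of $\rho$ over a spacelike plane at null ``distance'' governed by $s$; the point is that on $\{x\cdot l=s\}$ one has $|x|\gtrsim|s|$ away from a bounded set, so condition (A) with $\w>3$ (resp.\ the $\theta(x^2)+(|x|+1)^{-\vep}$ structure of (B), where the $\theta(x^2)$ part contributes nothing on a null plane that only grazes the light cone and the remaining term behaves like the $\w=3+\vep$ case) gives a convergent integral bounded by $\con(|s|+1)^{3-\w}$, which is \eqref{estV}; continuity in $(s,l)$ is routine. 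For \eqref{sds} I would again use homogeneity: $\delta(s-x\cdot l)$ as a function of $(s,x)$ satisfies $(s\p_s+x\cdot\n_x)\delta(s-x\cdot l)=-\delta(s-x\cdot l)$ (since $\delta$ is homogeneous of degree $-1$ in its argument), so integrating against $\rho$, integrating the $x\cdot\n_x$ term by parts (the boundary terms vanish by the decay of $\rho$), and using $\n_x\cdot x=3$, one converts $s\p_s V$ into $\int\delta(s-x\cdot l)(x\cdot\n+3)\rho\,dx$; the $C^1$ hypotheses on $\rho$, $\n\rho$, $x\cdot\n\rho$ make all these manipulations legitimate and the resulting expression continuous.

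The main obstacle I anticipate is part (iii) under the weaker condition (B) rather than (A): one must check carefully that the contribution of the ``tail'' region near the light cone (where the $\theta(x^2)$ term in \eqref{cB} is active) does not destroy the $(|s|+1)^{0}$ decay claimed for $\w=3$. This requires a slightly delicate splitting of the null-plane integral according to the sign of $x^2$ and a book-keeping of which part of the bound \eqref{cB} dominates in each region; the borderline exponent $\w=3$ means there is no room to spare, and one has to exploit that the null plane $\{x\cdot l=s\}$ meets the solid light cone $\{x^2\geq 0\}$ in a set whose transverse size grows only linearly, so that the $\theta(x^2)$ piece still yields a finite, $s$-independent bound. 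Everything else is a routine combination of the explicit cone/plane representations, homogeneity of the kernels, and the elementary integral estimate of Lemma~\ref{estlem}.
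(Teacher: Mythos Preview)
Your overall strategy is sound and matches the paper's in structure, but there is one genuine misstep and a couple of smaller points worth flagging.

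\textbf{The $\theta(x^2)$ contribution in part (iii).} Your parenthetical claim that ``the $\theta(x^2)$ part contributes nothing on a null plane that only grazes the light cone'' is false, and it contradicts what you correctly say later in your obstacle paragraph. For $s\neq 0$ the null hyperplane $\{x\cdot l=s\}$ meets the solid cone $\{x^2\geq 0\}$ in an unbounded region (in light-cone coordinates $\alpha=x^0-x^3=s$, $\beta=x^0+x^3$, $\x_\bot$, this is the paraboloid $|\x_\bot|^2\leq s\beta$), and this region carries the entire borderline contribution. The paper handles it exactly as you sketch at the end: at fixed $\beta$ the transverse area is $\pi|s\beta|$, and combining this linear growth with the $(1+s^2+\beta^2+|\x_\bot|^2)^{-3/2}$ decay yields $K(\beta)\leq\con\,|s|(1+|s|+|\beta|)^{-2}$, which integrates to a bound independent of $s$. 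So your obstacle paragraph has the right mechanism; just drop the earlier ``contributes nothing'' heuristic, which would mislead a reader.

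\textbf{Parts (ii) and (iv): homogeneity versus explicit coordinates.} Here your route differs from the paper's. The paper derives \eqref{xdph} and \eqref{sds} by writing out the integrals in explicit coordinates ($\z$-shift for $\vph_\adv$, light-cone $(\alpha,\beta,\x_\bot)$ for $V$), forming a divergence identity, integrating over a large box, and checking that the surface terms vanish by the assumed decay. Your proposal via the Euler operator and homogeneity of $D_\adv$ and of $\delta$ is cleaner conceptually and gives the identities without any coordinate choice; the trade-off is that the ``boundary terms vanish by decay'' step then sits inside a distributional integration by parts, and under condition (B) in (iv) this is not entirely routine---the paper's explicit surface-term estimates are where the work actually hides. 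Either approach is fine, but if you go the homogeneity route you should still justify the integration by parts with the same care.

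\textbf{Minor slip.} In four-dimensional spacetime $\n_a x^a=4$, not $3$; your final identity \eqref{sds} is nonetheless correct because the extra $-1$ comes from the degree of homogeneity of $\delta$, so $-1+4=3$. Just fix the arithmetic in the write-up.
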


\begin{proof}
 (i) The use of the integral formula \eqref{adv} and the bound \eqref{cA}
 gives
\begin{equation}\label{advbo}
 |\vph_\adv(x)|=\bigg|\int\frac{\rho(x^0+|\z|,\x+\z)}{|\z|}d^3z\bigg|
 \leq \con\int_0^\infty I(x^0,|\x|,r)dr\,,
\end{equation}
where we have denoted $r=|\z|$, and
\begin{equation}
 I(x^0,|\x|,r)=\int_0^2\frac{r\,d u}{[1+(r+x^0)^2+(r-|\x|)^2+2|\x|ru]^{\frac{\w}{2}}}
\end{equation}
with $u=\cos[\measuredangle(\z,\x)]+1$. Using the estimate \eqref{est} we
find
\begin{equation}
 I(x^0,|\x|,r)
 \leq\frac{\con\,r}{[1+(r+x^0)^2+(r+|\x|)^2][1+(r+x^0)^2+(r-|\x|)^2]^{\frac{\w}{2}-1}}\,,
\end{equation}
so the integral on the rhs of \eqref{advbo} is well-defined, and the
continuity of $\vph_\adv(x)$ easily follows. Moreover, for $x^0\geq0$ it
follows
\begin{equation}
 I(x^0,|\x|,r)\leq\frac{\con\,r}{[1+x^0+|\x|+r]^2[1+r+x^0+|r-|\x||]^{\w-2}}\,,
\end{equation}
and then
\begin{equation}
 \int_0^{|\x|}I(x^0,|\x|,r)dr\leq\frac{\con}{[1+x^0+|\x|]^\w}\int_0^{|\x|}rdr
 \leq\frac{\con}{(|x|+1)^{\w-2}}\,,
\end{equation}
and (with $\si=r-|\x|$)
\begin{equation}
 \int_{|\x|}^\infty I(x^0,|\x|,r)dr
 \leq\con\int_0^\infty\frac{d\si}{(\si+x^0+|\x|+1)^{\w-1}}\leq\frac{\con}{(|x|+1)^{\w-2}}\,,
\end{equation}
which ends the proof of (i).

 (ii) We integrate the identity
\begin{multline}
 \frac{x\cdot\n_x\rho(x^0+|\z|,\x+\z)}{|\z|}
 =\frac{(y\cdot\n_y+2)\rho(y)_{|y=(x^0+|\z|,\x+\z)}}{|\z|}\\
 -\frac{\p}{\p z^i}\Big(\frac{z^i\rho(x^0+|\z|,\x+\z)}{|\z|}\Big)
\end{multline}
over $|\z|\leq R$ and transform the integral of the second term on the rhs to
a surface integral. This surface contribution is bounded by $RI(x^0,|\x|,R)$,
so it vanishes in the limit $R\to \infty$. The formula \eqref{xdph} follows.

 (iii) We choose the space basis in which $e_3=\mathbf{l}$, introduce new
 variables $\al=x^0-x^3$, $\beta=x^0+x^3$, and denote $\x_\bot=x^1e_1+x^2e_2$.
Conditions \eqref{cA} and \eqref{cB} may be then written as
\begin{align}
 &\text{(A)}& & |\rho(x)|\leq\frac{\con}{(1+\al^2+\beta^2+|\x_\bot|^2)^\frac{\w}{2}}\,,\\
 &\text{(B)}&
 & |\rho(x)|\leq\frac{\con\,\theta(\al\beta-|\x_\bot|^2)}{(1+\al^2+\beta^2+|\x_\bot|^2)^\frac{3}{2}}
 +\frac{\con}{(1+\al^2+\beta^2+|\x_\bot|^2)^\frac{3+\vep}{2}}\,,
\end{align}
and $V$ takes the form
 $V(s,l)=\frac{1}{2}\int\rho(\al\!=\!s,\beta,\x_\bot)d\beta d^2x_\bot$.
For the case (A) with $\w>3$ we then have
\begin{multline}
 |V(s,l)|\leq\con\int \frac{d\beta d^2x_\bot}{(1+s^2+\beta^2+|\x_\bot|^2)^\frac{\w}{2}}\\
 \leq\con\int_0^\infty\frac{r^2dr}{(1+|s|+r)^\w}=\frac{\con}{(1+|s|)^{\w-3}}\,.
\end{multline}
For the case (B) it is now sufficient to restrict integral to the region
$x^2\geq0$, which leads to the estimation of the integral $\int
K(\beta)d\beta$, with
\begin{align}
 K(\beta)&=\int\frac{\theta(s\beta-|\x_\bot|^2)d^2x_\bot}{(1+s^2+\beta^2+|\x_\bot|^2)^\frac{3}{2}}
 =\con\,\theta(s\beta)\int_0^{s\beta}\frac{d\xi}{(1+s^2+\beta^2+\xi)^\frac{3}{2}}\\
 &\leq\con\frac{\theta(s\beta)s\beta}{(1+s^2+\beta^2)^\frac{3}{2}}
 \leq\con\frac{\theta(s\beta)|s|}{(1+|s|+|\beta|)^2}\,.
\end{align}
This leads to
\begin{equation}
 \int K(\beta)d\beta\leq\con\,|s|\int_0^\infty \frac{d\si}{(1+|s|+\si)^2}\leq\con\,.
\end{equation}

(iv) We integrate the identity
\begin{multline}
 s\p_s\rho(s,\beta,\x_\bot)
 =(x\cdot\n+3)\rho(x)_{|\al=s}\\
 -\p_\beta\big(\beta\rho(s,\beta,\x_\bot)\big)
 -\sum_{i=1}^2\p_{x^i}\big(x^i\rho(s,\beta,\x_\bot)\big)
\end{multline}
over the region $|\x_\bot|\leq R$, $|\beta|\leq B$. The contribution of the
surface terms is bounded by
\begin{equation}\label{BR}
 \sum_{\ep=\pm}B\int_{\mR^2}|\rho(s,\pm B,\x_\bot)|d^2x_\bot
 +R^2\int_{\mR\times S^1}|\rho(s,\beta,R\mathbf{n})|d\beta d\w(\mathbf{n})\,,
 \end{equation}
 where $d\w(\mathbf{n})$ is the angle measure on the unit circle $S^1$.
Condition (A) with $\w>3$ is stronger than condition (B), so it is sufficient
to consider the latter, in the form given in the proof of (iii) above. Then
the first contribution in \eqref{BR} is bounded by
\begin{gather}
 \con\bigg[\int_0^{|s|B}\frac{Bd\zeta}{(1+s^2+B^2+\zeta)^\frac{3}{2}}
 +\int_0^\infty\frac{Bd\zeta}{(1+s^2+B^2+\zeta)^\frac{3+\ep}{2}}\bigg]\\
 \leq \frac{\con\, |s|B^2}{(1+|s|+B)^3}+\frac{\con\, B}{(1+|s|+B)^{1+\ep}}\,,
\end{gather}
where we have substituted $|\x_\bot|^2=\zeta$. The second term in \eqref{BR}
is bounded by
\begin{gather}
 \con\bigg[\int_{R^2/|s|}^\infty \frac{R^2 d\beta}{(1+|s|+R+\beta)^3}
 +\int_0^\infty\frac{R^2 d\beta}{(1+|s|+R+\beta)^{3+\ep}}\bigg]\\
 \leq \frac{\con\,R^2}{(1+|s|+R+R^2/|s|)^2}+ \frac{\con\,R^2}{(1+|s|+R)^{2+\ep}}\,.
\end{gather}
All these bounds vanish in the limit $B,R\to\infty$, which leads to the
identity~\eqref{sds}.
\end{proof}

\section{Electromagnetic fields and Lorenz potentials}\label{elmfields}

We apply now the results of the discussion of the last two sections to the
case of electromagnetic fields. We start with the Lorenz potentials of free
fields, which we assume to have been produced as radiation fields by some
conserved currents.\pagebreak[2]

\begin{thm}\label{elmdec}
Let $J(x)$ satisfy Assumption \ref{FAJ} in Section \ref{typspec}. Then the
Lorenz potential $A_\rad(x)$ of the radiated field of this current and the
scalar field $C_\rad(x)=x\cdot A_\rad(x)$ satisfy the following decay
estimates:
\begin{gather}
  \left.\begin{aligned}
        |A_\rad(x)|&\\[1ex]
        |\n C_\rad(x)|&
        \end{aligned}\ \right\}
        \leq\frac{\con}{|x|+1}\Big[\frac{1}{(||x^0|-|\x||+1)^\vep}
 +\theta(-x^2)\Big]\,,\label{AC}\\[1ex]
 |\n A_\rad(x)|\leq\frac{\con}{|x|+1}\Big[\frac{1}{(||x^0|-|\x||+1)^{1+\vep}}
 +\frac{\theta(-x^2)}{|x|+1}\Big]\,,\label{nA}\\[1ex]
 |\n\n C_\rad(x)|\leq\frac{\con}{|x|+1}\bigg[\frac{1}{(||x^0|-|\x||+1)^{1+\vep}}
 +\frac{\theta(-x^2)}{(|x|+1)^\frac{1}{2}(|\x|-|x^0|+1)^\frac{1}{2}}\bigg]\,,\\
 \label{nnC}
 \end{gather}
\begin{align}
 |\n\n A_\rad(x)|&\leq \frac{\con}{|x|+1}\,,\label{2A}\\[1ex]
  \left.\begin{aligned}
  |(x\cdot\n+1) A_\rad(x)|&\\[1ex]
  |\n(x\cdot\n C_\rad)(x)|&
  \end{aligned}\ \right\}
  &\leq\frac{\con}{(|x|+1)(||x^0|-|\x||+1)^\vep}\,,\label{xdAC}\\[1ex]
 |(x\cdot\n)^k C_\rad(x)|
 &\leq\frac{\con}{(|x|+1)^\vep}\,,\quad k=1,2\,.\label{xdC}
\end{align}
\end{thm}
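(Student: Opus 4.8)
The plan is to write each of the fields in \eqref{AC}--\eqref{xdC} as a solution of the homogeneous wave equation of the form \eqref{freefi}, read off its generating function in terms of the single auxiliary function
\[
 V^a(s,l)=\int\delta(s-x\cdot l)\,J^a(x)\,dx\,,\qquad
 A_\rad^a(x)=-\frac{1}{2\pi}\int\dot V^a(x\cdot l,l)\,d^2l\,,
\]
(homogeneous of degree $-1$ in $l$; vectors are scaled to $l^0=1$ in all estimates below), and then invoke Theorem~\ref{estWfi}. The decay hypotheses \eqref{dW}, \eqref{dWW2}, \eqref{dWW3} needed there will all reduce to the single master estimate
\[
 |L^\al\p_s^kV^a(s,l)|\leq\frac{\con}{(|s|+1)^{k+\vep}}\,,\qquad k\leq3\,,\ |\al|\leq2\,,
\]
which in particular also gives $|L^\al s^j\p_s^jV^a(s,l)|\leq\con(|s|+1)^{-\vep}$ for $j\leq3$.

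To obtain the master estimate I would use two reductions. By Lemma~\ref{estlemrv}(iv), $s\p_sV^a(s,l)=\int\delta(s-x\cdot l)(x\cdot\n+3)J^a(x)\,dx$, and, iterating, $(s\p_s)^kV^a=\int\delta(s-x\cdot l)(x\cdot\n+3)^kJ^a\,dx$; and since $\p_{l^b}\delta(s-x\cdot l)=-x_b\delta'(s-x\cdot l)$, an integration by parts in $x$ gives $L_{ab}V^c(s,l)=\int\delta(s-x\cdot l)(L^x_{ab}J^c)(x)\,dx$ with $L^x_{ab}=x_a\p_b-x_b\p_a$, which commutes with $x\cdot\n$. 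Thus every cone/dilation derivative of $V^a$ is an integral $\int\delta(s-x\cdot l)\Psi(x)\,dx$ with $\Psi$ built from $(x\cdot\n+3)^jL^{x\beta}J^a$; by Assumption~\ref{FAJ} (using the bounds on $\n^\al J$ for $|\al|\leq3$ and on $\n^\al(x\cdot\n+3)J$ for $|\al|\leq2$), such $\Psi$ satisfies condition (B) of Appendix~\ref{decay} when $j=0$ and condition (A) with $\w=3+\vep$ when $j\geq1$. Lemma~\ref{estlemrv}(iii) then bounds the corresponding integral by $\con(|s|+1)^{-\vep}$ for $j\geq1$; since $L^\al V^a\to0$ as $|s|\to\infty$ (the relevant $\Psi$ decaying), integrating back in $s$ yields $|L^\al V^a|\leq\con(|s|+1)^{-\vep}$, and expanding $(s\p_s)^k$ in powers of $s^j\p_s^j$ (using $J\in C^3$) completes the master estimate.

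With this, the statements for $A_\rad$ are immediate: differentiating \eqref{freefi} under the integral, $\p_bA_\rad^a$ is represented with generating function $l_b\dot V^a$, $\p_b\p_cA_\rad^a$ with $l_bl_c\ddot V^a$, and, using $s\ddot V^a=\p_s(s\dot V^a-V^a)$, $(x\cdot\n+1)A_\rad^a$ with $s\dot V^a$. The master estimate shows that $V^a$ and $s\dot V^a$ meet the hypothesis of Theorem~\ref{estWfi}(ii) with $n=1$ (recall $\vep<\tfrac12$), that $l_b\dot V^a$ meets that of Theorem~\ref{estWfi}(iii) with $n=2$, and that $l_bl_c\ddot V^a$ meets that of Theorem~\ref{estWfi}(i) with $n=1$; this produces \eqref{AC} (first line), \eqref{xdAC} (first line), \eqref{nA} and \eqref{2A}.

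Finally, $C_\rad=x\cdot A_\rad$ satisfies $\Box C_\rad=0$ (Lorenz gauge together with $\Box A_\rad=0$), and its derivatives and dilations -- the quantities actually occurring in \eqref{AC}--\eqref{xdC} -- all have the form $x_aR^a$ with $R^a$ a radiation field whose generating function (derivative) is transverse, $l_a\,\cdot(\text{it})=0$, on account of current conservation (e.g.\ $\p_bC_\rad=A_{\rad,b}+x_a\p_bA_\rad^a$, the second term being $-\tfrac{1}{2\pi}\int l_b\,x_a\ddot V^a(x\cdot l,l)\,d^2l$ with $l_a\ddot V^a=0$; likewise $(x\cdot\n)C_\rad=x_a(x\cdot\n+1)A_\rad^a$, etc.). Choosing $e_3=\x/|\x|$ and using transversality to eliminate the $e_0$-component in favour of the transverse ones, the factor $|\x|$ in $x_aR^a$ appears only multiplied by $(x^0u-|\x|)$- and $\sqrt{1-u^2}$-type weights that vanish on the null direction $\x/|\x|$, and the cone integration by parts of Lemma~\ref{lemF} trades these for retarded-time factors, which is precisely what converts the apparent $|x|$-growth of $x\cdot A_\rad$ into the $(||x^0|-|\x||+1)^{-\vep}$ gain. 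Together with the master estimate for the surviving $\p_s^jV^a$ and Theorem~\ref{estWfi} (part (ii) with $n=1,2$, part (i) with $n=0$), this yields \eqref{AC} (second line), \eqref{nnC}, \eqref{xdAC} (second line) and \eqref{xdC}. The hard part of the whole argument is exactly this last step: organizing the cancellations so that transversality genuinely turns the $|x|$-growth of $x\cdot A_\rad$ and its dilations into decay; everything else is routine bookkeeping with Lemmas~\ref{estlemrv} and~\ref{lemF} and Theorem~\ref{estWfi}.
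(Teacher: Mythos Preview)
Your treatment of $A_\rad$ and the ``master estimate'' on $V$ is essentially correct and matches the paper in spirit (your range $k\le3$, $|\al|\le2$ overclaims slightly---only $k+|\al|\le3$ is available from $J\in C^3$ and \eqref{estJ}, \eqref{esthomJ}, and that is all one actually needs).

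The real gap is in the $C_\rad$ part. You correctly identify transversality $l_aV^a=0$ (a consequence of current conservation) as the mechanism that removes the spurious $|x|$-growth from $x\cdot A_\rad$, but the coordinate implementation you sketch is not carried out and, as written, does not obviously work: eliminating $\dV^0$ via transversality in the frame $e_3=\x/|\x|$ leaves
\[
 x_a\dV^a=(x^0u-|\x|)\,\dV^3+x^0\sqrt{1-u^2}\,R\cdot\dV_\perp\,,
\]
which still carries explicit $x^0$, $|\x|$ factors and is not of the homogeneity required by \eqref{freefi}; absorbing these with Lemma~\ref{lemF}-type integrations by parts would demand substantial further work that you have not supplied. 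You explicitly flag this as ``the hard part'', and indeed it is the one genuine idea in the proof.

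The paper bypasses this entirely by a one-line use of the cone-divergence identity \eqref{intdZ}. Since $Z^a(l)=V^a(x\cdot l,l)$ is transverse and homogeneous of degree $-1$ in $l$, one has $\int\p\cdot V(x\cdot l,l)\,d^2l=0$; expanding $\p\cdot V(x\cdot l,l)=x\cdot\dV(x\cdot l,l)+\p'\cdot V(x\cdot l,l)$ (where $\p'$ hits only the second argument) gives directly
\[
 C_\rad(x)=\frac{1}{2\pi}\int\p'\cdot V(x\cdot l,l)\,d^2l\,,\qquad
 \n_aC_\rad(x)=\frac{1}{2\pi}\int l_a\,\p'\cdot\dV(x\cdot l,l)\,d^2l\,,
\]
and analogously $x\cdot\n C_\rad$, $(x\cdot\n)^2C_\rad$ with $W=s\dV$, $U=s\dW$ in place of $V$. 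Now every quantity in \eqref{AC}--\eqref{xdC} is \emph{manifestly} of the form \eqref{freefi}, with generating function obtained by applying a single cone-divergence $\p'\cdot$ (one $L$-derivative) to $V,\dV,W,U$; the master estimate then feeds straight into Theorem~\ref{estWfi} with no explicit $x$ left to absorb. This is the missing idea.
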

\noindent
 The estimate \eqref{2A} could be refined, but this is not needed
for our purposes.
\begin{proof}
It follows from the assumption \eqref{estJ} and Lemma \ref{estlemrv} (iii)
that the function
\begin{equation}
 V(s,l)=\int\delta(s-x\cdot l)J(x)dx=\int J\Big(\frac{s+\x\cdot\mathbf{l}}{l^0},\x\Big)\frac{d^3x}{l^0}
\end{equation}
is of class $C^3$. The use of estimate \eqref{esthomJ}, relation \eqref{sds}
and the estimates of the form \eqref{estV} shows that
\begin{equation}
 |\p_s^k L^\al \dV(s,l)|\leq\frac{\con}{(|s|+1)^{k+1+\vep}}\,,\quad
   k+|\al|\leq2\,.
\end{equation}
From the representation
\begin{equation}\label{pA}
 A_\rad(x)=-\frac{1}{2\pi}\int \dV(x\cdot l,l)d^2l\,,
\end{equation}
by the use of the calculation
\begin{equation}
 \p\cdot V(x\cdot l,l)=x\cdot \dV(x\cdot l,l)+\p'\cdot V(x\cdot l,l)
\end{equation}
and the integral identity \eqref{intdZ}, we also find
\begin{align}
 C_\rad(x)&=\frac{1}{2\pi}\int\p'\cdot V(x\cdot l,l)d^2l\,,\\
 \n_a C_\rad(x)&=\frac{1}{2\pi}\int l_a\p'\cdot \dV(x\cdot l,l)d^2l\,.
\end{align}
(Similarly as $L'_{ab}$ in Lemma \ref{lemF} in Appendix \ref{fwe}, $\p'$ acts
only on the second argument.) All estimates now follow from the appropriate
statements of Theorem~\ref{estWfi}: estimates \eqref{AC} from (i) with $n=1$,
estimate \eqref{nA} from (iii) for $n=2$, estimate \eqref{nnC} from (ii) for
$n=2$, estimate \eqref{2A} from (i) with $n=1$. Next, with
$W(s,l)=s\dV(s,l)$, we have
\begin{align}
 (x\cdot \n+1)A_\rad(x)&=-\frac{1}{2\pi}\int \dot{W}(x\cdot l,l)d^2l\,,\label{pxd1A}\\
 \n_a(x\cdot\n C_\rad(x))&=\frac{1}{2\pi}\int l_a\p'\cdot\dW(x\cdot l,l)d^2l\,,\label{pdxdC}
\end{align}
so the estimates \eqref{xdAC} follow from Theorem \ref{estWfi} (ii) with
$n=1$.  Finally, with $U(s,l)=s\dW(s,l)$, we have
\begin{align}
 x\cdot\n C_\rad(x)&=\frac{1}{2\pi}\int \p'\cdot W(x\cdot l,l)d^2l\,,\label{pxdC}\\
 (x\cdot\n)^2C_\rad(x)&=\frac{1}{2\pi}\int\p'\cdot U(x\cdot l,l)d^2l\,, \label{pxdxdC}
\end{align}
so the estimates \eqref{xdC} follow from Theorem \ref{estWfi} (i) for $n=0$.
\end{proof}

Free in/out fields of the type discussed above are now augmented by ret/adv
fields to produce the total electromagnetic field.

\begin{thm}\label{totelm}
Let the electromagnetic potential $A$ satisfy Assumption \ref{FAJ} in Section
\ref{typspec}. Then the following estimates are satisfied:
\begin{equation}
 \left.\begin{aligned}
        |\n^\al A(x)|&\\[1ex]
        |\n C(x)|&
        \end{aligned}\ \right\}
        \leq\frac{\con}{|x|+1}\leq\frac{\con}{\tm\zm}\,,\quad
        |\al|=0,2\,,\label{Ab}
\end{equation}
\begin{multline}
        |\n A(x)|
        \leq\frac{\con}{|x|+1}\Big[\frac{1}{(||x^0|-|\x||+1)^{1+\vep}}
 +\frac{1}{|x|+1}\Big]\\
 \leq\frac{\con}{\tm\zm}\bigg[\frac{1}{\tm\zm}
 +\Big(\frac{\tm\zm}{\tm^2+\zm^2}\Big)^{1+\vep}\bigg]\,,\label{dAb}
\end{multline}
\begin{multline}
        |\n\n C(x)|
        \leq\frac{\con}{|x|+1}\Big[\frac{1}{(||x^0|-|\x||+1)^{1+\vep}}
 +\frac{1}{(|x|+1)^\frac{1}{2}(||x^0|-|\x||+1)^\frac{1}{2}}\Big]\\[1ex]
 \leq\frac{\con}{\tm\zm}\bigg[\frac{1}{\tm^\frac{1}{2}\zm^\frac{1}{2}}
 \Big(\frac{\tm\zm}{\tm^2+\zm^2}\Big)^\frac{1}{2}
 +\Big(\frac{\tm\zm}{\tm^2+\zm^2}\Big)^{1+\vep}\bigg]\,,\label{dCb}
\end{multline}
\begin{multline}
  \left.\begin{aligned}
  |(x\cdot\n+1) A(x)|&\\[1ex]
  |\n(x\cdot\n C(x))|&
  \end{aligned}\ \right\}
  \leq\frac{\con}{(|x|+1)(||x^0|-|\x||+1)^\vep}\\
  \leq\frac{\con}{\tm\zm}\Big(\frac{\tm\zm}{\tm^2+\zm^2}\Big)^\vep\,,\label{xdAb}
\end{multline}
\begin{gather}
 |(x\cdot\n)^k C(x)|\leq\frac{\con}{(|x|+1)^\vep}
 \leq\frac{\con}{\tm^\vep\zm^\vep}\,,\quad k=1,2\,,\\[1ex]
 |x^aF_{ab}(x)|\leq\frac{\con}{|x|+1}\leq\frac{\con}{\tm\zm}\,.
\end{gather}
\end{thm}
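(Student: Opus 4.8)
The plan is first to reduce the $\tm$--$\zm$ forms of all the bounds to the Minkowski ones: by Lemma~\ref{xest}, inequality~\eqref{xplus} gives $(|x|+1)^{-1}\le\con(\tm\zm)^{-1}$ and \eqref{xminus} gives $(||x^0|-|\x||+1)^{-1}\le 2\tm\zm(\tm^2+\zm^2)^{-1}$, and inserting these into the Minkowski estimates yields the stated $\tm$--$\zm$ estimates. For the Minkowski estimates I would split $A$ via Assumption~\ref{FAJ}: in $x^0\ge0$ write $A=A_\adv+A_\out$ and in $x^0\le0$ write $A=A_\ret+A_\inc$. Since the whole setup is symmetric under $x^0\mapsto-x^0$ (interchanging retarded/advanced and incoming/outgoing, and preserving all the bounding quantities), it is enough to establish the estimates for $x^0\ge0$, where correspondingly $C=x\cdot A=x\cdot A_\adv+x\cdot A_\out$. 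The radiation part needs no new work: $A_\out=A_\rad[J_\out]$ and $x\cdot A_\out=C_\rad[J_\out]$, and $J_\out$ has the properties assumed of $J$, so Theorem~\ref{elmdec} applied with $J$ replaced by $J_\out$ furnishes the bounds directly --- \eqref{AC} the $|A|$ and $|\n C|$ parts of \eqref{Ab}, \eqref{2A} the $|\n\n A|$ part, \eqref{nA} giving \eqref{dAb}, \eqref{nnC} giving \eqref{dCb}, \eqref{xdAC} giving \eqref{xdAb}, and \eqref{xdC} giving the $|(x\cdot\n)^kC|$ bound (using only $\theta(\pm x^2)\le1$ and, where $x^2<0$, that $|\x|-|x^0|=||x^0|-|\x||$).

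For the advanced part the pure-derivative bounds are easy. Since $\n^\al A_\adv$ is (a constant times) the advanced solution of $\Box u=4\pi\,\n^\al J$ and \eqref{estJ} shows that $\n^\al J$ satisfies condition~(A) of Appendix~\ref{decay}, $|\rho(x)|\le\con(|x|+1)^{-\w}$, with $\w=3+|\al|>2$, Lemma~\ref{estlemrv}(i) gives $|\n^\al A_\adv(x)|\le\con(|x|+1)^{-1-|\al|}$ for $x^0\ge0$, $|\al|\le2$. Writing $\n_bC_\adv=(A_\adv)_b+x^a\n_b(A_\adv)_a$ and $\n_b\n_cC_\adv=\n_b(A_\adv)_c+\n_c(A_\adv)_b+x^a\n_b\n_c(A_\adv)_a$ then gives $|\n C_\adv(x)|\le\con(|x|+1)^{-1}$ and $|\n\n C_\adv(x)|\le\con(|x|+1)^{-2}$. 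These polynomial bounds dominate the contributions of $A_\adv$ and $C_\adv$ needed in \eqref{Ab}, \eqref{dAb} and \eqref{dCb}.

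The dilation bounds for the advanced part are the least routine step. In the Lorenz gauge $\n\cdot A=0$, hence $\Box C=4\pi\,x\cdot J$; combining this with $[\Box,x\cdot\n]=2\Box$, i.e.\ $\Box\big((x\cdot\n)u\big)=(x\cdot\n+2)\Box u$, and iterating gives
\begin{gather*}
 \Box\big[(x\cdot\n+1)A_{\adv\,b}\big]=4\pi(x\cdot\n+3)J_b\,,\qquad
 \Box\big[(x\cdot\n)^kC_\adv\big]=4\pi\,x^a(x\cdot\n+3)^kJ_a\,,\\
 \Box\big[\n_b\big((x\cdot\n)C_\adv\big)\big]=4\pi\,\n_b\big(x^a(x\cdot\n+3)J_a\big)\,;
\end{gather*}
the point is that the potentially dangerous undifferentiated contributions of $J$ cancel, leaving only the homogeneity-improved sources $(x\cdot\n+3)^kJ$. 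By Lemma~\ref{estlemrv}(ii),(iv) (which commute $x\cdot\n$ through the advanced potential), or by uniqueness of the advanced solution with appropriate decay, the left sides equal $\varphi_\adv$ of the sources on the right. Now \eqref{esthomJ}, with its derivative bounds, gives $|\n^\al(x\cdot\n+3)J(x)|\le\con(|x|+1)^{-3-|\al|-\vep}$ for $|\al|\le2$, whence, expanding $(x\cdot\n+3)^kJ$ and trading the extra factor $x^a$ for one order of decay, $|(x\cdot\n+3)^kJ(x)|\le\con(|x|+1)^{-3-\vep}$ for $k\le2$ and $|\n_b(x^a(x\cdot\n+3)J_a)|\le\con(|x|+1)^{-3-\vep}$. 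Hence the sources for $(x\cdot\n+1)A_\adv$ and for $\n_b\big((x\cdot\n)C_\adv\big)$ satisfy condition~(A) with $\w=3+\vep$, and Lemma~\ref{estlemrv}(i) gives decay $(|x|+1)^{-1-\vep}$ on $x^0\ge0$, stronger than \eqref{xdAb} requires since $||x^0|-|\x||+1\le|x|+1$.

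The delicate case is $x^a(x\cdot\n+3)^kJ_a$, which decays only like $(|x|+1)^{-2-\vep}$, i.e.\ satisfies condition~(A) with $\w=2+\vep$ --- precisely the threshold $\w>2$ at which Lemma~\ref{estlemrv}(i) still applies; it then gives $|(x\cdot\n)^kC_\adv(x)|\le\con(|x|+1)^{-\vep}$ on $x^0\ge0$, matching \eqref{xdC}. Finally $x^aF_{ab}=(x\cdot\n+1)A_b-\n_bC$, so the last inequality follows from \eqref{xdAb} and \eqref{Ab}; adding the radiation and advanced (resp.\ retarded) contributions and translating by Lemma~\ref{xest} finishes the proof. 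The main obstacle is exactly this last bound: a term-by-term differentiation of $C=x\cdot A$ loses the $\vep$-gain, so one must use the scalar wave equation $\Box C=4\pi\,x\cdot J$ together with the homogeneity estimate \eqref{esthomJ} to bring the source down to exponent $2+\vep$, just at the integrability threshold of Lemma~\ref{estlemrv}(i).
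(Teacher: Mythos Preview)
Your overall architecture matches the paper's: split $A=A_\adv+A_\out$ for $x^0\ge0$ (and the reflected version for $x^0\le0$), take the radiation estimates from Theorem~\ref{elmdec}, handle $\n^\al A_\adv$ by Lemma~\ref{estlemrv}(i), handle $(x\cdot\n+1)A_\adv$ via Lemma~\ref{estlemrv}(ii), write $x^aF_{ab}=(x\cdot\n+1)A_b-\n_bC$, and convert to $(\tau,\z)$ by Lemma~\ref{xest}. All of that is correct and is exactly what the paper does.

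The one genuine gap is in your treatment of $(x\cdot\n)^kC_\adv$ and $\n_b\big((x\cdot\n)C_\adv\big)$. You compute $\Box$ of these quantities and then assert that they equal $\vph_\adv$ of the resulting sources, citing Lemma~\ref{estlemrv}(ii),(iv) ``or uniqueness''. Neither justification works as stated. Lemma~\ref{estlemrv}(ii) commutes $x\cdot\n$ through $\vph_\adv[\rho]$ for a source $\rho$ satisfying condition~(A) with $\w>2$; but $C_\adv=x\cdot A_\adv$ is \emph{not} of this form --- the natural source $y\cdot J$ decays only like $(|y|+1)^{-2}$, which is exactly the borderline the lemma excludes. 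And ``uniqueness of the advanced solution with appropriate decay'' is not available without further argument: for instance, already for a static point charge one finds $x\cdot A_\adv(x)-\vph_\adv[y\cdot J](x)=-Q\neq0$, so $C_\adv$ itself is \emph{not} $\vph_\adv$ of $y\cdot J$. (The discrepancy happens to be killed by $x\cdot\n$, but showing this in general amounts to proving that the flux of the conserved current $(y\cdot\n+3)J$ through the future null cone of $x$ vanishes --- true, since its total charge is zero, but not something the lemmas you cite give you.)

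The paper sidesteps this entirely by the elementary identities
\[
 x\cdot\n C_\adv = x^c(x\cdot\n+1)A_{\adv\,c}\,,\qquad
 (x\cdot\n)^2 C_\adv = x\cdot\n C_\adv + x^cx^d\n_d(x\cdot\n+1)A_{\adv\,c}\,,
\]
which reduce the $C_\adv$ bounds to $|\n^\al(x\cdot\n+1)A_\adv|\le\con(|x|+1)^{-1-|\al|-\vep}$, already obtained from Lemma~\ref{estlemrv}(i),(ii). If you replace your wave-equation/uniqueness step by these two lines, your proof is complete and coincides with the paper's.
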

\begin{proof}
To prove the estimates on $A$ and $C$ in the half-space $x^0\geq0$ it is
sufficient to add the estimates of $A_\out$ and $C_\out$, as given by Theorem
\ref{elmdec}, to the estimates of $A_\adv$ and $C_\adv$ following from the
assumptions on $J$ and Lemma~\ref{estlemrv} (i) and~(ii). For $x^0\geq0$ one
has
\begin{gather}
 |\n^\al A_\adv(x)|\leq\frac{\con}{(|x|+1)^{|\al|+1}}\,,\quad |\al|\leq3\,,\\
 \begin{aligned}
 |\n^\al(x\cdot\n+1)A_\adv(x)|
 &=\bigg|\int D_\adv(x-y)(y\cdot\n+3+|\al|)\n^\al J(y)dy\,\bigg|\\
 &\leq\frac{\con}{(|x|+1)^{1+|\al|+\vep}}\,,\quad |\al|\leq1\,.
 \end{aligned}
\end{gather}
Moreover,
\begin{gather}
 x\cdot \n C_\adv(x)=x^c(x\cdot\n+1)A_{\adv\,c}(x)\,,\\[1ex]
 (x\cdot\n)^2 C_\adv(x)=x\cdot\n C_\adv(x)
 +x^cx^d\n_d(x\cdot\n+1)A_{\adv\,c}(x)\,,\\[1ex]
 x^aF_{ab}(x)=(x\cdot\n+1)A_b(x)-\n_bC(x)\,.
\end{gather}
and all the estimates in the $x$-form now easily follow. Similarly for
$x^0\leq0$ with the use of $A_\inc$ and $A_\ret$. The $(\tau,\z)$-form of the
estimates is the consequence of Lemma \ref{xest}.
\end{proof}

\section{Special gauge in special variables}\label{spgava}

We consider now our field in the special gauge \eqref{specgauge}, which we
write in the form
\begin{equation}
 \cA=A-\n \gf\,,\quad \gf=\log(\tm\zm)C\,,\quad C(x)=x\cdot A\,.
\end{equation}
We need the components of this potential, and various expressions involving
them, in our special coordinates system. For the sake of this section we
introduce the abbreviation
\begin{equation}
 \ltz=\log(\tm\zm)\,.
\end{equation}
\begin{thm}\label{elmbo}
The special gauge potential $\hA_\mu(\tau,\z)$ is a $C^2$ field, and the
following estimates are satisfied:
\begin{gather}
 |\hF_{i\tau}|\leq\frac{\con}{\tm\zm}\,,\quad |z^i\hF_{i\tau}|\leq\frac{\con}{\tm}\,,\quad
 |\hF_{ij}|\leq\con\,,\quad |z^i\hF_{ij}|\leq\frac{\con}{\zm}\,,\\[1ex]
 |\hA_\tau|\leq\con\bigg[\frac{1+\log\tm}{\tm^{1+\vep}}+\frac{\log\zm}{\tm^3}\bigg]\,,\\[1.5ex]
 |\hA_i|\leq\con\frac{1+\log(\tm\zm)}{\zm}\,,\quad
 |z^i\hA_i|\leq\con\frac{1+\log(\tm\zm)}{\zm^\vep}\,,\\[1ex]
 |\p_i\hA_\tau|+|z^i\p_i\hA_\tau|\leq\con\frac{1+\log\tm}{\tm^{1+\vep}}\,,
\end{gather}
\begin{gather}
 |\p_\tau\hA_i|+|z^i\p_\tau\hA_i|\leq\frac{\con}{\tm}\,,\\
 |\p_i\hA_j|\leq\con(1+\log\tm)\,,\\
 |z^i\p_i\hA_j|+|z^i\p_j\hA_i|\leq\con\frac{1+\log(\tm\zm)}{\zm}\,,\\
 |z^iz^j\p_i\hA_j|\leq\con\frac{1+\log(\tm\zm)}{\zm^\vep}\,.
\end{gather}
\end{thm}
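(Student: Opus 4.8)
The plan is to express, in the coordinates \eqref{tauzed}, every component $\hA_\mu$ and every field--strength component $\hF_{\mu\nu}$ purely in terms of the Minkowski--frame quantities $A$, $C=x\cdot A$ and their derivatives up to second order, evaluated at $x(\tau,\z)=(\tau\zm,\tm\z)$, and then to insert the decay estimates of Theorem~\ref{totelm} together with Lemma~\ref{xest} and the identities $\tm^2-\tau^2=1$, $\zm^2-|\z|^2=1$, $\tau^2|\z|^2=\tm^2\zm^2-\tmz^2$ recorded in Appendix~\ref{spvar}. The single algebraic identity I would use repeatedly is
\[
 \n_a C=(x\cdot\n+1)A_a+x^bF_{ab}\,,
\]
obtained by differentiating $C=x^bA_b$; both terms on its right--hand side are controlled by Theorem~\ref{totelm}, the first by \eqref{xdAb}, the second by the $|x^aF_{ab}|$--estimate there.

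I would first handle the field--strength bounds. Since $\hF_{\mu\nu}$ is gauge invariant, it equals $\frac{\p x^a}{\p\zeta^\mu}\frac{\p x^b}{\p\zeta^\nu}F_{ab}$ with the Jacobian taken from \eqref{trvar}. Expanding the contraction and using the antisymmetry of $F$, many terms vanish (such as $z^iz^jF_{ij}$), and after collecting with the help of $\tm^2-\tau^2=1$ and the $\tmz$--identity, the survivors are either multiples of $x^aF_{ab}$ (bounded by $\con/(\tm\zm)$) or single components of $F$, bounded by $2|\n A|$ via \eqref{dAb}. Substituting $|x^0|=|\tau|\zm$, $|\x|=\tm|\z|$ and $\tm^2+\zm^2\le 2\tm\zm\bigl(\big||x^0|-|\x|\big|+1\bigr)$ from \eqref{xminus} then yields the stated bounds on $\hF_{i\tau}$, $z^i\hF_{i\tau}$, $\hF_{ij}$, $z^i\hF_{ij}$; here one splits into $\tm\le\zm$ and $\tm\ge\zm$ in order to convert the factor $(\tm\zm/(\tm^2+\zm^2))^{1+\vep}$ into a power of $\tm$ or of $\zm$.

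For the potential I would write $\hA_\mu=\hat A_\mu-\p_\mu\gf$ with $\hat A_\mu=\frac{\p x^a}{\p\zeta^\mu}A_a$ and, since $\gf=\ltz\,C$ with $\p_\tau\ltz=\tau/\tm^2$ and $\p_i\ltz=z^i/\zm^2$, $\p_\mu\gf=(\p_\mu\ltz)C+\ltz\,\frac{\p x^a}{\p\zeta^\mu}\n_aC$. The point of the gauge \eqref{specgauge} is that the slowly decaying pieces cancel: using $C=\tau\zm A_0+\tm z^jA_j$ together with $\tm^2-\tau^2=1$ and $\zm^2-|\z|^2=1$ one finds $\hat A_\tau-(\p_\tau\ltz)C=\zm A_0/\tm^2$ and $\hat A_i-(\p_i\ltz)C=\tm(A_i-z^iz^jA_j/\zm^2)$, i.e.\ a gain of $\tm^{-2}$ (respectively of $\zm^{-2}$ after contracting with $z^i$). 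In the remaining $\ltz$--term, the identity for $\n_aC$ and the antisymmetry of $F$ collapse the $F$--contribution to a small multiple (for $\p_\tau C$, to $\zm z^iF_{0i}/\tm$, which one further rewrites through $z^ix^aF_{ai}$), while the $(x\cdot\n+1)A$--contribution is treated by observing that $\tau$ times it equals $x\cdot\n C$ up to a lower--order remainder; the ensuing division by $\tau$ is harmless because for $|\tau|\le1$ both the expression and the target bound are $O(1+\log\zm)$, whereas for $|\tau|\ge1$ one has $|\tau|\ge\tm/\sqrt2$, which supplies the extra decay. One then inserts $|x\cdot\n C|\le\con(\tm\zm)^{-\vep}$, $|(x\cdot\n+1)A|\le\con(\tm\zm)^{-1}(\tm\zm/(\tm^2+\zm^2))^\vep$, $|x^aF_{ab}|\le\con/(\tm\zm)$ and the crude $|A|,|\n C|\le\con/(\tm\zm)$ from \eqref{Ab}, and absorbs $\ltz=\log\tm+\log\zm$ by using that $\log\tm/\tm^\vep$ and $\log\zm/\zm^\vep$ are bounded. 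This gives the estimates for $\hA_\tau$, $\hA_i$, $z^i\hA_i$; the estimates for $\p_i\hA_\tau$, $\p_\tau\hA_i$, $\p_i\hA_j$ and their $z$--contractions follow by the same procedure after one more differentiation, now calling on the second--order estimates \eqref{dAb}, \eqref{dCb}, \eqref{xdAb} and the $(x\cdot\n)^kC$--estimate of Theorem~\ref{totelm}. The $C^2$--regularity of $\hA_\mu$ is immediate from the $C^3$--regularity of $A$ in Assumption~\ref{FAJ} and the smoothness of the coordinate change and of $\ltz$ (recall $\tm,\zm\ge1$).

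The hard part will be the bookkeeping of these cancellations: they must be carried out cleanly enough that every surviving term is matched, with exactly the right power of $\tm\zm$, to an estimate of Theorem~\ref{totelm}; in particular one must exploit the improved $(x\cdot\n)$--type bounds — never merely $|A|$ or $|\n A|$ — both to gain the extra $\vep$--decay and, decisively, to convert the bare logarithm produced by the $\ltz\,\p_\mu C$ term into a bounded quantity such as $\log\zm/\zm^\vep$. Keeping straight the Heaviside factors $\theta(x^2),\theta(-x^2)$ and the two regimes $\tm\le\zm$ and $\tm\ge\zm$ makes the computation lengthy, but presents no further difficulty.
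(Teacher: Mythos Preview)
Your proposal is correct and follows essentially the same strategy as the paper: express $\hA_\mu$ and $\hF_{\mu\nu}$ through the Jacobian \eqref{trvar}, exhibit the cancellation in $\hat A_\mu-(\p_\mu\ltz)C$ (the paper records the same identities $\hA_\tau=\frac{\zm}{\tm^2}A_0-\ltz\,\p_\tau C$ and $\hA_i=\tm\,d^{ij}A_j-\ltz\,\p_iC$), and then feed in the estimates of Theorem~\ref{totelm} together with Lemma~\ref{xest}.

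The only substantive difference is in how the improved $(x\cdot\n)$--decay is extracted from $\p_\mu C$. You route this through the algebraic identity $\n_aC=(x\cdot\n+1)A_a+x^bF_{ab}$, which forces you to rewrite the $F$--contribution via $z^ix^aF_{ai}$, divide by $\tau$, and treat $|\tau|\le1$ separately. The paper instead uses the geometric decompositions
\[
 \p_\tau x=\tfrac{\tau}{\tm^2}x+\tfrac{\zm}{\tm^2}e_0\,,\qquad
 z^i\p_ix=x-\tfrac{\tau}{\zm}e_0
\]
(see \eqref{cochange} and the lines following it), which \emph{directly} give $\p_\tau C=\frac{\tau}{\tm^2}x\cdot\n C+\frac{\zm}{\tm^2}\n_0C$, $z^i\p_iC=x\cdot\n C-\frac{\tau}{\zm}\n_0C$, and analogous formulas for $z^i\p_iA_b$, $\p_i\p_\tau C$, $z^iz^j\p_i\p_jC$, etc. This avoids any division by $\tau$ and any case split, and extends mechanically to the second derivatives where your sketch is thinnest: each additional $z^i\p_i$ simply pulls out one more $x\cdot\n$ plus an $e_0$--remainder with an explicit bounded coefficient, so the resulting terms match the right--hand sides of \eqref{xdAb}, \eqref{dCb} and the $(x\cdot\n)^kC$--bound without further manipulation. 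Both routes arrive at the same place, but the paper's is shorter and less error--prone; in particular you need not invoke $\n_aC=(x\cdot\n+1)A_a+x^bF_{ab}$ at all.
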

The proof below allows formulation of more restrictive, but at the same time
more sophisticated and less transparent estimates. These given above are
sufficient for our purposes.
\begin{proof}
According to Assumption \ref{FAJ} the potential $A$ is of class $C^3$, so
$\cA$ is of class $C^2$. For the calculation of the components and
derivatives of $A$ and $C$ we use the following form of the coordinates
change:
\begin{equation}\label{cochange}
 \p_\tau x=\frac{\tau}{\tm^2}x+\frac{\zm}{\tm^2}(1,0)\,,\qquad
 \p_i x=\Big(\tau\frac{z^i}{\zm},\tm\delta^{i,.}\Big)\,,
\end{equation}
which leads further to additional identities
\begin{gather}
 z^i\p_ix=x-\frac{\tau}{\zm}(1,0)\,,\\
 \p_i\p_\tau x=\frac{\tau}{\tm^2}\p_i x+\frac{z^i}{\tm^2\zm}(1,0)\,,
 \qquad
 \p_i\p_j x=\frac{\tau}{\zm}d^{ij}(1,0)\,,
\end{gather}
where $d^{ij}$ is given by
\begin{equation}\label{dd}
 d^{ij}=\delta^{ij}-\frac{z^iz^j}{\zm^2}\,,
\end{equation}
and some of its properties are
\begin{gather}
 \p_i\Big(\frac{z^j}{\zm}\Big)=\frac{d^{ij}}{\zm}\,,\qquad
 z^i d^{ij}=\frac{z^j}{\zm^2}\,,\\[1ex]
 \p_kd^{ij}=\frac{-1}{\zm^2}(z^id^{jk}+z^jd^{ik})\,,\qquad
 z^i\p_i d^{jk}=-2\frac{z^jz^k}{\zm^4}\,.
\end{gather}

 A straightforward calculation now gives
\begin{align}
 \hF_{i\tau}&=(\p_i x^a)(\p_\tau x^b)F_{ab}=(\p_ix^a)\Big(\frac{\tau}{\tm^2}F_{ab}x^b
 +\frac{\zm}{\tm^2}F_{a0}\Big)\,,\\
 z^i\hF_{i\tau}&=\Big(\frac{\zm}{\tm^2}+\frac{\tau^2}{\tm^2\zm}\Big)x^aF_{a0}\,,\\
 \hF_{ij}&=(\p_ix^a)(\p_jx^b)F_{ab}\,,\\
 z^i\hF_{ij}&=\Big(x^aF_{ab}-\frac{\tau}{\zm}F_{0b}\Big)(\p_jx^b)\,.
\end{align}
\begin{align}
 \hA_\tau&=\frac{\zm}{\tm^2}A_0-\ltz \p_\tau C\,,\\[1ex]
 \hA_i&=\tm\, d^{ij}A_j-\ltz \p_iC\,,\\[1ex]
 z^i\hA_i&=\frac{\tm}{\zm^2}z^iA_i-\ltz z^i\p_iC
\end{align}
\begin{equation}
 \p_i\hA_\tau=
 \frac{z^i}{\tm^2\zm}A_0+\frac{\zm}{\tm^2}\p_i A_0-\frac{z^i}{\zm^2}\p_\tau C
 -\ltz \p_i\p_\tau C\,,
\end{equation}
\begin{equation}
 z^i\p_i\hA_\tau=\frac{\z^2}{\tm^2\zm}A_0+\frac{\zm}{\tm^2}z^i\p_i A_0
 -\frac{|\z|^2}{\zm^2}\p_\tau C-\ltz z^i\p_i\p_\tau C\,,
\end{equation}
\begin{equation}
 \p_i\hA_j=\tm\big[(\p_id^{jk})A_k+d^{jk}\p_i A_k\big]
 -\frac{z^i}{\zm^2}\p_j C -\ltz \p_i\p_j C\,,
\end{equation}
\begin{equation}
 z^i\p_i\hA_j=\tm\Big[-2\frac{z^jz^k}{\zm^4}A_k
 +d^{jk}z^i\p_i A_k\Big] -\frac{|\z|^2}{\zm^2}\p_jC
 -\ltz z^i\p_i\p_jC\,,
\end{equation}
\begin{equation}
 z^iz^j\p_i\hA_j=\frac{\tm}{\zm^2}\Big[-2\frac{|\z|^2}{\zm^2}z^kA_k
 +z^iz^k\p_i A_k\Big]
 -\frac{|\z|^2}{\zm^2}z^j\p_jC
 -\ltz z^iz^j\p_i\p_jC\,,
\end{equation}

For the estimation of all these field and potential expressions we use
Theorem \ref{totelm}. We note that
\begin{equation}
 |\p_i x|\leq\con\tm\,,\qquad |\p_\tau x|\leq\con\zm\,,\qquad
  |\p_z^\al d^{ij}|\leq\frac{\con}{\zm^{|\al|}}\,,\ |\al|\leq 2\,.
\end{equation}
Moreover, when using the estimates \eqref{dAb}, \eqref{dCb} and \eqref{xdAb}
we shall make use of the obvious inequalities
\begin{equation}\label{tauzedb}
 \frac{\tm\zm}{\tm^2+\zm^2}
 \leq\min\bigg\{\frac{1}{2},\frac{\tm}{\zm},\frac{\zm}{\tm}\bigg\}
 \leq\left\{\begin{aligned}&\tm\zm^{-1}\,,\\
                            &\tm^{-1}\zm\,,\\
                            &2^{-1}\end{aligned}\right.
\end{equation}
(the choice of one of them will be dictated by a particular need). In this
way we find
\begin{align}
 |\hF_{i\tau}|&\leq\con\Big(|F_{ab}x^b|+\frac{\zm}{\tm}|F_{a0}|\Big)
 \leq\frac{\con}{\tm\zm}\,,\\
 |z^i\hF_{i\tau}|&\leq\con\Big(\frac{\zm}{\tm^2}+\frac{1}{\zm}\Big)|x^aF_{a0}|
 \leq\frac{\con}{\tm}\bigg(\frac{1}{\tm^2}+\frac{1}{\zm^2}\bigg)\,,\\
 |\hF_{ij}|&\leq\con\tm^2|F_{ab}|\leq\con\,,\\
 |z^i\hF_{ij}|&\leq\con\Big(\tm |x^aF_{ab}|+\frac{\tm^2}{\zm}|F_{0b}|\Big)
 \leq\frac{\con}{\zm}\,,
\end{align}
which exhausts the claims on the field $F$.

The estimation of the potential expressions is more tedious, and we consider
it in steps. In the first step we estimate the $\tau$- and $\z$-dependent
coefficients, which gives
\begin{gather}
 |\hA_\tau|\leq\frac{\zm}{\tm^2}|A_0|+\ltz |\p_\tau C|\,,\\
 |\hA_i|\leq \con\tm|A_j|+\ltz |\p_iC|\,,\quad
 |z^i\hA_i|\leq \con\frac{\tm}{\zm}|A_i|+\ltz |z^i\p_iC|
\end{gather}
\begin{equation}
 |\p_i\hA_\tau|\leq \frac{1}{\tm^2}\big(|A_0|+\zm|\p_i A_0|\big)+\frac{1}{\zm}|\p_\tau C|
 +\ltz |\p_i\p_\tau C|\,,
\end{equation}
\begin{equation}
 |z^i\p_i\hA_\tau|\leq \frac{\zm}{\tm^2}\big(|A_0|+|z^i\p_i A_0|\big)
 +|\p_\tau C|+\ltz |z^i\p_i\p_\tau C|\,,
\end{equation}
\begin{equation}
 |\p_i\hA_j|\leq \con\tm\Big(\frac{1}{\zm}|A_k|+|\p_i A_k|\Big)
 +\frac{1}{\zm}|\p_j C|+\ltz |\p_i\p_j C|\,,
\end{equation}
\begin{equation}
 |z^i\p_i\hA_j|\leq \con\tm\Big(\frac{1}{\zm^2}|A_k|+|z^i\p_i A_k|\Big)
 +|\p_jC|+\ltz |z^i\p_i\p_jC|\,,
\end{equation}
\begin{equation}
 |z^iz^j\p_i\hA_j|\leq \con\frac{\tm}{\zm}(|A_k|+|z^i\p_i A_k|)
 +|z^j\p_jC|+\ltz |z^iz^j\p_i\p_jC|\,,
\end{equation}
Next, we evaluate the $A$- and $C$-terms:
\begin{gather}
 \p_i A_b=(\p_ix^a)\n_aA_b\,,\qquad
 z^i\p_i A_b=x\cdot\n A_b-\frac{\tau}{\zm}\n_0 A_b\,,\\
 \p_i\p_j A_c=(\p_i x^a)(\p_j x^b)\n_a\n_b A_c+\frac{\tau}{\zm}d^{ij}\n_0 A_c
\end{gather}
(the latter term is not needed for the present proof, but its estimate given
below is used in the proof of Theorem \ref{totelmga} (iii) (b)),
\begin{gather}
 \p_\tau C=\frac{\tau}{\tm^2}x\cdot\n C+\frac{\zm}{\tm^2}\n_0 C\,,\\
 \p_iC= (\p_ix^a)\n_a C\,,\qquad
 z^i\p_iC= x\cdot\n C-\frac{\tau}{\zm}\n_0C\,,\\
 \p_i\p_\tau C=(\p_i x^a)\Big[\frac{\tau}{\tm^2}\n_a(x\cdot\n C)
 +\frac{\zm}{\tm^2}\n_a\n_0 C\Big]+\frac{z^i}{\tm^2\zm}\n_0 C\,,
\end{gather}
\begin{multline}
 z^i\p_i\p_\tau C=\frac{\tau}{\tm^2}\Big[(x\cdot\n)^2C-\n_0^2 C\Big]
 +\Big[\frac{\zm}{\tm^2}-\frac{\tau^2}{\tm^2\zm}\Big]\n_0(x\cdot\n C)\\
 -\frac{1}{\tm^2\zm}\n_0 C\,,
\end{multline}
\begin{gather}
 \p_i\p_jC=(\p_ix^a)(\p_jx^b)\n_a\n_b C+\frac{\tau}{\zm}d^{ij}\n_0C\,,\\
 z^i\p_i\p_jC=(\p_jx^a)\Big[\n_a(x\cdot\n-1)C-\frac{\tau}{\zm}\n_a\n_0C\Big]
 +\frac{\tau z^j}{\zm^3}\n_0C\,,
\end{gather}
\begin{multline}
 z^iz^j\p_i\p_jC=(x\cdot\n-1)x\cdot\n C-2\frac{\tau}{\zm}\n_0(x\cdot\n C)
 +\frac{\tau^2}{\zm^2}\n_0^2C\\
 \hspace{15em}+\frac{\tau}{\zm}\Big(2+\frac{|\z|^2}{\zm^2}\Big)\n_0C\,.
\end{multline}

Now we can estimate the $A$- and $C$-terms:
\begin{gather}
 |A_a|\leq\frac{\con}{\tm\zm}\,,\qquad
 |\p_i A_b|\leq\con\,\min\bigg\{\frac{1}{\zm}, \frac{1}{\tm}\bigg\}\,,\\[1ex]
 |z^i\p_iA_b|\leq\frac{\con}{\tm\zm}\,,\qquad
 |\p_i\p_j A_b|\leq\con\frac{\tm}{\zm}\,,
\end{gather}
\begin{gather}
 |\p_\tau C|\leq\con\Big(\frac{1}{\tm^{1+\vep}\zm^\vep}
 +\frac{1}{\tm^3}\Big)\,,\\[1ex]
 |\p_iC|\leq \frac{\con}{\zm}\,,\qquad
 |z^i\p_iC|\leq \frac{\con}{\zm^\vep}\,,\\[1ex]
 |\p_i\p_\tau C|\leq\frac{\con}{\tm^{1+\vep}\zm^{1-\vep}}\,,\qquad
 |z^i\p_i\p_\tau C|\leq \frac{\con}{\tm^{1+\vep}\zm^\vep}\,,
\end{gather}
\begin{gather}
 |\p_i\p_j C|\leq\con\,\min\bigg\{1, \frac{\tm}{\zm}\bigg\}\,,\\[1ex]
 |z^i\p_i\p_j C|\leq\frac{\con}{\zm}\,,\qquad
 |z^iz^j \p_i\p_j C|\leq\frac{\con}{\zm^\vep}\,.
\end{gather}
Substituting these estimates in the bounds on the potential expressions
obtained earlier in the proof, one arrives at the estimates given for them in
the thesis. The only less obvious estimation on the way is that of the term
\begin{equation}
 \log(\tm\zm)\min\bigg\{1, \frac{\tm}{\zm}\bigg\}
\end{equation}
appearing in $|\p_i\hA_j|$. For $\zm\leq\tm$ this is bounded by $2\log\tm$,
while for $\zm\geq\tm$ we write $\log(\tm\zm)=2\log\tm+\log(\zm/\tm)$ and
estimate the term by
\begin{equation}
 2\log\tm+\Big(\frac{\zm}{\tm}\Big)^{-1}\log\Big(\frac{\zm}{\tm}\Big)
 \leq\con(1+\log\tm)\,.
\end{equation}
Finally, we write
\begin{equation}
 |\p_\tau\hA_i|\leq|\p_i\hA_\tau|+|\hF_{i\tau}|\,,\qquad
 |z^i\p_\tau\hA_i|\leq|z^i\p_i\hA_\tau|+|z^i\hF_{i\tau}|
 \end{equation}
 \begin{equation}
 |z^j\p_i\hA_j|\leq |z^j\p_j\hA_i|+|z^j\hF_{ij}|
 \end{equation}
to obtain the last missing estimates.
\end{proof}

\frenchspacing

\end{document}